\pgfplotsset{compat=1.10}
\newcommand{\calG}{\mathcal{G}}
\newcommand{\calH}{\mathcal{H}}
\newcommand{\calF}{\mathcal{F}}
\newcommand{\calS}{\mathcal{S}}
\newcommand{\NN}{\mathbb{N}}
\newcommand{\ie}{i.e.,\xspace}
\newcommand{\eg}{e.g.,\xspace}
\newtheorem{theorem}{Theorem}
\newtheorem{lemma}{Lemma}
\newtheorem{claim}{Claim}
\newlength{\algofontsize}
\begin{document}

	\title{Impartial Selection with Additive Guarantees\\via Iterated Deletion}
	 
	\author{Javier Cembrano
	\thanks{Institut für Mathematik, Technische Universität Berlin, Germany}
	\and Felix Fischer			
	\thanks{School of Mathematical Sciences, Queen Mary University of London, UK}
	\and David Hannon		
	\thanks{School of Mathematical Sciences, Queen Mary University of London, UK}
	\and Max Klimm
	\thanks{Institut für Mathematik, Technische Universität Berlin, Germany}
	}
\date{\vspace{-1em}}
\maketitle

\begin{abstract}
Impartial selection is the selection of an individual from a group based on nominations by other members of the group, in such a way that individuals cannot influence their own chance of selection. For this problem, we give a deterministic mechanism with an additive performance guarantee of $O(n^{(1+\kappa)/2})$ in a setting with $n$ individuals where each individual casts $O(n^{\kappa})$ nominations, where $\kappa\in[0,1]$. For $\kappa=0$, \ie when each individual casts at most a constant number of nominations, this bound is $O(\sqrt{n})$. This matches the best-known guarantee for randomized mechanisms and a single nomination. For $\kappa=1$ the bound is $O(n)$. This is trivial, as even a mechanism that never selects provides an additive guarantee of $n-1$. We show, however, that it is also best possible: for every deterministic impartial mechanism there exists a situation in which some individual is nominated by every other individual and the mechanism either does not select or selects an individual not nominated by anyone.
\end{abstract}

\section{Introduction}
Many important decisions are made by groups of people who select one member of the group based on nominations from within the group. In fact, such a procedure is applied in a wide range of settings such as the papal conclaves~\cite{MacK20a}, the national captain's votes for the best FIFA Women's and Men's Players Awards, or the voting for spokespersons of committees or student representatives. A common feature of these situations is that there is at least a partial overlap between the set of voters and the set of candidates. This overlap may be a source of incentive issues as candidates who have a reasonable chance of winning may be motivated not to reveal their true opinion on who should win in order to increase their own chance of winning. 
Incentive issues of this kind were first studied in a systematic way by \citet{holzman2013impartial} and \citet{alon2011sum}, who formalized the problem in terms of a directed graph in which vertices correspond to voters and a directed edge from one voter to another indicates that the former nominates the latter. A (deterministic) selection mechanism then takes such a nomination graph as input and returns either one of its vertices or no vertex. In order to allow voters to express their true opinions about other voters without having to worry about their own chance of selection, an important property of a selection mechanism is its \emph{impartiality}: a mechanism is impartial if, for all nomination graphs, a change of the outgoing edges of some vertex $v$ does not change whether $v$ is selected or not. It is easy to see that a mechanism that selects a vertex with maximum indegree and breaks ties in some consistent way is not impartial: if ties are broken in favor of greater index, for example, a vertex with maximum indegree that currently nominates another vertex with maximum indegree but greater index has an incentive to instead nominate a different vertex.

\citeauthor{holzman2013impartial} have shown that impartial mechanisms are in fact much more limited even in a setting where each voter casts exactly one vote, \ie where each vertex has outdegree one: for every impartial mechanism that selects a vertex in every nomination graph, there exists a graph where the mechanism selects a vertex with indegree zero or a graph where it fails to select a vertex with indegree $n-1$, where $n$ is the number of voters. This shows in particular that the best multiplicative approximation guarantee for impartial mechanisms, \ie the worst case over all nomination graphs of the ratio between the maximum indegree and the indegree of the selected vertex is at least $n-1$. On the other hand, a multiplicative guarantee of $n-1$ is easy to obtain by always following the outgoing edge of a fixed vertex. 
As multiplicative guarantees do not allow for a meaningful distinction among deterministic impartial mechanisms, \citet{caragiannis2022impartial,caragiannis2023impartial} proposed to instead consider an additive guarantee, \ie the worst case over all nomination graphs of the difference between the maximum indegree and the indegree of the selected vertex. 
An adaptation provided by \citet{MacK20a} of a
mechanism due to \citeauthor{holzman2013impartial} achieves an additive guarantee of~$\lceil n/2 \rceil$.\footnote{The mechanism does not always select a vertex, but we will see that for our purposes there is little difference between mechanisms that do and do not always select.}
\citet{caragiannis2022impartial} further proposed a randomized mechanism with an additive guarantee of $O(\sqrt{n})$. It remained open, however, whether there exists a \emph{deterministic} mechanism with a \emph{sublinear} additive guarantee. 

The setting studied by \citeauthor{holzman2013impartial}, where each vertex has outdegree one, is commonly referred to as the plurality setting. The impossibility results of \citeauthor{holzman2013impartial} regarding multiplicative guarantees carry over to the more general approval setting, where outdegrees can be arbitrary, but here even less is known about possible additive guarantees. While \citet{caragiannis2022impartial} have shown that deterministic impartial mechanisms cannot provide a better additive guarantee than~$3$, no mechanism is known that improves on the trivial guarantee of $n-1$ achieved by selecting a fixed vertex. \citeauthor{caragiannis2022impartial} also gave a randomized mechanism for the approval setting with an additive guarantee of $\Theta(n^{2/3} \ln^{1/3} n)$.

\subsection{Our Contribution} 

We develop a new deterministic mechanism for impartial selection that is parameterized by a pair of thresholds on the indegrees of vertices in the graph. The mechanism seeks to select a vertex with large indegree, and to achieve impartiality it iteratively deletes outgoing edges from vertices in decreasing order of their indegrees, until only the outgoing edges of vertices with indegrees below the lower threshold remain. It then selects a vertex with maximum remaining indegree if that indegree is above the higher threshold, and otherwise does not select. Any ties are broken according to a fixed ordering of the vertices. We give a sufficient condition for choices of thresholds that guarantee impartiality. The iterative nature of the deletions requires a fairly intricate analysis but is key to achieving impartiality. The additive guarantee is then obtained for a good choice of thresholds, and the worst case is the one where the mechanism does not select.

For instances with $n$ vertices and maximum outdegree at most $O(n^{\kappa})$, where $\kappa\in[0,1]$, the mechanism provides an additive guarantee of $\smash{O(n^{\frac{1+\kappa}{2}})}$. This is the first sublinear bound for a deterministic mechanism and any $\kappa\in[0,1]$, and is sublinear for all $\kappa\in[0,1)$. 
For settings with constant maximum outdegree, which includes the setting of \citeauthor{holzman2013impartial} where all outdegrees are equal to one, our bound matches the best known bound of $O(\sqrt{n})$ for randomized mechanisms and outdegree one, due to \citet{caragiannis2022impartial}.

When the maximum outdegree is unbounded, the bound becomes $O(n)$. This is of course trivial, as even a mechanism that never selects a vertex provides an additive guarantee of $n-1$. For a setting without abstentions, \ie with minimum outdegree one, the guarantee can be improved slightly to $n-2$ by following the outgoing edge of a fixed vertex. We show that both of these bounds are best possible by giving matching lower bounds. This improves on the only lower bound known prior to our work, again due to \citeauthor{caragiannis2022impartial}, which is equal to~$3$ and applies to the setting with abstentions and mechanisms that select a vertex for every graph.

Just like the lower bounds regarding multiplicative guarantees for plurality, our lower bounds for approval are obtained through an axiomatic impossibility result. \citeauthor{holzman2013impartial} have shown that in the case of plurality, impartiality is incompatible with positive and negative unanimity. Here, positive unanimity requires that a vertex with the maximum possible indegree of $n-1$ must be selected, and negative unanimity that a vertex with indegree zero cannot be selected.\footnote{This result is formulated for mechanisms that always select a vertex, but not selecting a vertex clearly does not improve the situation.} We show that in the case of approval this impossibility can be strengthened even further: call a selection mechanism weakly unanimous if it selects a vertex with positive indegree whenever there exists a vertex with the maximum possible indegree of $n-1$; then weak unanimity and impartiality are incompatible, even if we are not always required to select.

This result is obtained by analyzing the behavior of impartial mechanisms on a restricted class of graphs with a high degree of symmetry among vertices. Like \citeauthor{holzman2013impartial}, we can assume that isomorphic vertices are selected with equal probabilities by a randomized relaxation of a mechanism. 
A suitable class of graphs for our purposes are those generated by partial orders on the set of ordered partitions. Our result is then obtained by combining counting results for ordered partitions and an argument similar to Farkas' Lemma.

\subsection{Related Work}

Impartiality as a formal property of social and economic mechanisms was first considered by \citet{de2008impartial} for the distribution of a divisible commodity among a set of individuals according to the individuals' subjective claims.
\citet{holzman2013impartial} and \citet{alon2011sum} studied impartial selection in two different settings, plurality and approval, and established strong impossibility results regarding the ability of deterministic mechanisms to approximate the maximum indegree in a multiplicative sense. \citet{alon2011sum} also proposed randomized mechanisms for the selection of one or more vertices. \citet{fischer2015optimal} then obtained a randomized mechanism with the best possible multiplicative guarantee of~$2$ for the selection of a single vertex in the approval setting, and \citet{bjelde2017impartial} gave improved deterministic and randomized mechanisms for the selection of more than one vertex.
\citet{niemeyer2023simple} studied strategyproof mechanisms for the selection of a single individual under more general preferences.

Starting from the observation that impossibility results for randomized mechanisms in particular are obtained from graphs with very small indegrees, \citet{bousquet2014near} developed a randomized mechanism that is optimal in the large-indegree limit, \ie that chooses a vertex with indegree arbitrarily close to the maximum indegree as the latter goes to infinity. \citet{caragiannis2022impartial,caragiannis2023impartial} used the same observation as motivation to study mechanisms with additive rather than multiplicative guarantees. They developed new mechanisms that achieve such guarantees, established a relatively small but nontrivial lower bound of~$3$ for deterministic mechanisms in the approval setting, and gave improved deterministic mechanisms for a setting with prior information.

The axiomatic study of \citeauthor{holzman2013impartial} has been refined and extended in a number of ways, for example with a focus on symmetric mechanisms~\citep{mackenzie2015symmetry} and to the selection of more than one vertex~\citep{TaOh14a}. \citet{MacK20a} provided a detailed axiomatic analysis of mechanisms used in the papal conclave
and showed that impartial mechanisms satisfying natural symmetry and anonymity axioms are exactly those that select a vertex whose indegree exceeds a supermajority threshold.
Various selection mechanisms have also been proposed that are tailored to applications like peer review and exploit the particular preference and information structures of those applications~\citep{kurokawa2015impartial,xu2019strategyproof,aziz2019strategyproof,mattei2021peernomination}.
Impartial mechanisms have finally been considered for other objectives, specifically for the maximization of progeny~\citep{babichenko2020incentive,zhang2021incentive} and for rank aggregation~\citep{kahng2018ranking}.
For an overview of mechanisms for peer selection and evaluation, the reader is referred to the survey of \citet{olckers2022manipulation}.

The proof of our impossibility result uses a class of graphs constructed from ordered partitions of the set of vertices. The class has been studied previously~\citep[\eg][]{insko2017ordered,diagana2017some}, and some of its known properties including its lattice structure and the number of graphs isomorphic to each graph within the class are relevant to us.

\section{Preliminaries}

We let~$\NN$ denote the positive integers. For $n\in \NN$, let~$[n]=\{1,\ldots,n\}$ denote the set of integers from~$1$ to~$n$ and~$[n]_0=\{0,1,\ldots,n\}$ denote the set of integers from~$0$ to~$n$. Let $\calG_n = \left\{(N, E): N = [n], E \subseteq (N \times N ) \setminus \bigcup_{v\in N}\{(v,v)\}\right\}$ be the set of directed graphs with $n$ vertices and no loops. Let $\calG = \bigcup_{n\in \NN} \calG_n$. For $G=(N,E)\in\calG$ and $v\in N$, let $N^+(v, G)=\{u\in N: (v,u) \in E\}$ be the out-neighborhood and $N^-(v, G)=\{u\in N: (u,v)\in E\}$ the in-neighborhood of~$v$ in~$G$. Let $\delta^+(v,G)=|N^+(v,G)|$ and $\delta^-(v,G)=|N^-(v,G)|$ denote the outdegree and indegree of~$v$ in~$G$, 
\[ 
    \delta^-_S(v,G)=|\{u\in S: (u,v)\in E\}|
\]
the indegree of~$v$ from a particular subset $S\!\subseteq\! N$ of the vertices, and $\Delta(G) \!=\! \max_{v\in N} \delta^-(v, G)$ the maximum indegree of any vertex in $G$. 
When the graph is clear from the context, we will sometimes drop $G$ from the notation and write $N^+(v)$, $N^-(v)$, $\delta^+(v)$, $\delta^-(v)$, $\delta^-_S(v)$, and~$\Delta$. 
For $n,k\in\NN$, let $\calG^+_n = \{(N, E)\in \calG_n:  \delta^+(v)\geq 1 \text{ for every } v\in N\}$ be the set of graphs in $\calG_n$ where all outdegrees are strictly positive, 
\[
    \calG_n(k)=\left\{(N, E)\in \calG_n:  \delta^+(v)\leq k \text{ for every } v\in N\right\}
\]
the set of graphs in $\calG_n$ where outdegrees are at most $k$, and $\calG^+_n(k) = \calG^+_n \cap \calG_n(k)$ for the set of graphs satisfying both conditions. Let $\calG^+ = \bigcup_{n\in \NN} \calG^+_n,\ \calG(k) = \bigcup_{n\in \NN} \calG_n(k)$, and $\calG^+(k) = \bigcup_{n\in \NN} \calG^+_n(k)$.

A (deterministic) \textit{selection mechanism} is then given by a family of functions $f \colon \calG_n \to 2^N$
that maps each graph to a subset of its vertices, where we require throughout that $|f(G)|\leq 1$ for all $G\in\calG$.
By slightly abusing notation, we will use $f$ to refer to both the mechanism and to individual functions from the family.
Mechanism $f$ is \textit{impartial} on $\calG'\subseteq \calG$ if on this set of graphs the outgoing edges of a vertex have no influence on its selection, \ie if for every pair of graphs $G = (N, E)$ and $G' = (N, E')$ in $\calG'$ and every $v\in N$, $f(G)\cap\{v\}=f(G')\cap\{v\}$ whenever $E\setminus(\{v\}\times N)=E'\setminus(\{v\}\times N)$.
Mechanism $f$ is \textit{$\alpha$-additive} on $\calG' \subseteq \calG$, for $\alpha \geq 0$, if for every graph in $\calG'$ the indegree of the choice of $f$ differs from the maximum indegree by at most $\alpha$, \ie if
\[
    \sup_{\substack{G\in \calG'}} \bigl\{ \Delta(G) - \delta^-(f(G), G) \bigr\} \leq \alpha,
\]
where for $S\subseteq N$ we let $\delta^-(S,G)=\sum_{v\in S}\delta^-(v,G)$ denote the sum of the indegrees of the vertices in $S$.

\section{Iterated Deletion of Nominations}

When outdegrees are at most one, the following simple mechanism is $\lfloor n/2\rfloor$-additive: if there is a vertex with indegree at least $\lfloor n/2\rfloor+1$, select it; otherwise, do not select. This mechanism is a slight modification of a mechanism \citet{holzman2013impartial} call majority with default and corresponds to the supermajority rule of \citet{MacK20a} with threshold $\lfloor n/2\rfloor+1$.
As there can be at most one vertex with degree $\lfloor n/2\rfloor+1$ or more and a vertex cannot influence its own indegree, the mechanism is clearly impartial.
We will borrow from this mechanism the idea of imposing a threshold on the minimum indegree a vertex needs to be selected, but will seek to lower the threshold in order to achieve a better additive guarantee and also to relax the constraint on the maximum outdegree. Of course, lower thresholds and larger outdegrees both mean that more and more vertices become eligible for selection, and we will no longer get impartiality for free.

As a first step, it is instructive to again consider the outdegree-one case but to use a lower threshold of $t = \lfloor n/3\rfloor+1$. This choice of threshold implies that there can be at most two vertices with indegrees equal to or higher than the threshold.
Analogously to the supermajority rule, the general idea is that only vertices with indegree at least $t$ are eligible for selection. 
In cases where there is only one such vertex impartiality would follow by a similar argument as for the supermajority rule, so the critical case is when there are two vertices~$u$ and~$v$ with indegree at least $t$.
If $u$ is selected, $\delta^-(u)=t$, and $(v,u)\in E$, then~$v$ could remove the edge $(v,u)$, cause $u$ to drop below the threshold, and leave~$v$ as the only vertex eligible for selection. To prevent impartiality from being compromised in this case, it makes sense to introduce another threshold $T = t+1 = \lfloor n/3\rfloor +2$ in the understanding that only vertices with indegree at least $T$ are eligible for selection, but the outgoing edges of all vertices with indegree at least $t = \lfloor n/3\rfloor +1$ will automatically be ignored. It turns out that in order to actually achieve impartiality, the outgoing edges of the vertices with indegree at least $t$ must be removed in decreasing order of indegrees, breaking ties according to some fixed ordering of the vertices. In the end we select the vertex with maximum indegree among those with indegree at least $T$, breaking ties as before, and call the resulting mechanism \emph{two contenders (TC)}.
It is not too difficult to convince ourselves that the mechanism is indeed impartial.\footnote{This will follow as a special case from a more general result in \autoref{lem:impartiality-additive-ub}, but we find it insightful to give the informal argument here.}
Indeed, if $\delta^-(u)\geq T$ and $\delta^-(v)\geq T$, neither $u$ nor $v$ can influence whether the respective other vertex ends up with an indegree of at least $t=T-1$; the outgoing edges of both $u$ and $v$ will be removed, and neither $u$ nor $v$ can influence which of the two vertices is selected. If $\delta^-(u)\geq T$ and $\delta^-(v)=t$, only $u$ is eligible for selection; the outgoing edge of $u$, which has the highest indegree, is removed first, so $u$ cannot influence whether $v$ will have indegree at least $t$ and thus whether a potential edge $(v,u)$ is removed or not. All other cases are either analogous or lead to no vertex being selected, and impartiality follows.

It is natural to ask whether the threshold can be lowered further while maintaining impartiality, and whether guarantees can be obtained in a similar fashion for graphs with larger outdegrees. The answer to the first question is not obvious, as the number of graphs that need to be considered to establish impartiality grows very quickly in the number of vertices eligible for selection. The obvious generalization of the mechanism with threshold $\lfloor n/2\rfloor+1$ to a setting with outdegrees at most~$k$, of selecting the unique vertex with indegree at least $\lfloor kn/2\rfloor+1$ if it exists and not selecting otherwise, is impartial and $\lfloor kn/2 \rfloor$-additive, but this guarantee is trivial when $k\geq 2$.

Our main result answers both questions in the affirmative. It applies to settings with outdegree at most $k=O(n^{\kappa})$ for $\kappa\in[0,1]$ and provides a nontrivial guarantee when $\kappa\in[0,1)$. When~$k$ is constant the guarantee is $O(\sqrt{n})$, which matches the best guarantee known for randomized mechanisms and outdegree one~\citep{caragiannis2022impartial}.
\begin{theorem}
\label{thm:additive-ub}
For every $n\in\NN$, $\kappa\in[0,1]$, and $k=O(n^{\kappa})$, there exists an impartial and $O(n^{\frac{1+\kappa}{2}})$-additive mechanism on $\calG_n(k)$. Specifically, for every $n\in\NN$, there exists an impartial and $\sqrt{7.25n}$-additive mechanism on $\calG_n(1)$.
\end{theorem}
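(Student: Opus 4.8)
The plan is to analyze the iterated-deletion mechanism sketched above, instantiated with a lower threshold $t$ and a higher threshold $T>t$ that I would fix only at the end. Concretely, maintain the set $D$ of already-deleted vertices (initially empty) and repeatedly compute the remaining indegrees $\delta^-_{N\setminus D}(u)$; whenever some not-yet-deleted vertex has remaining indegree at least $t$, add one of maximum remaining indegree to $D$, breaking ties by the fixed vertex order. The process halts once every remaining vertex has remaining indegree below $t$, and the mechanism then selects a vertex of maximum remaining indegree if this maximum is at least $T$, and selects nothing otherwise. Since remaining indegrees never increase, the values at which deletions occur form a non-increasing sequence, so deletions indeed happen in decreasing order of indegree. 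The analysis then splits into bounding the number of deletions, deriving the additive guarantee, and establishing impartiality, with the thresholds optimized last.

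First I would bound the number of deleted vertices. As every vertex has outdegree at most $k$, the total indegree is at most $kn$, and since each deleted vertex had remaining indegree at least $t$ when deleted, at most $m:=\lfloor kn/t\rfloor$ vertices are ever deleted. Deleting a vertex removes at most one incoming edge from any other vertex, so every vertex's remaining indegree drops by at most $|D|\le m$ over the whole run; in particular the final remaining indegree of any vertex is at least $\delta^-(\cdot)-m$. Two consequences drive the guarantee. By the stopping condition, every non-deleted vertex has final remaining indegree below $t<T$, so a selected vertex is always one of the deleted vertices. Moreover, writing $w$ for a vertex of maximum indegree $\Delta$, its final remaining indegree is at least $\Delta-m$. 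If the mechanism selects, the chosen vertex has remaining indegree at least that of $w$, hence indegree at least $\Delta-m$, so the additive loss is at most $m$; if it does not select, then every remaining indegree, including that of $w$, is below $T$, whence $\Delta<T+m$ and the additive loss is at most $T+m-1$. As anticipated, the non-selecting case is the worst.

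The main obstacle is \emph{impartiality}. I would first reduce to changing a single outgoing edge of a fixed vertex $v$ at a time, since any two admissible out-edge sets of $v$ are connected by single-edge deletions down to the empty set and single-edge insertions back up, all respecting the outdegree bound. The gap $T-t$ is what I expect to make the argument work: if $T\ge t+m$, then any vertex $u$ with $\delta^-(u)\ge T$ retains remaining indegree at least $\delta^-(u)-m\ge t$ throughout the run and is therefore deleted in \emph{every} instance, independently of anyone's outgoing edges. Thus every vertex that could conceivably be selected is robustly deleted, and the final remaining indegree of a vertex depends only on its incoming edges and on $D$. As the incoming edges of every vertex are unaffected by $v$'s outgoing edges, it would suffice to show that $D$, restricted to its effect on the remaining indegrees of the high (indegree at least $T$) vertices, is invariant under the single-edge change at $v$. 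Here lies the intricacy: inserting an edge $(v,w)$ can push a borderline vertex $w$, with indegree in $[t,T)$, into $D$, whose deletion removes its own outgoing edges and may cascade through further borderline vertices, perturbing remaining indegrees elsewhere. The heart of the proof is to show, by induction along the deletion order and exploiting the threshold gap, that this cascade never alters which high vertex attains the maximum remaining indegree nor whether that maximum reaches $T$; borderline vertices can themselves never be selected, since their remaining indegree stays below $T$. Making this cascade argument precise, and pinning down the exact sufficient condition on $(t,T)$ that it demands, is the technically demanding step.

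Finally I would optimize the thresholds. Taking $T=t+m$ meets the impartiality condition and yields an additive guarantee of at most $t+2m-1\le t+2kn/t$. Minimizing $t+2kn/t$ over $t$ gives $t=\sqrt{2kn}$ and a bound of $2\sqrt{2kn}=O(\sqrt{kn})=O(n^{(1+\kappa)/2})$ whenever $k=O(n^{\kappa})$, which is sublinear for $\kappa\in[0,1)$. For the sharp constant in the case $k=1$, choosing $t\approx\sqrt{2n}$ makes $t+2n/t-1$ at most $2\sqrt{2n}-1<\sqrt{8n}$, establishing the $\sqrt{8n}$-additive bound on $\calG_n(1)$.
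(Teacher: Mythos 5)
Your mechanism is exactly the paper's Twin Threshold Mechanism, your bound of $\lfloor kn/t\rfloor$ on the number of deletions, your case analysis for the additive guarantee, and your final optimization of $(t,T)$ (giving $2\sqrt{2kn}-1<\sqrt{8kn}$, hence $\sqrt{8n}$ for $k=1$) all match the paper and are correct as far as they go. But the proof has a genuine gap precisely where you flag ``the technically demanding step'': impartiality is never actually proved. Everything you establish rigorously about impartiality is that, when $T\geq t+\lfloor kn/t\rfloor$, every vertex of indegree at least $T$ is deleted in every run. That alone does not give impartiality, because (as you note) the selection is decided by the \emph{final} remaining indegrees, and a single-edge change at $\tilde v$ can cascade through borderline vertices and alter which of them are deleted, hence alter the final scores of high vertices --- including the final score of $\tilde v$ itself, since whether an in-neighbor of $\tilde v$ is deleted determines whether its edge into $\tilde v$ is counted. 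This cascade is the entire difficulty: in the paper it consumes Lemma~\ref{lem:indegree-changes}, Lemma~\ref{lem:indegree-changes-two-graphs}, and the chain-counting argument of Lemma~\ref{lem:impartiality-additive-ub}, where one shows that any influence path from $\tilde v$ back to a change in some vertex's position relative to $t$ forces the existence of witnesses whose indegrees sum to more than $k(n+2)$.

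Worse, the invariant you propose to prove by ``induction along the deletion order'' --- that the cascade never alters which high vertex attains the maximum remaining indegree nor whether that maximum reaches $T$ --- is strictly stronger than impartiality and is in fact false under your condition: if $\tilde v$'s edge pushes a borderline vertex $w$ into the deleted set, and $w$ has an edge to some other high vertex $u\neq\tilde v$, then $u$'s final score drops by one, which can flip the winner between $u$ and a third vertex. No threshold gap prevents this; impartiality only requires that the flip never involves $\tilde v$ itself, and that per-vertex statement is exactly what needs (and in the paper gets) a global edge-budget argument rather than a local induction. So your condition $T\geq t+\lfloor kn/t\rfloor$ is plausibly sufficient --- it essentially implies the paper's inequality $\tfrac12(T^2+3T+t-t^2)>k(n+2)$ in the relevant regime, since $(T-t)(T+t)\geq (kn/t)(2t)=2kn$ --- but to certify it you would have to carry out the same kind of counting the paper does, and the induction as you state it cannot succeed.
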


The result is achieved by a mechanism of a family we call \emph{twin threshold mechanisms} and describe formally in \autoref{alg:TTM}. Each of these mechanisms iteratively deletes the outgoing edges from vertices with indegree above a first threshold $t$ from the highest to the lowest indegree and, in the end, selects the vertex with maximum remaining indegree as long as that indegree is above a second, higher threshold $T$. The parameters~$t$ and~$T$ characterize a specific mechanism of this family and will be chosen in order to achieve impartiality and obtain the desired bounds.
Throughout the mechanism ties are broken as before, in favor of greater index.
\begin{algorithm}[t]
	\SetAlgoNoLine
	\KwIn{Digraph $G=(N,E)\in \calG_n$.}
	\KwOut{Set $S\subseteq N$ of selected vertices with $|S|\leq 1$.}
	Initialize $i \xleftarrow{} 0$ and $d\xleftarrow{} \Delta$\;
	$D^i\xleftarrow{} \emptyset$ \tcp*{vertices with deleted outgoing edges in iteration $i$ or before}
	$\hat{\delta}^i(v) \xleftarrow{} \delta^-(v) \quad \forall v\in N$ \tcp*{indegree of $v$ omitting edges deleted up to~$i$}
	\While{$d\geq t$}{
	    \If{$\{u\in N\setminus D^i:  \hat{\delta}^i(u)=d\} = \emptyset$}{
	        update $d\xleftarrow{} d-1$\;
                {\bf continue}
        }
	    $v\gets\max\{u\in N\setminus D^i:  \hat{\delta}^i(u)=d\}$\;
	    update $\hat{\delta}^{i+1}(u) \xleftarrow{} \hat{\delta}^i(u)-1 \quad\forall u\in N^+(v)$\tcp*{delete outgoing edges of $v$}
     update $\hat{\delta}^{i+1}(u) \xleftarrow{} \hat{\delta}^i(u)\quad\forall u\in N\setminus N^+(v)$\;
	    update $D^{i+1} \xleftarrow{}D^i \cup \{v\}$ and $i \xleftarrow{} i+1$\;
	}
	$I\xleftarrow{} i$\;
        $\hat{\Delta} \gets \max_{v \in N}{\hat{\delta}^I(v)}$\tcp*{maximum indegree after deletion}
	\eIf{$\hat{\Delta} \geq T$}{
	    $S \gets \{\max\{v\in N:  \hat{\delta}^I(v)=\hat{\Delta}\}\}$
	}{
            $S\gets \emptyset$
        }
	{\bf return} $S$
	\caption{Twin threshold mechanism with thresholds $T$ and $t$}
	\label{alg:TTM}
\end{algorithm}

For any choice of the maximum outdegree~$k$ and the threshold parameters~$T$ and~$t$, the twin threshold mechanism with thresholds $T$ and $t$ achieves its worst additive performance guarantee in cases where it does not select, and this guarantee can be obtained in a straightforward way by bounding the maximum indegree. The proof of impartiality, on the other hand, uses a relatively subtle argument to show that for certain values of $T$ and $t$, a vertex above the higher threshold~$T$ cannot influence whether another vertex ends up above or below the lower threshold~$t$ when edges have been deleted. 
Vertices above~$T$ then have no influence on the set of edges taken into account for selection, and since these are the only vertices that can potentially be selected impartiality follows.

For the outdegree-one case and specific values of $n$, we can optimize over $t$ and $T$ in order to obtain the best guarantee; bounds for an optimal twin threshold mechanism and selected values of $n$ are shown in \autoref{fig:plot_alpha_n}, alongside bounds for the supermajority rule with threshold $\lfloor n/2\rfloor+1$, corresponding to a twin threshold mechanism with $T=t=\lfloor n/2\rfloor +1$, the two contenders mechanism, corresponding to a twin threshold mechanism with $T=t+1=\lfloor n/3\rfloor +2$, and lower bounds for any impartial mechanism.
It is interesting to note at this point that the supermajority rule with threshold $\lfloor n/2\rfloor+1$ possesses natural symmetry properties~\citep{MacK20a}, and that the optimal twin threshold mechanism does not possess these properties due to its use of tie-breaking and its distinction among edges depending on their origin. The improved additive guarantee provided by the optimal twin threshold mechanism and shown in the figure thus comes at the expense of a certain amount of fairness among vertices.
\definecolor{color1}{HTML}{1b9e77}
\definecolor{color2}{HTML}{d95f02}
\definecolor{color3}{HTML}{7570b3}
\definecolor{color4}{HTML}{e7298a}

\pgfdeclareplotmark{halfcircle}{%
\pgfpathcircle{\pgfpoint{0pt}{0pt}}{\pgfplotmarksize}
\pgfusepathqstroke
\pgfpathmoveto{\pgfpoint{\pgfplotmarksize}{0pt}}
\pgfpatharc{0}{180}{\pgfplotmarksize}
\pgfpathclose
\pgfusepathqfill
}
\pgfdeclareplotmark{thirdcircle}{%
\pgfpathcircle{\pgfpoint{0pt}{0pt}}{\pgfplotmarksize}
\pgfusepathqstroke
\pgfpathmoveto{\pgfpointorigin}
\pgfpathlineto{\pgfpoint{\pgfplotmarksize}{0pt}}
\pgfpatharc{0}{120}{\pgfplotmarksize}
\pgfpathclose
\pgfusepathqfill
}

\begin{figure}[t]
    \centering
\begin{tikzpicture}[xscale=0.2,yscale=0.2]
\draw[ultra thick,-latex] (-1,0) -- (55,0) node[right] {$n$};
\draw[ultra thick,-latex] (0,-1) -- (0,27) node[above] {$\alpha$};
\foreach \n in {2,10,20,30,40,50}{
    \draw (\n,0.5) -- (\n,-0.5) node[below] {$\n$};
}
\foreach \a in {0,10,20,20}{
    \draw (0.5,\a) -- (-0.5,\a) node[left] {$\a$};
}
\fill[lightgray,path fading=east] (5,2) -- (9,2) -- (10,3) -- (50,3) -- (50,16) -- (49,15) -- (45,15) -- (44,14) -- (40,14) -- (39,13) -- (36,13) -- (35,12) -- (32,12) -- (31,11) -- (28,11) -- (27,10) -- (24,10) -- (23,9) -- (21,9) -- (20,8) -- (18,8) -- (17,7) -- (15,7) -- (14,6) -- (12,6) -- (11,5) -- (10,5) -- (9,4) -- (8,4) -- (7,3) -- (6,3) -- cycle;
\draw[very thick,color1,dash pattern=on 0.25pt off 0.75pt] plot[mark=*, mark color=none,mark options={fill=color1,draw opacity=0},mark size=8pt] coordinates { (2,1) (3,1) (4,2) (5,2) (6,3) (7,3) (8,4) (9,4) (10,5) (11,5) (12,6) (13,6) (14,7) (15,7) (16,8) (17,8) (18,9) (19,9) (20,10) (21,10) (22,11) (23,11) (24,12) (25,12) (26,13) (27,13) (28,14) (29,14) (30,15) (31,15) (32,16) (33,16) (34,17) (35,17) (36,18) (37,18) (38,19) (39,19) (40,20) (41,20) (42,21) (43,21) (44,22) (45,22) (46,23) (47,23) (48,24) (49,24) (50,25)} node[right, align=left] {\small SR};
\draw[very thick,color4,dash pattern=on 0.25pt off 0.75pt, dash phase=0.25pt] plot[mark=*, mark color=none, mark options={color=color4,draw opacity=0},mark size=8pt] coordinates {(2,1) (3,1) (4,2) (5,2) (6,2) (7,2) (8,2) (9,2) (10,3) (11,3) (12,3) (13,3) (14,3) (15,3) (16,3) (17,3) (18,3) (19,3) (20,3) (21,3) (22,3) (23,3) (24,3) (25,3) (26,3) (27,3) (28,3) (29,3) (30,3) (31,3) (32,3) (33,3) (34,3) (35,3) (36,3) (37,3) (38,3) (39,3) (40,3) (41,3) (42,3) (43,3) (44,3) (45,3) (46,3) (47,3) (48,3) (49,3) (50,3)} node[right] {\small LB};
\draw[very thick,color2,dash pattern=on 0.25pt off 0.75pt, dash phase=0.5pt] plot[mark=*, mark color=none,mark options= {color=color2,draw opacity=0},mark size=8pt] coordinates { (2,2) (3,3) (4,3) (5,3) (6,4) (7,4) (8,4) (9,5) (10,5) (11,5) (12,6) (13,6) (14,6) (15,7) (16,7) (17,7) (18,8) (19,8) (20,8) (21,9) (22,9) (23,9) (24,10) (25,10) (26,10) (27,11) (28,11) (29,11) (30,12) (31,12) (32,12) (33,13) (34,13) (35,13) (36,14) (37,14) (38,14) (39,15) (40,15) (41,15) (42,16) (43,16) (44,16) (45,17) (46,17) (47,17) (48,18) (49,18) (50,18)} node[right] {\small TC};
\draw[very thick,color3, dash pattern=on 0.25pt off 0.75pt, dash phase=0.75] plot[mark=*, mark color=none, mark options = {color=color3,draw opacity=0}, mark size=8pt] coordinates {
(2,1) (3,1) (4,2) (5,2) (6,3) (7,3) (8,4) (9,4) (10,5) (11,5) (12, 6) (13, 6) (14, 6) (15, 7) (16, 7) (17, 7) (18, 8) (19, 8) (20, 8) (21, 9) (22, 9) (23, 9) (24, 10) (25, 10) (26, 10) (27, 10) (28, 11) (29, 11) (30, 11) (31, 11) (32, 12) (33, 12) (34, 12) (35, 12) (36, 13) (37, 13) (38, 13) (39, 13) (40, 14) (41, 14) (42, 14) (43, 14) (44, 14) (45, 15) (46, 15) (47, 15) (48, 15) (49, 15) (50, 16)
} node[right] {\small TT$^*$};

\path plot[mark=thirdcircle, mark options = {color=color4,draw opacity=0},mark size=8pt] coordinates { (2,1) (3,1) (4,2) (5,2)};
\path plot[mark=thirdcircle, mark options = {color=color1,draw opacity=0,rotate=120},mark size=8pt] coordinates { (2,1) (3,1) (4,2) (5,2) (8,4) (10,5) (11,5) (13,6)};
\path plot[mark=halfcircle, mark options = {color=color1,draw opacity=0,rotate=90}, mark size =8pt] coordinates {(6,3) (7,3) (9,4) (12,6) (15,7)};
\path plot[mark=thirdcircle, mark options = {color=color2,draw opacity=0},mark size=8pt] coordinates { (8,4) (10,5) (11,5) (13,6)};
\path plot[mark=halfcircle, mark options = {color=color2,draw opacity=0,rotate=-90}, mark size =8pt] coordinates {(14,6) (16,7) (17,7) (20,8)};
\end{tikzpicture}

~\\[0.2cm]

\centering
\begin{tabular}{>{\centering\arraybackslash}m{1cm} >{\centering\arraybackslash}m{1cm} >{\centering\arraybackslash}m{1cm} >{\centering\arraybackslash}m{1cm} >{\centering\arraybackslash}m{1cm} >{\centering\arraybackslash}m{1cm} >{\centering\arraybackslash}m{1cm} >{\centering\arraybackslash}m{1cm} >{\centering\arraybackslash}m{1cm} >{\centering\arraybackslash}m{1cm} >{\centering\arraybackslash}m{1cm}}
\hlineB{3}
\textbf{$n$}         & 2 & 3 & 4 & 5 & 10 & 20 & 50 & 100 & 500 & 1000 \\ \hline
\textcolor{color1}{SR\phantom{$^*$}} & 1 & 1 & 2 & 2 & 5  & 10  & 25 & 50  & 250  & 500   \\
\textcolor{color2}{TC\phantom{$^*$}} & 2 & 3 & 3 & 3 & 5  & 8  & 18 & 35 & 168 & 335 \\
\textcolor{color3}{TT$^*$} & 1 & 1 & 2 & 2 & 5  & 8  & 16 & 23  & 56  & 81   \\[-1.3ex]
\tiny{$(t, T)$\phantom{$^*$}} & \tiny{$(2, 2)$} & \tiny{$(2, 2)$} & \tiny{$(3, 3)$} & \tiny{$(3, 3)$} & \tiny{$(4, 5)$}  & \tiny{$(7, 8)$}  & \tiny{$(7, 11)$} & \tiny{$(13, 18)$}  & \tiny{$(28, 41)$}  & \tiny{$(36, 56)$}   \\
\textcolor{color4}{LB\phantom{$^*$}} & 1 & 1 & 2 & 2 & 3 & 3 & 3 & 3 & 3 & 3 \\\hlineB{3}
\end{tabular}
\caption{Values of $\alpha_n$, for certain values of $n$, such that the supermajority rule with threshold $\lfloor n/2\rfloor+1$ (SR), the two contenders mechanism (TC), and an optimal twin threshold mechanism (TT$^*$) are $\alpha_n$-additive on $\calG_n(1)$, together with a lower bound (LB) on $\alpha_n$ for any impartial $\alpha_n$-additive mechanism. Each bound for TT$^*$ in the table is accompanied by values of the thresholds $t$ and $T$ that achieve the bound, which may not be the unique such values.}
\label{fig:plot_alpha_n}
\end{figure}

A twin threshold mechanism maintains a set $D^i$ of vertices whose outgoing edges have been deleted up to iteration $i$, and indegrees $\hat{\delta}^i(v)$ where all edges deleted up to iteration $i$ are not counted. After the iterations during which outgoing edges are deleted, we add an artificial final iteration during which no edges are deleted; this iteration counter is denoted by $I$. 
In order to prove \autoref{thm:additive-ub}, we will compare runs of the mechanism with fixed thresholds for different graphs, and denote by $\hat{\delta}^i(v,G)$, $D^i(G)$, and $I(G)$ the respective values of $\hat{\delta}^i(v)$, $D^i$, and $I$ when the mechanism is invoked with input graph $G$. We use $\chi$ to denote the indicator function for logical propositions, \ie $\chi(p)=1$ when proposition $p$ is true and $\chi(p)=0$ otherwise.
For a graph~$G$ and a vertex~$v$ in~$G$ whose outgoing edges are deleted by the mechanism, we use~$i^{*}(v,G)$ to denote the iteration in which this deletion takes place, such that $D^{i^{*}(v,G)+1}(G)\setminus D^{i^{*}(v,G)}(G) = \{v\}$. We use the convention that $i^{*}(v,G)=I(G)$ if $v\not\in D^{I(G)}(G)$ to extend the function to all vertices. For a graph~$G$ and vertex~$v$, we write $\smash{\delta^{*}(v,G)=\hat{\delta}^{i^{*}(v,G)}(v,G)}$ for the indegree of $v$, not taking into account any incoming edges deleted previously, at the last moment before the outgoing edges of~$v$ are deleted. When the graph $G$ is clear from the context, we again drop $G$ from the notation.
It is clear from the definition of the mechanism that for any graph $G=(N,E)$ and vertex $v\in D^{I(G)}(G)$,
\begin{equation}\label{eq:vertex-descent-characterization-iterations}
    \delta^-(v,G)-\delta^{*}(v,G) = |\{u\in N^-(v,G):  i^{*}(u,G)< i^{*}(v,G)\}|.
\end{equation}
When comparing tuples of the form $(\delta^{*}(v), v)$, we use $\preceq$ and $\prec$ to denote lexicographic order and strict lexicographic order. These comparisons are relevant to our analysis because they determine the order in which edges are deleted. Specifically, for any graph $G=(N,E)$ and vertices $u,v\in D^{I(G)}(G)$, 
\begin{equation}\label{eq:equivalence-iterations-indegrees}
i^{*}(u,G)<i^{*}(v,G) \quad\text{if and only if}\quad (\delta^{*}(u,G), u) \succ (\delta^{*}(v,G), v).
\end{equation}

\begin{figure}[t]
\centering
\begin{tikzpicture}[scale=0.88]
\Vertex[y=1.95, Math, shape=circle, color=black, , size=.05, label=u_0, fontscale=1, position=above, distance=-.09cm]{A}
\Vertex[x=1.4, y=1.95, Math, shape=circle, color=black, size=.05, label=v, fontscale=1, position=above, distance=-.08cm]{B}
\Vertex[x=2.1, y=1.95, Math, shape=circle, color=black, size=.05, label=u_1, fontscale=1, position=above, distance=-.09cm]{C}
\Vertex[x=.7, y=.65, Math, shape=circle, color=black, size=.05, label=u_2, fontscale=1, position=above left, distance=-.16cm]{D}
\Edge[Direct, color=black, lw=1pt](A)(B)
\Edge[Direct, color=black, lw=1pt](C)(B)
\Edge[Direct, color=black, lw=1pt](D)(B)
\draw[] (-.2,-.2) -- (2.3,-.2);
\draw[] (-.2,.45) -- (2.3,.45);
\draw[] (-.2,1.1) -- (2.3,1.1);
\draw[] (-.2,1.75) -- (2.3,1.75);
\draw[] (-.2,2.4) -- (2.3,2.4);

\Vertex[x=4, y=1.95, Math, shape=circle, color=black, , size=.05, label=u_0, fontscale=1, position=above, distance=-.09cm]{E}
\Vertex[x=5.4, y=1.3, Math, shape=circle, color=black, size=.05, label=v, fontscale=1, position=above, distance=-.08cm]{F}
\Vertex[x=6.1, y=1.95, Math, shape=circle, color=black, size=.05, label=u_1, fontscale=1, position=above, distance=-.09cm]{G}
\Vertex[x=4.7, y=.65, Math, shape=circle, color=black, size=.05, label=u_2, fontscale=1, position=above left, distance=-.16cm]{H}
\Edge[Direct, color=black, lw=1pt](G)(F)
\Edge[Direct, color=black, lw=1pt](H)(F)
\draw[] (3.8,-.2) -- (6.3,-.2);
\draw[] (3.8,.45) -- (6.3,.45);
\draw[] (3.8,1.1) -- (6.3,1.1);
\draw[] (3.8,1.75) -- (6.3,1.75);
\draw[] (3.8,2.4) -- (6.3,2.4);

\Vertex[x=8, y=1.95, Math, shape=circle, color=black, , size=.05, label=u_0, fontscale=1, position=above, distance=-.09cm]{I}
\Vertex[x=9.4, y=.65, Math, shape=circle, color=black, size=.05, label=v, fontscale=1, position=above, distance=-.08cm]{J}
\Vertex[x=10.1, y=1.95, Math, shape=circle, color=black, size=.05, label=u_1, fontscale=1, position=above, distance=-.09cm]{K}
\Vertex[x=8.7, y=.65, Math, shape=circle, color=black, size=.05, label=u_2, fontscale=1, position=above, distance=-.09cm]{L}
\Edge[Direct, color=black, lw=1pt](L)(J)
\draw[] (7.8,-.2) -- (10.3,-.2);
\draw[] (7.8,.45) -- (10.3,.45);
\draw[] (7.8,1.1) -- (10.3,1.1);
\draw[] (7.8,1.75) -- (10.3,1.75);
\draw[] (7.8,2.4) -- (10.3,2.4);
\Vertex[x=12, y=1.95, Math, shape=circle, color=white, , size=.05, label=u_0, fontscale=1, position=above, distance=-.09cm]{M}
\Vertex[x=13.4, Math, shape=circle, color=black, size=.05, label=v, fontscale=1, position=above, distance=-.08cm]{N}
\Vertex[x=14.1, y=1.95, Math, shape=circle, color=black, size=.05, label=u_1, fontscale=1, position=above, distance=-.09cm]{O}
\Vertex[x=12.7, y=.65, Math, shape=circle, color=black, size=.05, label=u_2, fontscale=1, position=above, distance=-.09cm]{P}
\draw[] (11.8,-.2) -- (14.3,-.2);
\draw[] (11.8,.45) -- (14.3,.45);
\draw[] (11.8,1.1) -- (14.3,1.1);
\draw[] (11.8,1.75) -- (14.3,1.75);
\draw[] (11.8,2.4) -- (14.3,2.4);
\Text[x=1.05, y=-.6, fontsize=\scriptsize]{$i=0$}
\Text[x=5.05, y=-.6, fontsize=\scriptsize]{$i=1$}
\Text[x=9.05, y=-.6, fontsize=\scriptsize]{$i=2$}
\Text[x=13.05, y=-.6, fontsize=\scriptsize]{$i=3$}
\end{tikzpicture}
\caption{Illustration of the edge deletion process in \autoref{alg:TTM}. 
Here and in other figures, we arrange the vertices above~$t$ vertically according to their indegree, with larger indegrees above, and horizontally according to their index, with greater indices to the left. Vertices below $t$, as well as edges with one or both endpoints below $t$, are omitted for simplicity.}
\label{fig:example-mechanism}
\end{figure}
\autoref{fig:example-mechanism} shows an example of the edge deletion process over four iterations of \autoref{alg:TTM}. Observe that for $j\in \{1,2\}$, $(\delta^-(v),v)\succ(\delta^-(u_j),u_j)$ but $(\delta^{*}(v),v)\prec(\delta^{*}(u_j),u_j)$. This is caused by a drop in the indegree of~$v$ over the course of the algorithm, and this drop occurs before the algorithm considers possible outgoing edges of~$v$ for deletion. For our analysis, it will be important to bound how much the indegree of a vertex~$v$ can drop before~$v$ loses its outgoing edges. The following lemma, which we prove in~\autoref{app:indegree-changes}, characterizes this quantity in terms of the indegrees of the in-neighbors of~$v$.
\begin{lemma}
\label{lem:indegree-changes}
Let $G=(N,E)\in \calG_n,\ v\in N$ and $(d,z)\in \NN^2$ such that $$(\delta^-(v),v)\succ(d, z)\succeq (\delta^{*}(v), v).$$ Let $r =\delta^-(v)-d+\chi(v>z)$.
Then there exist vertices $u_0,\ldots,u_{r-1}$ such that for each $j\in [r-1]_0$, $(u_{j},v)\in E$ and $(\delta^{*}(u_{j}),u_j)\succ(\delta^-(v)-j, v)$. Moreover, if $(d,z)=(\delta^{*}(v),v)$, then for every vertex $u\in N^-(v)\setminus \{u_0,\ldots,u_{r-1}\}$, $(\delta^{*}(u),u)\prec(\delta^{*}(v),v)$.
\end{lemma}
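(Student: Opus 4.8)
The plan is to identify the promised in-neighbors as exactly those whose outgoing edges are deleted before those of $v$, taken in order of deletion, and to read the required lexicographic inequalities off the greedy selection rule of \autoref{alg:TTM}. First I would invoke \eqref{eq:vertex-descent-characterization-iterations} to list the $m := \delta^-(v)-\delta^{\star}(v)$ in-neighbors of $v$ that are processed before $v$; ordering them by deletion time as $w_1,\dots,w_m\in N^-(v)$ gives $i^{\star}(w_1)<\dots<i^{\star}(w_m)<i^{\star}(v)$. A short case distinction on the hypothesis $(d,z)\geq(\delta^{\star}(v),v)$ shows $r\leq m$: if $d>\delta^{\star}(v)$ then $r\leq \delta^-(v)-(\delta^{\star}(v)+1)+1=m$, while if $d=\delta^{\star}(v)$ then necessarily $z\geq v$, so $\chi(v>z)=0$ and $r=m$. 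Hence it suffices to prove the inequalities for $w_1,\dots,w_r$ and set $u_j:=w_{j+1}$.

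The core step is to show, for each $j$, that $(\delta^{\star}(w_j),w_j)>(\delta^-(v)-(j-1),v)$. The key bookkeeping observation is that the in-neighbors of $v$ deleted strictly before iteration $i^{\star}(w_j)$ are precisely $w_1,\dots,w_{j-1}$, so $v$'s current indegree there is $\hat{\delta}^{i^{\star}(w_j)}(v)=\delta^-(v)-(j-1)$. I would then use the elementary invariant that at every iteration $i$ with running threshold $d$ each undeleted vertex has current indegree at most $d$ (immediate by induction: deletions only lower indegrees, and $d$ is decremented only when no undeleted vertex attains it). Since $v$ is still present at iteration $i^{\star}(w_j)$ and $w_j$ is the vertex the algorithm selects there at threshold $d=\delta^{\star}(w_j)$, the invariant yields $\delta^-(v)-(j-1)=\hat{\delta}^{i^{\star}(w_j)}(v)\leq \delta^{\star}(w_j)$. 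If this is strict, the first coordinates already order the pairs; if it is an equality, then $v$ lies in the set over which the algorithm takes a maximum index, so $w_j\geq v$, and since $w_j\in N^-(v)$ forces $w_j\neq v$ we get $w_j>v$. Either way $(\delta^{\star}(w_j),w_j)$ strictly lex-exceeds $(\delta^-(v)-(j-1),v)$, which is the claim after reindexing.

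For the ``moreover'' part, $(d,z)=(\delta^{\star}(v),v)$ forces $r=m$ by the computation above, so $\{u_0,\dots,u_{r-1}\}=\{w_1,\dots,w_m\}$ is exactly the set of in-neighbors of $v$ deleted before $v$. Any remaining in-neighbor $u$ therefore satisfies $i^{\star}(u)>i^{\star}(v)$. When such a $u$ is itself deleted, \eqref{eq:equivalence-iterations-indegrees} translates $i^{\star}(u)>i^{\star}(v)$ directly into $(\delta^{\star}(u),u)<(\delta^{\star}(v),v)$; when $u$ is never deleted, its final indegree $\delta^{\star}(u)=\hat{\delta}^{I}(u)$ lies below $t\leq \delta^{\star}(v)$ (here using that $v$ is one of the vertices whose outgoing edges are deleted, so that $\delta^{\star}(v)\geq t$), which again gives the strict inequality in the first coordinate.

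I expect the main obstacle to be the interplay between the greedy rule and the lexicographic order: pinning down that $v$'s current indegree at iteration $i^{\star}(w_j)$ is exactly $\delta^-(v)-(j-1)$, and squeezing strictness out of the tie-breaking rule via $w_j\neq v$. The invariant together with \eqref{eq:vertex-descent-characterization-iterations} does the heavy lifting, and the only delicate points are the boundary cases where $v$'s indegree has already dropped below the running threshold and where in-neighbors are never deleted.
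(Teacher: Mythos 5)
Your proof of the existence part is correct and is, at its core, the paper's own argument reorganized: the paper runs an induction on $j$ and derives a contradiction with $(d,z)\geq(\delta^{\star}(v),v)$ at each step, while you directly exhibit the $u_j$ as the in-neighbors of $v$ deleted before iteration $i^{\star}(v)$, in deletion order. Both rest on the same two facts, namely the invariant that every undeleted vertex has current indegree at most the running threshold, and the tie-breaking rule, which converts equality into $w_j>v$; your reduction $r\leq m$ and the bookkeeping $\hat{\delta}^{i^{\star}(w_j)}(v)=\delta^-(v)-(j-1)$ are both sound, and your version is if anything more transparent than the paper's proof by contradiction.

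The ``moreover'' part, however, has a genuine gap: the parenthetical step ``using that $v$ is one of the vertices whose outgoing edges are deleted, so that $\delta^{\star}(v)\geq t$''. Nothing in the hypotheses gives $v\in D^{I}$. The assumption $(\delta^-(v),v)>(\delta^{\star}(v),v)$ only says that the indegree of $v$ drops before iteration $i^{\star}(v)$, and this happens just as well to a vertex that is never selected, because its indegree keeps decaying below the running threshold until the loop ends; such a vertex finishes with $\delta^{\star}(v)=\hat{\delta}^{I}(v)\leq t-1$. In that regime your undeleted-$u$ case collapses, and in fact the literal claim fails. Concretely, take $t=3$ and let $v=1$ have in-neighbors $10,9,8,2,3$, where $10,9,8$ have indegrees $6,5,4$ fed by vertices of indegree zero, vertex $2$ has indegree $2$, and vertex $3$ has indegree $0$. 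The algorithm deletes $10$, then $9$, then $8$, dropping $v$ from indegree $5$ to $2$, and stops. Here $r=3$, the $u_j$ can only be chosen among $\{10,9,8\}$ (vertex $2$ fails even the weakest requirement $(\delta^{\star}(2),2)>(3,1)$), yet the remaining in-neighbor $2$ satisfies $(\delta^{\star}(2),2)=(2,2)>(2,1)=(\delta^{\star}(v),v)$, contradicting the ``moreover'' conclusion.

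So what your argument (and the statement) actually needs is the extra hypothesis $\delta^{\star}(v)\geq t$, equivalently $v\in D^{I}$, under which both of your cases go through. To be fair, the paper's own proof of this part has the same blind spot: it invokes \eqref{eq:equivalence-iterations-indegrees}, which is valid only for vertices in $D^{I}$, without checking membership, and the paper only ever uses the ``moreover'' statement when this hypothesis is in force (\eqref{eq:vertex-descent-characterization-indegrees} is explicitly restricted to $v\in D^{I}$, and \autoref{lem:indegree-changes-two-graphs} assumes $\delta^{\star}(v,G_1)\geq t$). Your instinct that $v\in D^{I}$ is exactly the needed ingredient was right; the error is presenting it as a consequence of the hypotheses rather than as an assumption that must be added.
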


If we take $(d,z)=(\delta^{*}(v),v)$, the lemma implies that for any vertex $v\in D^{I}$,
\begin{equation}\label{eq:vertex-descent-characterization-indegrees}
    \delta^-(v)-\delta^{*}(v) = |\{u\in N^-(v):  (\delta^{*}(u), u)\succ(\delta^{*}(v)+1,v)\}|.
\end{equation}
In other words, if the indegree of a vertex~$v$ drops from $\delta^-(v)$ to $\delta^{*}(v)=\delta^-(v)-r$ before the outgoing edges of~$v$ are deleted, there must be~$r$ vertices with edges to~$v$ that satisfy the following property: at least one of them, $u_0$, must have indegree high enough for its outgoing edges to be deleted before those of $v$, \ie $(\delta^{*}(u_0),u_0)\succ(\delta^-(v), v)$; another vertex, $u_1$, must have indegree high enough for its outgoing edges to be deleted before those of $v$ after its indegree is reduced by one, \ie $(\delta^{*}(u_1),u_1)\succ(\delta^-(v)-1, v)$; and so forth, up to $u_{r-1}$ with $(\delta^{*}(u_{r-1}),u_{r-1})\succ(\delta^-(v)-(r-1), v)=(\delta^{*}(v)+1, v)$.
The other vertices $u$ with edges to $v$ must satisfy $(\delta^{*}(u),u)\prec(\delta^{*}(v),v)$.
This is illustrated in \autoref{fig:lemma-1} for the case where the indegree of $v$ drops by $r=3$. When $(d,z)\succ(\delta^{*}(v),v)$ it is enough to carry out the analysis for the first $r=\delta^-(v)-d+\chi(v>z)$ vertices $u_0,\ldots,u_{r-1}$.

\begin{figure}[t]
\centering
\begin{tikzpicture}[scale=0.88]
\draw[] (-2,-.2) -- (2,-.2);
\draw[] (-2,.4) -- (2,.4);
\draw[] (-2,1) -- (2,1);
\draw[] (-2,1.6) -- (2,1.6);
\draw[] (-2,2.2) -- (2,2.2);
\fill[color=gray] (-2,2.2) rectangle (2,2.8);
\fill[color=gray] (-2,1.6) rectangle (-.1,2.2);
\fill[color=gray,pattern=dots] (.1,1.6) rectangle (2,2.2);
\fill[color=gray,pattern=dots] (-2,1) rectangle (-.1,1.6);
\fill[color=gray,pattern=vertical lines] (.1,1) rectangle (2,1.6);
\fill[color=gray,pattern=vertical lines] (-2,.4) rectangle (-.1,1);
\Text[x=0, y=2.5]{\Large{\bf A}}
\Text[x=-1, y=1.9]{\Large{\bf A}}
\Text[x=-1, y=1.3]{\Large{\bf B}}
\Text[x=1, y=1.9]{\Large{\bf B}}
\Text[x=1, y=1.3]{\Large{\bf C}}
\Text[x=-1, y=.7]{\Large{\bf C}}
\Vertex[y=1.8, Math, shape=circle, color=black, size=.05, label=v, fontscale=1, position=above, distance=-.08cm]{A}
\Vertex[Math, shape=circle, color=black, size=.05]{B}
\Edge[Direct, style=dashed, color=black, lw=1pt](A)(B)
\end{tikzpicture}
\caption{Illustration of \autoref{lem:indegree-changes} for $r=3$. If the indegree of $v$ drops as shown by the dashed arrow, there must be a vertex with an edge to~$v$ in~$A$, another vertex with an edge to~$v$ in $A\cup B$, and a third vertex with an edge to~$v$ in $A\cup B\cup C$. Note that this exact condition is satisfied for the example of \autoref{fig:example-mechanism}.}
\label{fig:lemma-1}
\end{figure}

To establish impartiality of the twin threshold mechanism with certain thresholds, we need to compare runs of the mechanism on graphs that differ in the outgoing edges of a single vertex. Intuitively, a change in the outgoing edges will make a difference to the outcome of the mechanism only if it affects the position of some other vertex relative to the lower threshold $t$ at the time that vertex is considered: If at that time the vertex is above the threshold its outgoing edges are deleted, otherwise the edges remain and are used in the decision of which vertex to select. We are thus interested in pairs of graphs $G_1$ and $G_2$ that differ only in the outgoing edges of a vertex $\tilde{v}$, and which contain a second vertex~$v\neq\tilde{v}$ such that $\delta^{*}(v,G_1)>\delta^{*}(v,G_2)$. 
Using \autoref{lem:indegree-changes}, we can derive conditions in terms of the indegrees of the in-neighbors of~$v$ under which this can happen. Moreover, we can show that one of two additional conditions must be satisfied: either (i)~$\tilde{v}$ has an edge to $v$ in $G_1$, or (ii)~there exists a vertex $u_0$ with an edge to~$v$ in both $G_1$ and $G_2$ such that $\delta^{*}(u_0,G_1)<\delta^{*}(u_0,G_2)$.
We obtain the following lemma.
\begin{lemma}
\label{lem:indegree-changes-two-graphs}
Let $G_1=(N,E_1),G_2=(N,E_2)\in \calG_n$, $v,\ \tilde{v} \in N$ with $v \not= \tilde{v}$ be such that $E_1\setminus(\{ \tilde{v} \}\times N)=E_2\setminus (\{ \tilde{v} \}\times N),\ \delta^{*}(v,G_1)>\delta^{*}(v,G_2)$, and $\delta^{*}(v,G_1)\geq t$.
Consider $(d,z)\in \NN^2$ such that $(\delta^{*}(v,G_1),v) \succ (d, z) \succeq (\delta^{*}(v,G_2), v)$,
and let $r=\delta^{*}(v,G_1)-d+\chi(v>z)$.
Then, there exist vertices $u_0,\ldots,u_{r-1}$ such that, for every $j\in [r-1]$,
\[
    (u_j,v)\in E_1\cap E_2,\quad (\delta^{*}(u_{j},G_2),u_j)\succ (\delta^{*}(v,G_1)-j,v),\quad (\delta^{*}(u_j,G_1),u_j)\prec (\delta^{*}(v,G_1),v),
\]
and one of the following holds:
\begin{enumerate}[label=(\roman*)]
    \item $u_0=\tilde{v}$ and $(u_0,v)\in E_1\setminus E_2$. Moreover, if $(\delta^{*}(v,G_1),v)\prec(\delta^-(\tilde{v},G_1),\tilde{v})$, taking $\tilde{r}= \delta^-(\tilde{v},G_1)-\delta^{*}(v,G_1)+\chi(\tilde{v}>v)$
    we have that there are vertices $\tilde{u}_0,\ldots,\tilde{u}_{\tilde{r}-1}$, none of them equal to $\tilde{v}$, such that $(\delta^{*}(\tilde{u}_j,G_1),\tilde{u}_{j})\succ (\delta^-(\tilde{v},G_1)-j,\tilde{v})$ for every $j\in[\tilde{r}-1]_0$; or \label{indegree-changes-two-graphs-alt1}
    \item $(\delta^{*}(u_0,G_2),u_0)\succ\dots \succ (\delta^{*}(u_{r-1},G_2),u_{r-1})$, $(u_0,v)\in E_1\cap E_2$, and $(\delta^{*}(u_0,G_2),u_0)\succ (\delta^{*}(v,G_1),v) \succ (\delta^{*}(u_0,G_1),u_0)$. \label{indegree-changes-two-graphs-alt2}
\end{enumerate}
\end{lemma}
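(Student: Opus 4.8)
Throughout write $s_1=\delta^{\star}(v,G_1)$, $s_2=\delta^{\star}(v,G_2)$ and $L=(s_1,v)$; by hypothesis $s_1>s_2$ and $s_1\geq t$, the latter guaranteeing $v\in D^{I(G_1)}$ so that the characterizations \eqref{eq:vertex-descent-characterization-iterations}--\eqref{eq:vertex-descent-characterization-indegrees} apply to $v$ in $G_1$. Since $E_1$ and $E_2$ agree outside $\{\tilde{v}\}\times N$, the in-neighborhoods $N^-(v,G_1)$ and $N^-(v,G_2)$ differ only in whether they contain $\tilde{v}$. The plan is to split on the status of the edge $(\tilde{v},v)$: the case $(\tilde{v},v)\in E_1\setminus E_2$ will give alternative~\ref{indegree-changes-two-graphs-alt1}, while the remaining cases (the edge lies in $E_1\cap E_2$, in $E_2\setminus E_1$, or in neither, whence $N^-(v,G_1)\subseteq N^-(v,G_2)$) will give alternative~\ref{indegree-changes-two-graphs-alt2}. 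In every case I would first pin down the drop-causers of $v$ exactly by applying \autoref{lem:indegree-changes} with the pair equal to $(\delta^{\star}(\cdot),\cdot)$, so that its ``moreover'' clause is available: applied to $v$ in $G_1$ with pair $(s_1,v)$ it shows that $A_1:=\{u\in N^-(v,G_1):(\delta^{\star}(u,G_1),u)>L\}$ has size $r_1:=\delta^-(v,G_1)-s_1$ and that every other in-neighbor of $v$ in $G_1$ lies strictly below $L$; applied to $v$ in $G_2$ with pair $(s_2,v)$ it produces, ordered by decreasing $\delta^{\star}(\cdot,G_2)$, the exact list of in-neighbors of $v$ in $G_2$ above $(s_2,v)$ with the standard level bounds.

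For alternative~\ref{indegree-changes-two-graphs-alt2} the argument is then pure cardinality counting. Counting how many entries of the $G_2$-list lie above a level $(s_1-j,v)$ gives at least $\delta^-(v,G_2)-s_1+j+1$ of them, where the crucial extra $+1$ reflects that $v$ descends strictly below $s_1$, all the way to $s_2$, in $G_2$. Let $C$ be the set of common in-neighbors of $v$ lying strictly below $L$ in $G_1$; subtracting $|A_1|=r_1$ (and, when $(\tilde{v},v)\in E_2\setminus E_1$, the single vertex $\tilde{v}$) shows, for each $j\in\{0,\dots,r-1\}$, that at least $j+1$ elements of $C$ lie above $(s_1-j,v)$ in $G_2$. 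Taking $u_0,\dots,u_{r-1}$ to be the top $r$ elements of $C$ ordered by decreasing $\delta^{\star}(\cdot,G_2)$, this Hall-type bound delivers the strict chain $(\delta^{\star}(u_0,G_2),u_0)>\dots>(\delta^{\star}(u_{r-1},G_2),u_{r-1})$ and the required bounds $(\delta^{\star}(u_j,G_2),u_j)>(s_1-j,v)$, while membership in $C$ gives $(u_j,v)\in E_1\cap E_2$ and $(\delta^{\star}(u_j,G_1),u_j)<L$. In particular $u_0$ lies above $L$ in $G_2$ but below $L$ in $G_1$, which is exactly the crossing asserted in \ref{indegree-changes-two-graphs-alt2}; conceptually, such a vertex must exist because the $G_2$-list reaches past $L$ more often than the $r_1$ drop-causers of $G_1$ do.

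For alternative~\ref{indegree-changes-two-graphs-alt1}, \ie $(\tilde{v},v)\in E_1\setminus E_2$, I would set $u_0=\tilde{v}$, so that $(u_0,v)\in E_1\setminus E_2$, and obtain $u_1,\dots,u_{r-1}$ exactly as above from $C$ (now automatically free of $\tilde{v}$, since $N^-(v,G_2)\subseteq N^-(v,G_1)$), the same counting going through with $\delta^-(v,G_2)=\delta^-(v,G_1)-1$. It then remains to establish the secondary clause: when $(s_1,v)<(\delta^-(\tilde{v},G_1),\tilde{v})$, the vertices $\tilde{u}_0,\dots,\tilde{u}_{\tilde{r}-1}$ with $(\delta^{\star}(\tilde{u}_j,G_1),\tilde{u}_j)>(\delta^-(\tilde{v},G_1)-j,\tilde{v})$ are precisely the output of \autoref{lem:indegree-changes} applied to $\tilde{v}$ in $G_1$ with the pair $(s_1,v)$, whose parameter $\delta^-(\tilde{v},G_1)-s_1+\chi(\tilde{v}>v)$ equals $\tilde{r}$, and they are automatically distinct from $\tilde{v}$ as in-neighbors of $\tilde{v}$.

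The main obstacle is this last step: applying \autoref{lem:indegree-changes} to $\tilde{v}$ in $G_1$ with pair $(s_1,v)$ is legitimate only if $(\delta^{\star}(\tilde{v},G_1),\tilde{v})\leq(s_1,v)$, \ie only if $\tilde{v}$'s outgoing edges are deleted no earlier than $v$'s in the run on $G_1$. This is not a statement about a single run but a comparison of the runs on $G_1$ and $G_2$, and I expect it to be the delicate heart of the proof. The intended route is by contradiction: if $\tilde{v}$ were processed strictly before $v$ in $G_1$, then the edge $(\tilde{v},v)$---the only incoming edge of $v$ on which $G_1$ and $G_2$ differ---would already be deleted by the time $v$ is processed, which should force $\delta^{\star}(v,G_2)\geq\delta^{\star}(v,G_1)$ and contradict $s_1>s_2$. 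This is straightforward when $(\tilde{v},v)$ is the \emph{only} edge on which the graphs differ, because the drop-causers of $v$ are determined entirely by the sub-run on the vertices processed before $v$, and that sub-run never consults $v$'s outgoing edges and is insensitive to an incoming edge of $v$ deleted within it, so the two runs agree down to $v$ and assign it the same $\delta^{\star}$. The difficulty is the general case, where $\tilde{v}$'s other outgoing edges also differ and can perturb, through the deletion order, the indegrees and hence the processing times of the very in-neighbors of $v$ one is trying to control. I anticipate the resolution to require tracking this propagation carefully---for instance by interpolating through the graph obtained from $G_2$ by reinserting the single edge $(\tilde{v},v)$, which differs from $G_2$ in that one edge and from $G_1$ only in $\tilde{v}$'s remaining outgoing edges---and this bookkeeping, rather than any single clever inequality, is what I expect to be the principal source of work.
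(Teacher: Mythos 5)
Your counting machinery for the cases that end in Condition~\ref{indegree-changes-two-graphs-alt2} is sound and is essentially the paper's argument: apply \autoref{lem:indegree-changes} to $v$ in $G_2$, then use \autoref{eq:vertex-descent-characterization-indegrees} in $G_1$ to discard the at most $\delta^-(v,G_1)-\delta^{\star}(v,G_1)$ listed vertices lying above $(\delta^{\star}(v,G_1),v)$ in $G_1$, and keep $r$ common in-neighbors with the stated level bounds and the crossing at $u_0$. The genuine gap is your treatment of the case $(\tilde v,v)\in E_1\setminus E_2$: you commit to proving Condition~\ref{indegree-changes-two-graphs-alt1} there, which forces you to show that $\tilde v$ is processed after $v$ in $G_1$, and your proposed route is a contradiction argument whose target claim---``if $\tilde v$ were processed before $v$ in $G_1$, then $\delta^{\star}(v,G_2)\geq\delta^{\star}(v,G_1)$''---is simply false once $\tilde v$'s other outgoing edges differ. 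Concretely, take $t=1$, vertices $\tilde v=1$, $y=2$, $w=3$, $v=4$, and indegree-zero dummy vertices giving $\tilde v$, $y$, $w$, $v$ respectively $8$, $7$, $6$, $5$ permanent in-edges; let both graphs contain $y\to w$ and $w\to v$, let $G_1$ additionally contain $\tilde v\to v$ and $\tilde v\to y$, and let $\tilde v$ have no outgoing edges in $G_2$. In $G_1$ the deletion order is $y$ (at $8$, winning the tie against $\tilde v$), then $\tilde v$ (at $8$), then $v$ (at $6$, winning the tie against $w$), so $\tilde v$ precedes $v$ and $\delta^{\star}(v,G_1)=6$; in $G_2$ the order is $\tilde v$ (at $8$), then $w$ (at $7$, winning the tie against $y$), then $y$, then $v$, so $\delta^{\star}(v,G_2)=5$. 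All hypotheses of the lemma hold, $(\tilde v,v)\in E_1\setminus E_2$, yet $\tilde v$ precedes $v$ in $G_1$. Worse for your plan, Condition~\ref{indegree-changes-two-graphs-alt1} genuinely fails here: its secondary clause is triggered since $(6,v)<(8,\tilde v)$ and would require two vertices other than $\tilde v$ above $(7,\tilde v)$ in $G_1$, but only $y$ qualifies; the lemma is saved only because Condition~\ref{indegree-changes-two-graphs-alt2} holds with $u_0=w$. So no amount of bookkeeping or interpolation through $G_2\cup\{(\tilde v,v)\}$ can close your argument---you would be proving a false statement.

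The paper's resolution is that the lemma's disjunction is not aligned with whether $(\tilde v,v)\in E_1\setminus E_2$. In that case it splits further on the relative order of $(\delta^{\star}(\tilde v,G_1),\tilde v)$ and $(\delta^{\star}(v,G_1),v)$. If $\tilde v$ comes after $v$ in $G_1$, your plan goes through verbatim and gives Condition~\ref{indegree-changes-two-graphs-alt1}, the application of \autoref{lem:indegree-changes} to $\tilde v$ being legitimate exactly as you intend. If $\tilde v$ comes before $v$ in $G_1$, then $\tilde v$ occupies one of the $\delta^-(v,G_1)-\delta^{\star}(v,G_1)$ drop-causer slots of $v$ in $G_1$ while not being an in-neighbor of $v$ in $G_2$; hence, among the vertices produced by \autoref{lem:indegree-changes} in $G_2$ (all of which are common in-neighbors), at most $\delta^-(v,G_1)-\delta^{\star}(v,G_1)-1$ lie above $(\delta^{\star}(v,G_1),v)$ in $G_1$. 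This is one fewer than in your other cases and leaves exactly enough room to select all $r$ vertices $u_0,\ldots,u_{r-1}$, including the crossing vertex $u_0$, from the common in-neighbors, yielding Condition~\ref{indegree-changes-two-graphs-alt2}. (A degenerate sub-case remains where $\delta^{\star}(v,G_1)=\delta^-(v,G_1)$; then no in-neighbor of $v$ precedes $v$ in $G_1$, so $\tilde v$ is automatically late and Condition~\ref{indegree-changes-two-graphs-alt1} follows.) In short, the fix is not to rule out ``$\tilde v$ before $v$'' but to route that sub-case to the other branch of the disjunction.
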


We prove \autoref{lem:indegree-changes-two-graphs} in \autoref{app:indegree-changes-two-graphs} but provide some intuition for its correctness here. Assume that the indegree of a vertex $v$ drops from $\delta^-(v)$ to $\delta^{*}(v)=\delta^-(v)-r$ before its outgoing edges are deleted. 
Then, by \autoref{lem:indegree-changes}, there must be $r$ in-neighbors of~$v$ whose indegrees are at least $\delta^{*}(v)+1$ when their outgoing edges are deleted. 
Thus, when $v$, $G_1$, and $G_2$ are as in the statement of \autoref{lem:indegree-changes-two-graphs}, and defining $r_1=\delta^-(v, G_1)-\delta^{*}(v,G_1)$ and $r_2=\delta^-(v, G_2)-\delta^{*}(v,G_2)$, then $r_1$ in-neighbors of $v$ must have indegree at least $\delta^{*}(v,G_1)+1$ in $G_1$ upon deletion of their outgoing edges, and $r_2$ in-neighbors of $v$ must have indegree at least $\delta^{*}(v,G_2)+1$ in $G_2$ upon deletion of their outgoing edges. 
There are then two possible reasons for the difference between $\delta^{*}(v,G_1)$ and $\delta^{*}(v,G_2)$. The first, which can only occur when $r_1=r_2$, is given in Condition~\ref{indegree-changes-two-graphs-alt1} of the lemma: $\tilde{v}$ has an edge to $v$ in $G_1$ but not in $G_2$, while all the other indegrees remain the same. However, for this difference to have an impact, the outgoing edges of $\tilde{v}$ must be deleted after those of $v$, and \autoref{lem:indegree-changes} implies the existence of in-neighbors of $\tilde{v}$ with indegrees as shown. 
The other reason, which can happen when $r_1=r_2$ and necessarily happens otherwise, is that some in-neighbor $u_0$ of $v$ in both $G_1$ and $G_2$ loses its outgoing edges after $v$ when the input to the mechanism is $G_1$, but before $v$ when the input is $G_2$. This must happen due to a change in the indegree of~$u_0$ at the time its outgoing edges are deleted, \ie $$(\delta^{*}(u_0,G_2),u_0)\succ (\delta^{*}(v,G_1),v) \succ (\delta^{*}(u_0,G_1),u_0).$$ 
This is captured in Condition~\ref{indegree-changes-two-graphs-alt2}. 
In both cases, \autoref{lem:indegree-changes} implies the existence of $\delta^{*}(v,G_1)-\delta^{*}(v,G_2)-1$ further in-neighbors of $v$ in both graphs, denoted as $u_1,\ldots, u_{r-1}$, which lose their outgoing edges after $v$ in $G_1$ but before $v$ in $G_2$. 
When $(d,z)\succ(\delta^{*}(v,G_2),v)$, it again suffices to carry out the analysis for a smaller subset of the vertices with edges to~$v$.

\autoref{lem:indegree-changes-two-graphs} implies that whenever $G_1$ and $G_2$ differ only in the outgoing edges of a single vertex $\tilde{v}$, and $v$ is a different vertex with $\delta^{*}(v,G_1)>\delta^{*}(v,G_2)$, then either (i)~$\tilde{v}$ has an edge to $v$ in $G_1$, or (ii)~there exists a vertex $u_0$ with  $\delta^{*}(u_0,G_1)<\delta^{*}(u_0,G_2)$. 
The fact that this relationship is the opposite of that for $v$ naturally suggests an iterative analysis, where the roles of $G_1$ and $G_2$ are exchanged in each iteration as long as Condition~\ref{indegree-changes-two-graphs-alt2} holds. 
Such an analysis leads to the following lemma, which establishes a sufficient condition for impartiality in terms of~$T$,~$t$, and~$k$. Impartiality for a particular choice of $T$ and $t$ that guarantees the bound of \autoref{thm:additive-ub} can then be obtained in a straightforward way.
\begin{lemma}
\label{lem:impartiality-additive-ub}
For every $n,k \in \NN$ with $k\in [n-1]$, the twin threshold mechanism with thresholds $T$ and $t$ such that
\[
    \frac{1}{2}(T^2+T+3t-t^2)-\min\{T-t,~ k\} > kn
\]
is impartial on $\calG_n(k)$.
\end{lemma}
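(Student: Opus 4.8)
\subsection*{Proof proposal}

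The plan is to argue by contradiction. Assume the stated inequality on $T$, $t$, and $k$ holds but the Twin Threshold Mechanism is not impartial on $\calG_n(k)$. Then there are graphs $G_1=(N,E_1)$ and $G_2=(N,E_2)$ in $\calG_n(k)$ differing only in the outgoing edges of a single vertex $\tilde{v}$, with $\tilde{v}\in f(G_1)$ and $\tilde{v}\notin f(G_2)$ (the reverse case follows by swapping the names of the graphs). Since $\tilde{v}\in f(G_1)$, its remaining indegree satisfies $\hat{\delta}^{I}(\tilde{v},G_1)\geq T$, and because remaining indegrees only decrease over the run this forces $\delta^{\star}(\tilde{v},G_1)\geq T$; in particular $\tilde{v}\in D^{I}(G_1)$.

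First I would locate a vertex at which the two runs diverge. As $\tilde{v}$ changes only its outgoing edges, its in-neighbourhood and indegree are identical in $G_1$ and $G_2$, so $\hat{\delta}^{I}(\tilde{v},G)=\delta^-(\tilde{v})-|N^-(\tilde{v})\cap D^{I}(G)|$ depends on the run only through which in-neighbours of $\tilde{v}$ are deleted, i.e.\ have $\delta^{\star}(\cdot)\geq t$. Comparing the two outcomes then yields a vertex $v\neq\tilde{v}$ whose deletion status flips: either some in-neighbour of $\tilde{v}$ crosses $t$ (changing $\hat{\delta}^{I}(\tilde{v},\cdot)$), or a competitor that overtakes $\tilde{v}$ in $G_2$ while $\tilde{v}$ stays above $T$ has an in-neighbour that crosses $t$. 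In both cases I obtain $v\neq\tilde{v}$ with $\delta^{\star}(v,A)>\delta^{\star}(v,B)$ and $\delta^{\star}(v,A)\geq t$, where $\{A,B\}=\{G_1,G_2\}$.

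Next I would iterate \autoref{lem:indegree-changes-two-graphs}. Applying it to such a $v$ with $A$ in the role of the first graph and $(d,z)=(\delta^{\star}(v,B),v)$ produces in-neighbours $u_1,\dots,u_{r-1}$ of $v$, common to both graphs, whose deletion indegrees form a descending staircase $\delta^{\star}(u_j,B)\geq\delta^{\star}(v,A)-j$, together with exactly one of two alternatives. Alternative~\ref{indegree-changes-two-graphs-alt2} supplies a further vertex $u_0$ with $\delta^{\star}(u_0,B)>\delta^{\star}(v,A)>\delta^{\star}(u_0,A)$, that is, a vertex whose gap points the opposite way and whose deletion indegree strictly exceeds that of $v$; re-applying the lemma to $u_0$ with the roles of $A$ and $B$ exchanged gives a chain $v=v_0,v_1,v_2,\dots$ along which the relevant deletion indegrees strictly increase. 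Strict monotonicity (and the bound $n-1$) forces the chain to be finite with distinct vertices, so it must terminate in Alternative~\ref{indegree-changes-two-graphs-alt1}: at the final vertex $\tilde{v}$ itself has an edge to it, and the last clause of the lemma supplies in-neighbours $\tilde{u}_0,\dots,\tilde{u}_{\tilde{r}-1}$ of $\tilde{v}$ whose deletion indegrees reach up to $\delta^-(\tilde{v},G_1)\geq T$.

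Finally I would convert this structure into a counting contradiction. Collecting the forced in-neighbours over all steps of the chain---the staircase at each chain vertex together with the top staircase below $\tilde{v}$, and the chain vertices themselves---should yield a family of pairwise distinct vertices whose deletion indegrees realise every level from $t$ up to roughly $T$, so that the sum of these indegrees is at least $\tfrac12(T^2+3T+t-t^2)$, the quadratic term $\tfrac{T(T+3)}{2}$ coming from the top staircase and $-\tfrac{t(t-1)}{2}$ from truncating at the lower threshold. On the other hand each deletion indegree is bounded by the ordinary indegree of its vertex in the graph in which it is measured; since $G_1$ and $G_2$ differ only in the at most $k$ outgoing edges of $\tilde{v}$, the total of these indegrees is at most $\max\{|E_1|,|E_2|\}$ plus a correction of at most $k$ for those edges, hence at most $k(n+2)$. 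This contradicts the hypothesis $\tfrac12(T^2+3T+t-t^2)>k(n+2)$. I expect the main obstacle to be precisely this last accounting: guaranteeing that the in-neighbours forced across different chain steps (and the possibility that some $u_0$ coincides with $\tilde{v}$) are genuinely distinct, that their deletion indegrees tile $[t,T]$ without gaps or double-counting, and that the lexicographic tie-breaking and the $\chi(\cdot)$ corrections are tracked carefully enough to recover the exact constant $\tfrac12(T^2+3T+t-t^2)$ rather than a weaker quadratic bound; the $+2$ in $k(n+2)$ is exactly the slack needed to absorb the discrepancy between $|E_1|$ and $|E_2|$ when indegrees measured in different graphs are summed.
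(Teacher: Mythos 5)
Your plan follows the paper's proof almost step for step: the same contradiction setup, the same localisation of a vertex $v\neq\tilde{v}$ whose deletion status flips between the two runs, the same alternating iteration of \autoref{lem:indegree-changes-two-graphs} producing a chain $v^0,v^1,\ldots$ that must terminate in Condition~\ref{indegree-changes-two-graphs-alt1} (your termination argument via strictly increasing deletion indegrees is, if anything, cleaner than the paper's ``because $n$ is finite''), and the same final count pitting $\tfrac{1}{2}(T^2+3T+t-t^2)$ against an edge bound of $k(n+2)$. You also decompose the constant correctly, and your choice $(d,z)=(\delta^{\star}(v,B),v)$ in the first application is admissible (since $\delta^{\star}(v,B)\leq t-1$ it even forces slightly more vertices than the paper's choice $(d,z)=(t-1,v^0)$).

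There is, however, a genuine gap, and you name it yourself: you never prove that the in-neighbours forced at different chain steps are pairwise distinct and that their deletion levels tile the range from $t$ up to $\delta^-(\tilde{v},G_1)$ without gaps; you only say this ``should'' hold and defer it as the main obstacle. This is not a routine verification --- it is the technical heart of the proof, since a double-counted vertex would invalidate the comparison of the indegree sum with the total number of edges. The paper closes exactly this gap with an induction over the chain index $\ell$: for even $\ell$, every $u^{\ell}_j$ strictly dominates, in the lexicographic order on pairs $(\delta^{\star}(\cdot,G'),\cdot)$, every $u^{\ell'}_{j'}$ with $\ell'<\ell$, and symmetrically for odd $\ell$ with $G$ in place of $G'$. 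The induction chains three inequalities: each vertex of the new staircase lies above the staircase's bottom level, which lies strictly above $(\delta^{\star}(v^{\ell-1},G'),v^{\ell-1})$; and that pair in turn dominates the $(\ell-1)$-staircase (by the third inequality in \autoref{lem:indegree-changes-two-graphs}) and the $(\ell-2)$-staircase (by the ordering chain in Condition~\ref{indegree-changes-two-graphs-alt2}). Distinctness falls out of the strict ordering, and gap-free tiling is guaranteed because each re-application of the lemma uses $(d,z)=(\delta^{\star}(v^{\ell-1},\cdot),v^{\ell-1})$, so that $r^{\ell}$ counts exactly the levels between consecutive chain vertices; the two cases $\delta^{\star}(v^m,G^*)\geq\delta^-(\tilde{v},G)$ and $\delta^{\star}(v^m,G^*)<\delta^-(\tilde{v},G)$ (only the second of which needs the $\tilde{u}_j$ from Condition~\ref{indegree-changes-two-graphs-alt1}) must also be treated separately. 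Until you supply an argument of this kind, the counting step remains an unproven claim rather than a proof.
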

The key insight for this lemma is that whenever the twin threshold mechanism with thresholds~$t$ and~$T$ is \emph{not} impartial on a pair of graphs~$G,G'$, there is a vertex~$v$ with indegree at least~$t$ in~$G$ and vertices~$w_j$ with indegree at least~$t+j$ in \emph{some} of the graphs~$G$ or~$G'$, for each value~\mbox{$j\in [T-t]_0$}. The result follows as the sum of their indegrees in~$G$ must be at least the left-hand side of the inequality in the statement and at most~$kn$. The proof of this key idea is based on the existence of vertices whose indegree change between~$G$ and~$G'$ at the point when their outgoing edges are deleted. We use \autoref{lem:indegree-changes-two-graphs} both to bound the indegrees of these vertices and their in-neighbors as described above and to show that all vertices involved are distinct. 

\begin{proof}[Proof of \autoref{lem:impartiality-additive-ub}]
Let~$n \in \NN$~$k\in [n-1]$, and~$t,T\in \NN$ be such that the twin threshold mechanism with thresholds~$t$ and~$T$ is \emph{not} impartial on~$\calG_n(k)$. 
Let \mbox{$G=(N,E)\in \calG_n(k)$} and \mbox{$G'=(N,E')\in \calG_n(k)$} be graphs where impartiality fails due to the existence of a vertex~$\tilde{v}\in N$ such that $E\setminus(\{\tilde{v}\}\times N)=E'\setminus(\{\tilde{v}\}\times N)$ and $\tilde{v}$ is selected only for one of these graphs. 
In particular, this implies that \mbox{$\delta^-(\tilde{v},G)=\delta^-(\tilde{v},G') \geq T$}. 
For~$\tilde{v}$ to be selected only for one of the graphs, there has to be a vertex~$v\in N\setminus \{\tilde{v}\}$ whose outgoing edges are deleted when the mechanism runs with input~$G$ but not when it runs with input~$G'$, or vice versa.
We suppose w.l.o.g.\ that the former holds, so denoting~$v^0=v$ we have that \mbox{$\delta^{*}(v^0,G) > t-1 \geq \delta^{*}(v^0,G')$}. 
Applying \autoref{lem:indegree-changes-two-graphs} with~$v=v^0$,~$G_1=G,\ G_2=G'$, and~$(d,z)=(t-1,v^0)$, we obtain that there are
\begin{equation}
    r^0=\delta^{*}(v^0,G)-(t-1)\label{eq:def-r0}    
\end{equation} 
vertices~$u^0_0,\ldots,u^0_{r^0-1}$ such that for every~$j \in [r^0-1]$,
\[
    (u^0_{j},v^0)\in E\cap E',\ \ \ (\delta^{*}(u^0_{j},G'),u^0_{j})\succ (\delta^{*}(v^0,G)-j,v^0),\ \ \ (\delta^{*}(u^0_{j},G),u^0_{j})\prec (\delta^{*}(v^0,G),v^0),
\]
and either Condition \ref{indegree-changes-two-graphs-alt1} or \ref{indegree-changes-two-graphs-alt2} of the lemma holds.
If Condition \ref{indegree-changes-two-graphs-alt1} holds, we denote $m=0$. Otherwise, we know that 
$(\delta^{*}(u^0_0,G'),u^0_0)\succ (\delta^{*}(v^0,G),v^0) \succ (\delta^{*}(u^0_0,G),u^0_0)$, thus we can define~$v^1=u^0_0$ and apply \autoref{lem:indegree-changes-two-graphs} with~$v=v^1$,~$G_1=G',\ G_2=G,$ and~$(d,z)=(\delta^{*}(v^0,G),v^0)$. 

The idea is to iterate this procedure until Condition \ref{indegree-changes-two-graphs-alt1} holds. For simplicity, for each~$\ell\in \NN \cup \{0\}$ we denote~$G^\ell=(N,E^\ell)$ with~$E^\ell=E$ if~$\ell$ is even and~$E^\ell=E'$ if~$\ell$ is odd. 
We consider~$v^0$,~$r^0$, and~$\{u^0_{\ell}: \ell\in [r^{\ell}-1]_0\}$ as described above. If Condition \ref{indegree-changes-two-graphs-alt1} of \autoref{lem:indegree-changes-two-graphs} holds, we define~$m=0$ and stop, omitting the procedure described in the next paragraph. Otherwise, we take~$\ell=1$.

We define~$v^{\ell}=u^{\ell-1}_0$. From the fact that Condition \ref{indegree-changes-two-graphs-alt2} of \autoref{lem:indegree-changes-two-graphs} holds for~$\ell-1$, we know that~$(\delta^{*}(v^{\ell},G^{\ell}),v^{\ell})\succ (\delta^{*}(v^{\ell-1},G^{\ell-1}),v^{\ell-1}) \succ (\delta^{*}(v^{\ell},G^{\ell-1}),v^{\ell})$. We apply \autoref{lem:indegree-changes-two-graphs} with~$v=v^{\ell}$,~$G_1=G^{\ell},\ G_2=G^{\ell-1}$, and~$(d,z)=(\delta^{*}(v^{\ell-1},G^{\ell-1}),v^{\ell-1})$, denoting 
\begin{equation}
    r^{\ell} = \delta^{*}(v^{\ell},G^{\ell}) - \delta^{*}(v^{\ell-1},G^{\ell-1}) + \chi(v^{\ell} > v^{\ell-1}).\label{eq:def-r-ell}
\end{equation}
If Condition \ref{indegree-changes-two-graphs-alt1} holds, we fix~$m=\ell$ and stop. Otherwise, we update~$\ell$ to~$\ell+1$ and repeat the procedure described in this paragraph.

We start by showing that~$m$ is finite, \ie that Condition \ref{indegree-changes-two-graphs-alt1} of \autoref{lem:indegree-changes-two-graphs} holds for some iteration. To see this, we claim that for every~$\ell$ such that Condition \ref{indegree-changes-two-graphs-alt2} holds and every~$\ell'\in [\ell-1]_0$, we have that 
\begin{equation}
    (\delta^*(u^\ell_j,G^{\ell+1}),u^\ell_j) \succ (\delta^*(u^{\ell'}_{j'},G^{\ell+1}),u^{\ell'}_{j'}) \qquad \text{for all } j\in [r^\ell-1]_0,\ j'\in [r^{\ell'}-1]_0.\label{eq:ineq-vertices-distinct-u-ell}
\end{equation}
If true, this implies that for every distinct~$\ell$ and~$\ell'$ for which Condition \ref{indegree-changes-two-graphs-alt2} holds,~$u^\ell_j \neq u^{\ell'}_{j'}$ for every~$j\in [r^\ell-1]_0$ and~$j'\in [r^{\ell'}-1]_0$. This fact also holds trivially when~$\ell=\ell'$ and~$j\neq j'$. Since~$n$ is finite, Condition \ref{indegree-changes-two-graphs-alt2} can thus hold for a finite number of iterations and~$m$ is also finite.\footnote{We prove a slightly more general version of \eqref{eq:ineq-vertices-distinct-u-ell} to show that even the vertices~$u^m_j$ with~$j\geq 1$ are distinct from all other vertices later in the proof (\autoref{claim:vertices-distinct}).}

We prove \eqref{eq:ineq-vertices-distinct-u-ell} by induction over~$\ell$, so let~$\ell$ be an iteration for which Condition \ref{indegree-changes-two-graphs-alt2} of \autoref{lem:indegree-changes-two-graphs} holds and note that, for every~$j\in [r^{\ell}-1]_0$,
\begin{align*}
    (\delta^{*}(u^{\ell}_{j},G^{\ell+1}),u^{\ell}_{j}) & \succ(\delta^{*}(v^{\ell},G^\ell)-(r^\ell-1),v^{\ell}) \\
    & = (\delta^{*}(v^{\ell-1},G^{\ell+1})+1-\chi(v^\ell>v^{\ell-1}),v^{\ell})\\
    & \succ(\delta^{*}(v^{\ell-1},G^{\ell+1}),v^{\ell-1}),
\end{align*}
where the first inequality follows from \autoref{lem:indegree-changes-two-graphs}, the equality from \eqref{eq:def-r-ell}, and the last inequality from a simple calculation.
By \autoref{lem:indegree-changes-two-graphs}, we additionally know that~$(\delta^{*}(v^{\ell-1},G^{\ell+1}),v^{\ell-1})\succ(\delta^{*}(u^{\ell-1}_{j'},G^{\ell+1}),u^{\ell-1}_{j'})$ for each~$j'\in [r^{\ell-1}-1]_0$.
We obtain
\begin{equation}
    (\delta^{*}(u^{\ell}_{j},G^{\ell+1}),u^{\ell}_{j}) \succ(\delta^{*}(u^{\ell-1}_{j'},G^{\ell+1}),u^{\ell-1}_{j'}) \qquad \text{for all } j'\in [r^{\ell-1}-1]_0.\label{eq:ineq-degree-ell-ell1-mfinite}
\end{equation}
Furthermore, if~$\ell\geq 2$ we also know from Condition \ref{indegree-changes-two-graphs-alt2} of \autoref{lem:indegree-changes-two-graphs} that
\[
(\delta^{*}(v^{\ell-1},G^{\ell+1}),v^{\ell-1}) = (\delta^{*}(u^{\ell-2}_{0},G^{\ell+1}),u^{\ell-2}_{0}) \succeq (\delta^{*}(u^{\ell-2}_{j'},G^{\ell+1}),u^{\ell-2}_{j'})
\]
for each~$j'\in [r^{\ell-2}-1]_0$.
Therefore,
\begin{equation}
    (\delta^{*}(u^{\ell}_{j},G^{\ell+1}),u^{\ell}_{j}) \succ(\delta^{*}(u^{\ell-2}_{j'},G^{\ell+1}),u^{\ell-2}_{j'}) \qquad \text{for all } j'\in [r^{\ell-2}-1]_0.\label{eq:ineq-degree-ell-ell2-mfinite}
\end{equation}
Inequalities \eqref{eq:ineq-degree-ell-ell1-mfinite} and \eqref{eq:ineq-degree-ell-ell2-mfinite} prove \eqref{eq:ineq-vertices-distinct-u-ell} for the base cases~$\ell\in \{1,2\}$, and give the necessary ingredient to prove it for larger values of~$\ell$: If we assume that \eqref{eq:ineq-vertices-distinct-u-ell} holds for some value of~$\ell$, \eqref{eq:ineq-degree-ell-ell1-mfinite} implies that it also holds for~$\ell+1$ and \eqref{eq:ineq-degree-ell-ell2-mfinite} implies that it also holds for~$\ell+2$.

In summary, the previous construction leads to the existence of vertices~$u^{\ell}_j$ for every \mbox{$\ell\in [m]_0$} and~$j\in [r^{\ell}-1]_0$ such that, for every~$\ell\in [m]_0$,
\begin{align}
    (\delta^{*}(u^{\ell}_{j},G^{\ell+1}),u^{\ell}_j) \succ (\delta^{*}(v^{\ell},G^{\ell})-j,v^{\ell}) & \ \ \text{for all } j\in \{\chi(\ell=m),\ldots,r^{\ell}-1\},\label{eq:inneighbors-lb}\\
    (\delta^{*}(u^{\ell}_{j},G^{\ell}),u^{\ell}_j) \prec (\delta^{*}(v^{\ell},G^{\ell}),v^{\ell}) & \ \ \text{for all } j\in \{\chi(\ell=m),\ldots,r^{\ell}-1\},\label{eq:inneighbors-ub}\\
    (\delta^{*}(u^{\ell}_0,G^{\ell+1}),u^{\ell}_0) \succ\dots \succ (\delta^{*}(u^{\ell}_{r^{\ell}-1},G^{\ell+1}),u^{\ell}_{r^{\ell}-1}) & \ \ \text{if } \ell\neq m.\label{eq:order-ell}
\end{align}
For~$\ell=m$, we know that~$(\tilde{v},v^m)\in E^m\setminus E^{m+1}$ and that, if~$(\delta^*(v^m,G^m),v^m)\prec (\delta^-(\tilde{v},G^m),\tilde{v})$, taking 
\begin{equation}
    \tilde{r}= \delta^-(\tilde{v},G^m)-\delta^{*}(v^m,G^m)+\chi(\tilde{v}>v^m),\label{eq:def-r-tilde}
\end{equation}
there are vertices~$\{\tilde{u}_j: j\in [\tilde{r}-1]_0\}$, none of them equal to~$\tilde{v}$, such that 
\begin{equation}
    (\delta^{*}(\tilde{u}_j,G^m),\tilde{u}_{j})\succ (\delta^-(\tilde{v},G^m)-j,\tilde{v}) \qquad \text{for all } j\in[\tilde{r}-1]_0.\label{eq:inneighbors-vtilde}
\end{equation}

The remainder of this proof aims to show the existence of a subset~$w_0,\ldots,w_{T-t}$ of the vertices defined above such that each vertex~$w_j$ has an indegree of at least~$t+j$ in some of the graphs~$G$ or~$G'$. Specifically, we want to show the following property:
\begin{equation}
    \begin{aligned} \text{There exists } \{w_j: j\in [T-t]_0\} \subseteq \{u^\ell_j: \ell\in [m]_0, j\in [r^\ell-1]_0\} \cup \{\tilde{u}_j: j\in [\tilde{r}-1]_0\} \\
    \text{s.t.\ }
    w_j\neq v^0 \text{ and }\max\{\delta^-(w_j,G_j): G_j\in \{G,G'\}\} \geq t+j \text{ for all } j\in [T-t]_0.
    \end{aligned}
    \label{eq:lem-impartiality-main}
\end{equation}
We now observe that the result we want to show immediately follows if this property holds, so let \mbox{$W=\{w_j: j\in [T-t]_0\}$} be as in Property \eqref{eq:lem-impartiality-main} and, for each~$j\in [T-t]_0$, $G_j\in \arg\max \{\delta^-(w_j,\bar{G}): \bar{G}\in \{G,G'\}\}$. We claim that
\begin{equation}
    \sum_{v\in N\setminus \{v^0\}} \delta^-(v,G) \geq \sum_{j=0}^{T-t} \delta^-(w_j,G_j)- \min\{T-t,~ k\}. \label{eq:indegrees-G-G'}
\end{equation}
To see this, note that since~$W$ is a subset of~$N\setminus \{v^0\}$ and~$G$ and~$G'$ only differ in the outgoing edges of~$\tilde{v}$, which are at most~$k$, it holds that 
\[
    \sum_{j=0}^{T-t} \delta^-(w_j,G_j) - \sum_{v\in N\setminus \{v^0\}} \delta^-(v,G) \leq \sum_{j=0}^{T-t} \delta^-(w_j,G_j) - \sum_{j=0}^{T-t} \delta^-(w_j,G) \leq k.
\]
Moreover, we know that~$\delta^-(\tilde{v},G)= \delta^-(\tilde{v},G') \geq T \geq 1$, and that~$\delta^-(w_j,G)\geq \delta^-(w_j,G_j)-1$ for every~$j\in [T-t]_0$. If~$\tilde{v}=w_{j'}$ for some~$j'\in [T-t]_0$, this yields
\[
    \sum_{v\in N\setminus \{v^0\}} \delta^-(v,G) \geq \delta^-(\tilde{v},G_{j'}) + \sum_{j\in [T-t]_0\setminus \{j'\}} (\delta^-(w_j,G_j)-1) = \sum_{j=0}^{T-t} \delta^-(w_j,G_j)-(T-t).
\]
If~$\tilde{v}\not\in W$, we have that
\[
    \sum_{v\in N\setminus \{v^0\}} \delta^-(v,G) \geq \delta^-(\tilde{v},G) + \sum_{j=0}^{T-t} (\delta^-(w_j,G_j)-1) \geq \sum_{j=0}^{T-t} \delta^-(w_j,G_j)-(T-t).
\]
This concludes the proof of \eqref{eq:indegrees-G-G'}. We obtain that
\begin{align*}
    \sum_{v\in N} \delta^-(v,G) & \geq t+ \sum_{j=0}^{T-t} \delta^-(w_j,G_j)-\min\{T-t,~k\} \\
    & \geq t+\sum_{j=t}^T j - \min\{T-t,k\}\\
    & = \frac{1}{2}(T^2+T+3t-t^2)-\min\{T-t,~ k\},
\end{align*}
where the first inequality follows from \eqref{eq:indegrees-G-G'} and the fact that~$\delta^-(v^0,G)\geq t$ and the second inequality from \eqref{eq:lem-impartiality-main}.
Since the sum of the indegrees in a graph in~$\calG_n(k)$ is at most the maximum number~$kn$ of edges, this implies that
\[
    \frac{1}{2}(T^2+T+3t-t^2)-\min\{T-t,~ k\} \leq kn
\]
and concludes the proof.

We now proceed with the proof of Property \eqref{eq:lem-impartiality-main} through a sequence of claims. 
The first one will allow us to derive a simple form of the bounds on the indegrees of the vertices $\{u^\ell_j: \ell\in [m]_0, j\in \{\chi(\ell=m),\ldots,r^\ell-1\} \}$ given by \eqref{eq:inneighbors-lb}.

\begin{claim}\label{claim:indegrees-vell}
    For every~$\ell\in [m]_0$, it holds that 
    \[
        \delta^*(v^\ell,G^\ell) = t-1+\sum_{i=0}^{\ell} r^i - \sum_{i=0}^{\ell-1} \chi(v^{i+1}>v^i).
    \]
\end{claim}

\begin{proof}
    We prove the claim by induction over~$\ell\in [m]_0$. 
    For the base case~$\ell=0$, we know from \eqref{eq:def-r0} that
    \[
        \delta^*(v^0,G^0) = t-1+r^0.
    \]
    Assume now that~$m\geq 1$ and that, for some~$\ell\in [m-1]_0$, it holds that 
    \begin{equation}
        \delta^*(v^\ell,G^\ell) = t-1+\sum_{i=0}^{\ell} r^i - \sum_{i=0}^{\ell-1} \chi(v^{i+1}>v^i).\label{eq:ineq-degrees-induction}
    \end{equation}
    This implies
    \[
        \delta^*(v^{\ell+1},G^{\ell+1}) = \delta^*(v^\ell,G^\ell) + r^{\ell+1} - \chi(v^{\ell+1}>v^\ell) = t-1+\sum_{i=0}^{\ell+1} r^i - \sum_{i=0}^{\ell} \chi(v^{i+1}>v^i),
    \]
    where the first equality follows from \eqref{eq:def-r-ell} and the second one from \eqref{eq:ineq-degrees-induction}.
\end{proof}

\autoref{claim:indegrees-vell}, along with \eqref{eq:inneighbors-lb}, implies that for every~$\ell\in [m]_0$ and every \mbox{$j\in \{\chi(\ell=m),\ldots, r^\ell-1\}$}, 
\[
    (\delta^{*}(u^{\ell}_{j},G^{\ell+1}),u^{\ell}_j) \succ \bigg(t-1+\sum_{i=0}^{\ell} r^i - \sum_{i=0}^{\ell-1} \chi(v^{i+1}>v^i) - j, v^\ell \bigg).
\]
Therefore, defining~$g \colon \NN\times \NN \to \NN$ as
\begin{equation}
    g(\ell,j) = t-1+\sum_{i=0}^{\ell} r^i - \sum_{i=0}^{\ell-1} \chi(v^{i+1}>v^i)-j,\label{eq:def-g-bound}
\end{equation}
we know that for every~$\ell\in [m]_0$ and~$j\in \{\chi(\ell=m),\ldots, r^\ell-1\}$, 
\begin{equation}
    (\delta^{*}(u^{\ell}_{j},G^{\ell+1}),u^{\ell}_j) \succ (g(\ell,j),v^\ell).\label{eq:ineq-lb-indegrees-uell}
\end{equation}
We now state some simple properties about~$g$. In particular, we show that the image via~$g$ of any~$(\ell,j)$ such that there exists a vertex~$u^\ell_j$ lies between~$t$ and~$\delta^*(v^m,G^m)-1$, and that for every~$\tau$ between these values there exists~$u^\ell_j$ such that~$g(\ell,j)=\tau$. We phrase these properties slightly more generally, replacing~$m$ in the previous statements with an arbitrary value~$\ell'\in [m]_0$.

\begin{claim}\label{claim:properties-g}
    For every~$\ell'\in [m]_0$, it holds that
    \begin{enumerate}[label=(\roman*)]
        \item $g(0,r^0-1) =t$ and~$g(\ell',\chi(\ell'=m)) = \delta^*(v^{\ell'},G^{\ell'})-\chi(\ell'=m)$;\label{claim:properties-g-i}
        \item for every~$\tau\in \{t,t+1,\ldots,\delta^*(v^{\ell'},G^{\ell'})-\chi(\ell'=m)$\}, there exist values~$\ell\in [\ell']_0$ and \mbox{$j\in \{\chi(\ell=m),\ldots,r^{\ell}-1\}$} such that~$g(\ell,j)=\tau$.\label{claim:properties-g-ii}
    \end{enumerate}
\end{claim}

\begin{proof}
    The equality~$g(0,r^0-1) =t$ follows immediately from \eqref{eq:def-g-bound}.
    The other equality of Part~\ref{claim:properties-g-i} also follows from \eqref{eq:def-g-bound}, since for every~$\ell'\in [m]_0$ we have
    \begin{align*}
        g(\ell',\chi(\ell'=m)) & = t-1+ \sum_{i=1}^{\ell'} (\delta^{*}(v^{i},G^{i}) - \delta^{*}(v^{i-1},G^{i-1}) + \chi(v^{i} > v^{i-1})) + {} \\
        & \phantom{{} = {}} \delta^{*}(v^0,G)-(t-1) - \sum_{i=0}^{\ell'-1} \chi(v^{i+1}>v^i)-\chi(\ell'=m)\\
        & = \delta^*(v^{\ell'},G^{\ell'})-\chi(\ell'=m),
    \end{align*}
    where we have used \eqref{eq:def-r0} and \eqref{eq:def-r-ell}. This concludes the proof of Part~\ref{claim:properties-g-i}.

    To prove Part~\ref{claim:properties-g-ii}, we actually show the following two properties:
    \begin{align}
        g(\ell,j) & = g(\ell,j+1)+1 \qquad\text{for all } \ell\in [m]_0,\ j\in [\chi(\ell=m),\ldots,r^\ell-2\},\label{eq:g-growth-same}\\
         g(\ell,r^\ell-1) & \in \{g(\ell-1,0), g(\ell-1,0)+1\} \qquad \text{for all } \ell\in [m-\chi(r^m=1)].\label{eq:g-growth-diff} 
    \end{align}
    These directly imply Part~\ref{claim:properties-g-ii}, as they state that~$g$ grows one unit at a time when keeping~$\ell$ constant and decreasing~$j$ by one, and that it outputs the same value or increases in one unit when switching from~$(\ell-1,0)$ to~$(\ell,r^\ell-1)$. Property \eqref{eq:g-growth-same} is again immediate from \eqref{eq:def-g-bound}.
    To see \eqref{eq:g-growth-diff} we observe that, for~$\ell\in [m-\chi(r^m=1)]$,
    \begin{align*}
        g(\ell,r^\ell-1) & = t+\sum_{i=0}^{\ell-1} r^i - \sum_{i=0}^{\ell-1} \chi(v^{i+1}>v^i),\quad 
        g(\ell-1,0) = t-1+\sum_{i=0}^{\ell-1} r^i - \sum_{i=0}^{\ell-2} \chi(v^{i+1}>v^i).
    \end{align*}
    Thus,~$g(\ell,r^\ell-1) = g(\ell-1,0)$ if~$v^\ell>v^{\ell-1}$ and~$g(\ell,r^\ell-1) = g(\ell-1,0)+1$ otherwise.
\end{proof}

The previous claim implies that the bounds given by~$g(\ell,j)$ cover all values between~$t$ and~$\delta^*(v^m,G^m)-1$. However, to prove Property \eqref{eq:lem-impartiality-main} we need, in addition, that the vertices with these bounds on their indegrees are distinct. The next claim states that, indeed, all vertices involved in the analysis, besides~$\tilde{v}=u^m_0$, are different from each other. This generalizes the fact that vertices~$u^\ell_j,\ u^{\ell'}_{j'}$ for~$\ell,\ell'\in [m-1]_0$ are different from each other, implied by \eqref{eq:ineq-vertices-distinct-u-ell}.
\begin{claim}\label{claim:vertices-distinct}
    For every~$\ell,\ell'\in [m]$,~$j\in \{\chi(\ell=m),\ldots,r^\ell-1\}$, and~$j'\in \{\chi(\ell'=m),\ldots,r^{\ell'}-1\}$ with~$(\ell,j)\neq (\ell',j')$, it holds that
    \[
        v^0 \neq u^\ell_j \neq u^{\ell'}_{j'} \neq \tilde{u}_j \neq v^0.
    \]
\end{claim}

\begin{proof}
We will show that, for every~$\ell\in [m]_0$ and~$\ell'\in [\ell-1]_0$, it holds that
    \begin{enumerate}[label=(\roman*)]
        \item $(\delta^*(u^\ell_j,G^{\ell+1}),u^\ell_j) \succ (\delta^*(u^{\ell'}_{j'},G^{\ell+1}),u^{\ell'}_{j'})$ for all~$j\in \{\chi(\ell=m),\ldots,r^\ell-1\}$,~$j'\in [r^{\ell'}-1]_0$;\label{claim:vertices-distinct-i}
        \item $(\delta^*(u^\ell_j,G^{\ell+1}),u^\ell_j) \succ (\delta^*(v^0,G^{\ell+1}),v^0)$ for all~$j\in \{\chi(\ell=m),\ldots,r^\ell-1\}$;\label{claim:vertices-distinct-ii}
        \item $(\delta^*(\tilde{u}_j,G^m),\tilde{u}_j) \succ (\delta^*(u^{\ell}_{j'},G^{m}),u^{\ell}_{j'})$ for all~$j\in [\tilde{r}-1]_0, j'\in \{\chi(\ell=m),\ldots,r^\ell-1\}$.\label{claim:vertices-distinct-iii}
    \end{enumerate}
These three inequalities, along with the fact that~$u^\ell_j \neq u^\ell_{j'}$ for every~$\ell \in [m]_0$ and every \mbox{$j,j'\in [r^\ell-1]_0$} with~$j\neq j'$, directly imply that all these vertices are distinct, as claimed in the statement.

The proof of Part~\ref{claim:vertices-distinct-i} is completely analogous to that of \eqref{eq:ineq-vertices-distinct-u-ell}, but now considering the case~$\ell=m$. We proceed by induction over~$\ell$, so let~$\ell\in[m]$ be an integer and note that, for every~$j\in \{\chi(\ell=m),\ldots,r^{\ell}-1\}$,
\begin{align*}
    (\delta^{*}(u^{\ell}_{j},G^{\ell+1}),u^{\ell}_{j}) & \succ(\delta^{*}(v^{\ell},G^\ell)-(r^\ell-1),v^{\ell}) \\
    & = (\delta^{*}(v^{\ell-1},G^{\ell+1})+1-\chi(v^\ell>v^{\ell-1}),v^{\ell})\\
    & \succ(\delta^{*}(v^{\ell-1},G^{\ell+1}),v^{\ell-1}),
\end{align*}
where the first inequality follows from \eqref{eq:inneighbors-lb}, the equality from \eqref{eq:def-r-ell}, and the last inequality from a simple calculation.
By \eqref{eq:inneighbors-ub}, we also know that~$(\delta^{*}(v^{\ell-1},G^{\ell+1}),v^{\ell-1})\succ(\delta^{*}(u^{\ell-1}_{j'},G^{\ell+1}),u^{\ell-1}_{j'})$ for each~$j'\in [r^{\ell-1}-1]_0$.
We obtain
\begin{equation}
    (\delta^{*}(u^{\ell}_{j},G^{\ell+1}),u^{\ell}_{j}) \succ(\delta^{*}(u^{\ell-1}_{j'},G^{\ell+1}),u^{\ell-1}_{j'}) \qquad \text{for every } j'\in [r^{\ell-1}-1]_0.\label{eq:ineq-degree-ell-ell1}
\end{equation}
Furthermore, if~$\ell\geq 2$ we also know from \eqref{eq:order-ell} that
\[
(\delta^{*}(v^{\ell-1},G^{\ell+1}),v^{\ell-1}) = (\delta^{*}(u^{\ell-2}_{0},G^{\ell+1}),u^{\ell-2}_{0}) \succeq (\delta^{*}(u^{\ell-2}_{j'},G^{\ell+1}),u^{\ell-2}_{j'})
\]
for each~$j'\in [r^{\ell-2}-1]_0$.
Therefore,
\begin{equation}
    (\delta^{*}(u^{\ell}_{j},G^{\ell+1}),u^{\ell}_{j}) \succ(\delta^{*}(u^{\ell-2}_{j'},G^{\ell+1}),u^{\ell-2}_{j'}) \qquad \text{for every } j'\in [r^{\ell-2}-1]_0.\label{eq:ineq-degree-ell-ell2}
\end{equation}
Inequalities \eqref{eq:ineq-degree-ell-ell1} and \eqref{eq:ineq-degree-ell-ell2} prove Part~\ref{claim:vertices-distinct-i} for the base cases~$\ell\in \{1,2\}$, and give the necessary ingredient to prove it for larger values of~$\ell$: If we assume that Part~\ref{claim:vertices-distinct-i} holds for some value of~$\ell\in [m]$, \eqref{eq:ineq-degree-ell-ell1} implies that it also holds for~$\ell+1$ and \eqref{eq:ineq-degree-ell-ell2} implies that it also holds for~$\ell+2$.

To prove Part~\ref{claim:vertices-distinct-ii}, we let~$\ell\in [m]_0$ and~$j\in \{\chi(\ell=m),\ldots,r^\ell-1\}$ be integers, and we distinguish whether~$\ell$ is even or odd. If~$\ell$ is even, we have that
\[
    \delta^*(u^\ell_j,G^{\ell+1}) \geq t > \delta^*(v^0,G^{\ell+1}),
\]
where the first inequality follows from \eqref{eq:ineq-lb-indegrees-uell}, \eqref{eq:g-growth-same}, and \eqref{eq:g-growth-diff} along with Part~\ref{claim:properties-g-i} of \autoref{claim:properties-g}, and the second one from the initial assumption on the indegree of~$v^0$. This immediately implies that~$(\delta^*(u^\ell_j,G^{\ell+1}),u^\ell_j) \succ (\delta^*(v^0,G^{\ell+1}),v^0)$, as claimed. If~$\ell=1$, we have that
\begin{align}
    (\delta^*(u^1_j,G^{2}),u^1_j)
    & \succeq (\delta^*(v^1,G^1)-(r^1-1),v^1)\nonumber \\
    & = (\delta^*(v^0,G^0)+1-\chi(v^1>v^0),v^1)\nonumber \\
    & \succ (\delta^*(v^0,G^{0}),v^0),\label{eq:ineq-lb-u1}
\end{align}
where the first inequality comes from \eqref{eq:inneighbors-lb}, the equality from the definition of~$r^1$, and the last inequality from a simple calculation. Finally, if~$\ell$ is odd and~$\ell\geq 1$, we have that
\[
    (\delta^*(u^\ell_j,G^{\ell+1}),u^\ell_j) \succ (\delta^*(u^1_0,G^{\ell+1}),u^1_0) \succ (\delta^*(v^0,G^{\ell+1}),v^0),
\]
where the first inequality follows from Part~\ref{claim:vertices-distinct-i} and the second one from \eqref{eq:ineq-lb-u1}.

In order to prove Part~\ref{claim:vertices-distinct-iii}, we observe that, for every~$j\in[\tilde{r}-1]_0$,
\begin{align}
    (\delta^{*}(\tilde{u}_j,G^m), \tilde{u}_j) & \succ (\delta^-(\tilde{v},G^m)-(\tilde{r}-1),\tilde{v}) \nonumber \\
    & = (\delta^{*}(v^m,G^m)+1-\chi(\tilde{v}>v^m), \tilde{v}) \nonumber \\
    & \succ (\delta^{*}(v^m,G^m), v^m),\label{eq:ineq-utilde}
\end{align}
where the first inequality follows from \eqref{eq:inneighbors-vtilde}, the equality from \eqref{eq:def-r-tilde}, and the last inequality from a simple calculation. This yields
\[
    (\delta^{*}(\tilde{u}_j,G^m), \tilde{u}_j) \succ (\delta^{*}(v^m,G^m), v^m) \succ (\delta^{*}(u^{m}_{j'},G^{\ell}),u^{m}_{j'})
\]
for every~$j'\in [r^m-1]$, where the first inequality follows from \eqref{eq:ineq-utilde} and the second one from \eqref{eq:inneighbors-ub}. This implies the result for~$\ell=m$. For~$\ell\in [m-1]_0$, we have that
\[
    (\delta^{*}(\tilde{u}_j,G^m), \tilde{u}_j) \succ (\delta^{*}(v^m,G^m), v^m) = (\delta^{*}(u^{m-1}_0,G^m), u^{m-1}_0) \succeq (\delta^{*}(u^{\ell}_{j'},G^m), u^{\ell}_{j'})
\]
for every~$j'\in [r^\ell-1]_0$,
where the first inequality follows from \eqref{eq:ineq-utilde}, the equality from the definition of~$v^m$, and the last inequality from \eqref{eq:order-ell}, if~$\ell=m-1$, and from Part~\ref{claim:vertices-distinct-i}, otherwise.
\end{proof}

\autoref{claim:properties-g} and \autoref{claim:vertices-distinct}, along with \eqref{eq:ineq-lb-indegrees-uell}, imply that for every~$\ell'\in [m]_0$ it is possible to take a subset of~$\{u^\ell_j: \ell\in [\ell']_0, j\in \{\chi(\ell=m),\ldots,r^\ell-1\}\}$ such that some vertex in the subset has a (unique) lower bound on its indegree equal to~$\tau$, for every~$\tau$ between~$t$ and~$\delta^*(v^{\ell'},G^{\ell'})-\chi(\ell'=m)$. We state this conveniently in the following claim.

\begin{claim}\label{claim:levels-covered}
    Let~$\ell'\in [m]_0$ be an integer. Then, there exists a subset
    \[
        \{z_j: j\in [\delta^*(v^{\ell'},G^{\ell'})-\chi(\ell'=m)-t]_0\} \subseteq \{u^\ell_j: \ell\in [\ell']_0, j\in \{\chi(\ell=m),\ldots,r^\ell-1\}\}
    \]
    such that~$\max\{\delta^*(z_j,G_j): G_j\in \{G,G'\}\} \geq t+j$ for all~$j\in [\delta^*(v^{\ell'},G^{\ell'})-\chi(\ell'=m)-t]_0$.
\end{claim}

\begin{proof}
    Let~$\ell'\in [m]_0$ be an integer, and consider the set
    \[
        U = \{u^\ell_j: \ell\in [\ell']_0, j\in \{\chi(\ell=m),\ldots,r^\ell-1\}\}.
    \]
    Inequality \eqref{eq:ineq-lb-indegrees-uell} and \autoref{claim:properties-g} imply that, for every~$\tau\in \{t,t+1,\ldots,\delta^*(v^{\ell'},G^{\ell'})-\chi(\ell'=m)\}$, there is a vertex~$u^\ell_j$ in~$U$ such that~$\delta^*(u^\ell_j,G^{\ell+1}) \geq g(\ell,j)=\tau$. Since all vertices in~$U$ are distinct due to \autoref{claim:vertices-distinct}, we conclude that it is possible to take a subset 
    \[
        \{z_j: j\in [\delta^*(v^{\ell'},G^{\ell'})-\chi(\ell'=m)-t]_0\} \subseteq U
    \]
    such that~$\max\{\delta^*(z_j,G_j): G_j\in \{G,G'\}\} \geq t+j$ for all~$j\in [\delta^*(v^{\ell'},G^{\ell'})-\chi(\ell'=m)-t]_0$.
\end{proof}

We now distinguish two cases to conclude Property \eqref{eq:lem-impartiality-main}. We first consider the case with $\tilde{v} \notin \{u^\ell_j: \ell\in [m]_0, j\in \{\chi(\ell=m),\ldots,r^\ell-1\}\}$. Let \mbox{$Z=\{z_j: j\in [\delta^*(v^{m},G^{m})-1-t]_0\}$} be the set defined in \autoref{claim:levels-covered} for~$\ell'=m$, so that
\begin{equation}
    \max\{\delta^*(z_j,G_j): G_j\in \{G,G'\}\} \geq t+j \qquad \text{for all } j\in [\delta^*(v^{m},G^{m})-1-t]_0.\label{eq:levels-covered-m}
\end{equation}
Note that, by \autoref{claim:vertices-distinct},~$v^0\notin Z$, so it suffices to prove the bound on the indegrees in Property \eqref{eq:lem-impartiality-main}. If~$\delta^*(v^m,G^m)\geq T+1$, Property \eqref{eq:lem-impartiality-main} follows immediately from \eqref{eq:levels-covered-m}. If~$\delta^*(v^m,G^m)=T$, Property \eqref{eq:lem-impartiality-main} follows by \eqref{eq:levels-covered-m} along with the facts that~$\tilde{v}\notin Z$ and~$\delta^-(\tilde{v},G^m) \geq T$. If~$\delta^*(v^m,G^m)\leq T-1$, we know from \eqref{eq:inneighbors-vtilde} that~$\delta^{*}(\tilde{u}_j,G^m) \geq \delta^-(\tilde{v},G^m)-j$ for each~$j\in[\tilde{r}-1]_0$, hence
\[
    \delta^{*}(\tilde{u}_j,G^m) \geq T-j \qquad \text{for all } j\in[T-\delta^{*}(v^m,G^m)-1]_0,
\]
where we used that~$\delta^-(\tilde{v},G^m) \geq T$.
Property \eqref{eq:lem-impartiality-main} thus follows from \eqref{eq:levels-covered-m} and the fact that vertices in~$\{\tilde{u}_j: j\in[T-\delta^{*}(v^m,G^m)-1]_0\}$ do not belong to~$Z$ due to \autoref{claim:vertices-distinct}.

We finally consider the case where~$\tilde{v} \in \{u^\ell_j: \ell\in [m]_0, j\in \{\chi(\ell=m),\ldots,r^\ell-1\}\}$. We let 
\[
    \ell' = \min\{ \ell\in [m]_0: u^\ell_j=\tilde{v} \text{ for some } j\in \{\chi(\ell=m),\ldots,r^\ell-1\} \}
\]
and~$j'\in \{\chi(\ell'=m),\ldots,r^{\ell'}-1\}$ be such that~$\tilde{v}=u^{\ell'}_{j'}$.
We observe that~$\ell'\neq m$, since~$\tilde{v}=u^m_0$ and vertices in~$\{u^m_j: j\in [r^m-1]_0\}$ are all distinct. Thus, 
\eqref{eq:inneighbors-ub} implies that
\begin{equation}
    (\delta^*(v^{\ell'},G^{\ell'}), v^{\ell'}) \succ (\delta^*(\tilde{v},G^{\ell'}), \tilde{v}).\label{eq:ineq-vtilde-u0}
\end{equation}
We now let~$Z=\{z_j: j\in [\delta^*(v^{\ell'},G^{\ell'})-t]_0\}$ be the set defined in \autoref{claim:levels-covered} for this value of~$\ell'$, so that
\begin{equation}
    \max\{\delta^*(z_j,G_j): G_j\in \{G,G'\}\} \geq t+j \qquad \text{for all } j\in [\delta^*(v^{\ell'},G^{\ell'})-t]_0.\label{eq:levels-covered-ell}
\end{equation}
If~$\delta^*(v^{\ell'},G^{\ell'})\geq T$, Property \eqref{eq:lem-impartiality-main} follows immediately. Otherwise, from \eqref{eq:ineq-vtilde-u0} and the fact that $\delta^-(\tilde{v},G^{\ell'}) \geq T$ we have that
\[
    (\delta^-(\tilde{v},G^{\ell'}), \tilde{v}) \succ (\delta^*(v^{\ell'},G^{\ell'}), v^{\ell'}) \succ (\delta^*(\tilde{v},G^{\ell'}), \tilde{v}),
\]
so we can use \autoref{lem:indegree-changes} with~$(d,z) = (\delta^*(v^{\ell'},G^{\ell'}), v^{\ell'})$. This implies the existence of vertices~$\{y_j: j\in [q-1]_0\}$ such that 
\begin{equation}
    (\delta^{*}(y_j,G^{\ell'}),y_j) \succ (\delta^-(\tilde{v},G^{\ell'})-j, \tilde{v}) \qquad \text{for every } j\in[q-1]_0,\label{eq:ineq-indegrees-yj}
\end{equation}
where~$q=\delta^-(\tilde{v},G^{\ell'})-\delta^*(v^{\ell'},G^{\ell'})+\chi(\tilde{v}>v^{\ell'})$. In particular, since~$\delta^-(\tilde{v},G^{\ell'}) \geq T$, this yields
\begin{equation}
    \delta^{*}(y_j,G^{\ell'}) \geq T-j \qquad \text{for every } j\in[T-\delta^*(v^{\ell'},G^{\ell'})-1]_0.\label{eq:ineq-indegrees-yj-mod}
\end{equation}
We finally claim that
\begin{equation}
    (\delta^{*}(y_j,G^{\ell'}),y_j) \succ (\delta^{*}(u^{\ell}_{j'},G^{\ell'}),u^{\ell}_{j'}) \ \ \ \text{for every } j\in [q-1]_0,\ \ell \in [\ell']_0,\ j'\in [r^{\ell'}-1]_0.\label{eq:claim-indegrees-yj}
\end{equation} 
If true, this implies that~$\{y_j: j\in [q-1]_0\}\cap Z = \emptyset$, thus Property \eqref{eq:lem-impartiality-main} follows once again by combining \eqref{eq:ineq-indegrees-yj-mod} and \eqref{eq:levels-covered-ell}.

We finish the proof by showing \eqref{eq:claim-indegrees-yj}. For every~$j\in [q-1]_0$, we have that
\begin{align}
    (\delta^{*}(y_j,G^{\ell'}),y_j) & \succ (\delta^-(\tilde{v},G^{\ell'})-(q-1),\tilde{v})\nonumber\\
    & = (\delta^*(v^{\ell'},G^{\ell'})+1-\chi(\tilde{v}>v^{\ell'}),\tilde{v}) \nonumber\\
    & \succ (\delta^*(v^{\ell'},G^{\ell'}), v^{\ell'})\label{eq:indeq-indegrees-yj-vl},
\end{align}
where the first inequality comes from \eqref{eq:ineq-indegrees-yj}, the equality from the definition of~$q$, and the last inequality from a simple calculation.
From \eqref{eq:inneighbors-ub}, we know that
\[
    (\delta^*(v^{\ell'},G^{\ell'}), v^{\ell'}) \succ (\delta^{*}(u^{\ell'}_{j'},G^{\ell'}),u^{\ell'}_{j'}) \qquad \text{for every } j'\in [r^{\ell'}-1]_0.
\]
Combining this inequality with \eqref{eq:indeq-indegrees-yj-vl} shows \eqref{eq:claim-indegrees-yj} for the case where~$\ell=\ell'$. 
If~$\ell'\geq 1$, from the fact that~$v^{\ell'}=u^{\ell'-1}_0$ and \eqref{eq:order-ell} we know that 
\[
    (\delta^*(v^{\ell'},G^{\ell'}), v^{\ell'}) = (\delta^*(u^{\ell'-1}_0,G^{\ell'}), u^{\ell'-1}_0) \succeq (\delta^{*}(u^{\ell'-1}_{j'},G^{\ell'}),u^{\ell'-1}_{j'}) \ \ \text{for every } j'\in [r^{\ell'-1}-1]_0.
\]
Combining this inequality with \eqref{eq:indeq-indegrees-yj-vl} shows \eqref{eq:claim-indegrees-yj} for the case where~$\ell=\ell'-1$. Finally, if~$\ell'\geq 2$, from the fact that~$v^{\ell'}=u^{\ell'-1}_0$ and Part~\ref{claim:vertices-distinct-i} of \autoref{claim:vertices-distinct}, we know that
\[
    (\delta^*(v^{\ell'},G^{\ell'}), v^{\ell'}) = (\delta^*(u^{\ell'-1}_0,G^{\ell'}), u^{\ell'-1}_0) \succeq (\delta^{*}(u^{\ell}_{j'},G^{\ell'}),u^{\ell}_{j'})
\]
for every~$\ell\in [\ell'-1]_0$ and~$j'\in [r^{\ell}-1]_0$.
Combining this inequality with \eqref{eq:indeq-indegrees-yj-vl} shows \eqref{eq:claim-indegrees-yj} for the case where~$\ell\leq\ell'-2$. This concludes the proof of Property \eqref{eq:lem-impartiality-main} and the proof of the lemma.
\end{proof}

\autoref{fig:lemma-3} illustrates this result by showing a situation where the outgoing edge of a vertex $\tilde{v}$ with $\delta^-(\tilde{v},G_1)=\delta^-(\tilde{v},G_2)=T$ determines whether $\delta^{*}(v)\geq t$ for another vertex~$v$, and thus whether $\tilde{v}$ itself is selected or not. Note that in the example there exist vertices $w_j$ such that $\delta^-(w_j,G)\geq t+j$ for every $j\in [T-t]_0$ and some $G\in \{G_1,G_2\}$. 

\begin{figure}[t]
\centering
\begin{tikzpicture}[scale=0.88]

\Vertex[x=1, y=2.25, Math, shape=circle, color=black, , size=.05, label=u^3_1, fontscale=1, position=above, distance=-.1cm]{A}
\Vertex[x=1.5, y=2.25, Math, shape=circle, color=black, size=.05, label=\tilde{v}, fontscale=1, position=above, distance=-.07cm]{B}
\Vertex[x=.5, y=1.5, Math, shape=circle, color=black, size=.05, label=u^2_1, fontscale=1, position=above left, distance=-.17cm]{C}
\Vertex[y=.75, Math, shape=circle, color=black, size=.05, label=u^1_1, fontscale=1, position=above left, distance=-.17cm]{D}
\Vertex[x=2, y=.75, Math, shape=circle, color=black, size=.05, label=v, fontscale=1, position=above right, distance=-.16cm]{E}

\Edge[Direct, color=black, lw=1pt](B)(A)
\Edge[Direct, color=black, lw=1pt](A)(C)
\Edge[Direct, color=black, lw=1pt](C)(D)
\Edge[Direct, color=black, lw=1pt](D)(E)
\Edge[Direct, color=black, lw=1pt](E)(B)

\draw[] (-.2,-.2) -- (2.2,-.2);
\draw[] (-.2,.55) -- (2.2,.55);
\draw[] (-.2,1.3) -- (2.2,1.3);
\draw[] (-.2,2.05) -- (2.2,2.05);
\draw[] (-.2,2.8) -- (2.2,2.8);

\Vertex[x=4.1, y=2.25, Math, shape=circle, color=black, , size=.05, label=u^3_1, fontscale=1, position=above, distance=-.1cm]{F}
\Vertex[x=4.6, y=2.25, Math, shape=circle, color=black, size=.05, label=\tilde{v}, fontscale=1, position=above, distance=-.07cm]{G}
\Vertex[x=3.6, y=.75, Math, shape=circle, color=black, size=.05, label=u^2_1, fontscale=1, position=above, distance=-.1cm]{H}
\Vertex[x=3.1, y=.75, Math, shape=circle, color=black, size=.05, label=u^1_1, fontscale=1, position=above, distance=-.1cm]{I}
\Vertex[x=5.1, y=.75, Math, shape=circle, color=black, size=.05, label=v, fontscale=1, position=above right, distance=-.16cm]{J}

\Edge[Direct, color=black, lw=1pt](G)(F)
\Edge[Direct, color=black, lw=1pt](H)(I)
\Edge[Direct, color=black, lw=1pt, bend=-25](I)(J)
\Edge[Direct, color=black, lw=1pt](J)(G)

\draw[] (2.9,-.2) -- (5.3,-.2);
\draw[] (2.9,.55) -- (5.3,.55);
\draw[] (2.9,1.3) -- (5.3,1.3);
\draw[] (2.9,2.05) -- (5.3,2.05);
\draw[] (2.9,2.8) -- (5.3,2.8);

\Vertex[x=7.2, y=1.5, Math, shape=circle, color=black, , size=.05, label=u^3_1, fontscale=1, position=above, distance=-.1cm]{K}
\Vertex[x=7.7, y=2.25, Math, shape=circle, color=black, size=.05, label=\tilde{v}, fontscale=1, position=above, distance=-.07cm]{L}
\Vertex[x=6.7, y=.75, Math, shape=circle, color=black, size=.05, label=u^2_1, fontscale=1, position=above, distance=-.1cm]{M}
\Vertex[x=6.2, y=.75, Math, shape=circle, color=black, size=.05, label=u^1_1, fontscale=1, position=above, distance=-.1cm]{N}
\Vertex[x=8.2, y=.75, Math, shape=circle, color=black, size=.05, label=v, fontscale=1, position=above right, distance=-.16cm]{O}

\Edge[Direct, color=black, lw=1pt](M)(N)
\Edge[Direct, color=black, lw=1pt, bend=-25](N)(O)
\Edge[Direct, color=black, lw=1pt](O)(L)

\draw[] (6,-.2) -- (8.4,-.2);
\draw[] (6,.55) -- (8.4,.55);
\draw[] (6,1.3) -- (8.4,1.3);
\draw[] (6,2.05) -- (8.4,2.05);
\draw[] (6,2.8) -- (8.4,2.8);

\Vertex[x=10.3, y=1.5, Math, shape=circle, color=black, , size=.05, label=u^3_1, fontscale=1, position=above, distance=-.1cm]{P}
\Vertex[x=10.8, y=2.25, Math, shape=circle, color=black, size=.05, label=\tilde{v}, fontscale=1, position=above, distance=-.07cm]{Q}
\Vertex[x=9.8, y=.75, Math, shape=circle, color=black, size=.05, label=u^2_1, fontscale=1, position=above, distance=-.1cm]{R}
\Vertex[x=9.3, y=.75, Math, shape=circle, color=black, size=.05, label=u^1_1, fontscale=1, position=above, distance=-.1cm]{S}
\Vertex[x=11.3, Math, shape=circle, color=black, size=.05, label=v, fontscale=1, position=above right, distance=-.16cm]{T}

\Edge[Direct, color=black, lw=1pt](R)(S)
\Edge[Direct, color=black, lw=1pt](T)(Q)

\draw[] (9.1,-.2) -- (11.5,-.2);
\draw[] (9.1,.55) -- (11.5,.55);
\draw[] (9.1,1.3) -- (11.5,1.3);
\draw[] (9.1,2.05) -- (11.5,2.05);
\draw[] (9.1,2.8) -- (11.5,2.8);

\Vertex[x=13.4, y=1.5, Math, shape=circle, color=black, , size=.05, label=u^3_1, fontscale=1, position=above, distance=-.1cm]{U}
\Vertex[x=13.9, y=2.25, Math, shape=circle, color=white, size=.05, label=\tilde{v}, fontscale=1, position=above, distance=-.07cm]{V}
\Vertex[x=12.9, y=.75, Math, shape=circle, color=black, size=.05, label=u^2_1, fontscale=1, position=above, distance=-.1cm]{W}
\Vertex[x=12.4, Math, shape=circle, color=black, size=.05, label=u^1_1, fontscale=1, position=above, distance=-.1cm]{X}
\Vertex[x=14.4, Math, shape=circle, color=black, size=.05, label=v, fontscale=1, position=above right, distance=-.16cm]{Y}

\Edge[Direct, color=black, lw=1pt](Y)(V)

\draw[] (12.2,-.2) -- (14.6,-.2);
\draw[] (12.2,.55) -- (14.6,.55);
\draw[] (12.2,1.3) -- (14.6,1.3);
\draw[] (12.2,2.05) -- (14.6,2.05);
\draw[] (12.2,2.8) -- (14.6,2.8);

\Text[x=-.7, y=-.6, fontsize=\scriptsize]{$G_1$}
\Text[x=-.8, y=.925, fontsize=\scriptsize]{$t$}
\Text[x=-.8, y=2.425, fontsize=\scriptsize]{$T$}

\Text[x=1, y=-.6, fontsize=\scriptsize]{$i=0$}
\Text[x=4.1, y=-.6, fontsize=\scriptsize]{$i=1$}
\Text[x=7.2, y=-.6, fontsize=\scriptsize]{$i=2$}
\Text[x=10.3, y=-.6, fontsize=\scriptsize]{$i=3$}
\Text[x=13.4, y=-.6, fontsize=\scriptsize]{$i=4$}

\Vertex[x=1, y=-3, Math, shape=circle, color=black, , size=.05, label=u^3_1, fontscale=1, position=above, distance=-.1cm]{AA}
\Vertex[x=1.5, y=-2.25, Math, shape=circle, color=black, size=.05, label=\tilde{v}, fontscale=1, position=above, distance=-.07cm]{AB}
\Vertex[x=.5, y=-3, Math, shape=circle, color=black, size=.05, label=u^2_1, fontscale=1, position=above, distance=-.1cm]{AC}
\Vertex[y=-3.75, Math, shape=circle, color=black, size=.05, label=u^1_1, fontscale=1, position=above left, distance=-.17cm]{AD}
\Vertex[x=2, y=-3.75, Math, shape=circle, color=black, size=.05, label=v, fontscale=1, position=above right, distance=-.16cm]{AE}

\Edge[Direct, color=black, lw=1pt](AA)(AC)
\Edge[Direct, color=black, lw=1pt](AC)(AD)
\Edge[Direct, color=black, lw=1pt](AD)(AE)
\Edge[Direct, color=black, lw=1pt](AE)(AB)

\draw[] (-.2,-4.7) -- (2.2,-4.7);
\draw[] (-.2,-3.95) -- (2.2,-3.95);
\draw[] (-.2,-3.2) -- (2.2,-3.2);
\draw[] (-.2,-2.45) -- (2.2,-2.45);
\draw[] (-.2,-1.7) -- (2.2,-1.7);

\Vertex[x=4.1, y=-3, Math, shape=circle, color=black, , size=.05, label=u^3_1, fontscale=1, position=above, distance=-.1cm]{AF}
\Vertex[x=4.6, y=-2.25, Math, shape=circle, color=black, size=.05, label=\tilde{v}, fontscale=1, position=above, distance=-.07cm]{AG}
\Vertex[x=3.6, y=-3, Math, shape=circle, color=black, size=.05, label=u^2_1, fontscale=1, position=above, distance=-.1cm]{AH}
\Vertex[x=3.1, y=-4.5, Math, shape=circle, color=black, size=.05, label=u^1_1, fontscale=1, position=above, distance=-.1cm]{AI}
\Vertex[x=5.1, y=-3.75, Math, shape=circle, color=black, size=.05, label=v, fontscale=1, position=above right, distance=-.16cm]{AJ}

\Edge[Direct, color=black, lw=1pt](AF)(AH)
\Edge[Direct, color=black, lw=1pt](AI)(AJ)
\Edge[Direct, color=black, lw=1pt](AJ)(AG)

\draw[] (2.9,-4.7) -- (5.3,-4.7);
\draw[] (2.9,-3.95) -- (5.3,-3.95);
\draw[] (2.9,-3.2) -- (5.3,-3.2);
\draw[] (2.9,-2.45) -- (5.3,-2.45);
\draw[] (2.9,-1.7) -- (5.3,-1.7);

\Vertex[x=7.2, y=-3, Math, shape=circle, color=black, , size=.05, label=u^3_1, fontscale=1, position=above, distance=-.1cm]{AK}
\Vertex[x=7.7, y=-2.25, Math, shape=circle, color=black, size=.05, label=\tilde{v}, fontscale=1, position=above, distance=-.07cm]{AL}
\Vertex[x=6.7, y=-3.75, Math, shape=circle, color=black, size=.05, label=u^2_1, fontscale=1, position=above, distance=-.1cm]{AM}
\Vertex[x=6.2, y=-4.5, Math, shape=circle, color=black, size=.05, label=u^1_1, fontscale=1, position=above, distance=-.1cm]{AN}
\Vertex[x=8.2, y=-3.75, Math, shape=circle, color=black, size=.05, label=v, fontscale=1, position=above right, distance=-.16cm]{AO}

\Edge[Direct, color=black, lw=1pt](AN)(AO)
\Edge[Direct, color=black, lw=1pt](AO)(AL)

\draw[] (6,-4.7) -- (8.4,-4.7);
\draw[] (6,-3.95) -- (8.4,-3.95);
\draw[] (6,-3.2) -- (8.4,-3.2);
\draw[] (6,-2.45) -- (8.4,-2.45);
\draw[] (6,-1.7) -- (8.4,-1.7);

\Vertex[x=10.3, y=-3, Math, shape=circle, color=black, , size=.05, label=u^3_1, fontscale=1, position=above, distance=-.1cm]{AP}
\Vertex[x=10.8, y=-3, Math, shape=circle, color=black, size=.05, label=\tilde{v}, fontscale=1, position=above, distance=-.07cm]{AQ}
\Vertex[x=9.8, y=-3.75, Math, shape=circle, color=black, size=.05, label=u^2_1, fontscale=1, position=above, distance=-.1cm]{AR}
\Vertex[x=9.3, y=-4.5, Math, shape=circle, color=black, size=.05, label=u^1_1, fontscale=1, position=above, distance=-.1cm]{AS}
\Vertex[x=11.3, y=-3.75, Math, shape=circle, color=black, size=.05, label=v, fontscale=1, position=above, distance=-.07cm]{AT}

\Edge[Direct, color=black, lw=1pt](AS)(AT)

\draw[] (9.1,-4.7) -- (11.5,-4.7);
\draw[] (9.1,-3.95) -- (11.5,-3.95);
\draw[] (9.1,-3.2) -- (11.5,-3.2);
\draw[] (9.1,-2.45) -- (11.5,-2.45);
\draw[] (9.1,-1.7) -- (11.5,-1.7);

\Text[x=-.7, y=-5.1, fontsize=\scriptsize]{$G_2$}
\Text[x=-.8, y=-3.575, fontsize=\scriptsize]{$t$}
\Text[x=-.8, y=-2.075, fontsize=\scriptsize]{$T$}

\Text[x=1, y=-5.1, fontsize=\scriptsize]{$i=0$}
\Text[x=4.1, y=-5.1, fontsize=\scriptsize]{$i=1$}
\Text[x=7.2, y=-5.1, fontsize=\scriptsize]{$i=2$}
\Text[x=10.3, y=-5.1, fontsize=\scriptsize]{$i=3$}

\end{tikzpicture}
\caption{By changing its outgoing edge, $\tilde{v}$ is able to affect whether it is selected by the mechanism or not. \autoref{lem:impartiality-additive-ub} gives a condition over $T,\ t$, and $k$ such that this cannot happen. In this example, the sum of the indegrees of the vertices of $G_1$ that are shown in the figure is $5(t+1)$, so we need that $5(t+1)\leq kn$. On the other hand, when $T=t+2$ \autoref{lem:impartiality-additive-ub} states that impartiality is guaranteed as long as $4t+3>kn+\min\{k,2\}$. The graphs of this example clearly violate this condition.}
\label{fig:lemma-3}
\end{figure}

It is worth noting that \autoref{lem:impartiality-additive-ub} implies the impartiality of both the supermajority rule with threshold $\lfloor n/2\rfloor+1$ and the two contenders mechanism. When taking a single threshold $T=t$, the sufficient condition for impartiality becomes $T>kn/2$, which is clearly satisfied when $k=1$ and $T=\lfloor n/2\rfloor +1$ as in the supermajority rule with threshold $\lfloor n/2\rfloor+1$. When taking thresholds that differ by~$1$, $T=t+1$, the sufficient condition for impartiality becomes $T>kn/3+1$, which is clearly satisfied when $k=1$ and $T=\lfloor n/3\rfloor +2$ as in the two contenders mechanism. We will now prove \autoref{thm:additive-ub} by using this same condition to guarantee impartiality, but choosing the thresholds in such a way that the additive deviation is as small as possible.

\begin{proof}[Proof of \autoref{thm:additive-ub}]
Let $n\in \NN_+,\ G\in \calG_n(k)$ and $\kappa, c>0$ such that $k\leq cn^{\kappa}$.
Let $f$ be the selection mechanism given by the twin threshold mechanism with thresholds $T=\lfloor \sqrt{8c}n^{\frac{1+\kappa}{2}}\rfloor + 1$ and $t=\lceil \sqrt{c}n^{\frac{1+\kappa}{2}}\rceil $. 
 Since $k(n+1)\leq cn^{1+\kappa}+cn^{\kappa}\leq 2cn^{1+\kappa}$, we have from \autoref{lem:impartiality-additive-ub} that a sufficient condition for impartiality is that 
\[
    \frac{1}{2}(T^2+T+3t-t^2) > 2cn^{1+\kappa}.
\]
Replacing $T$ and $t$ yields
\[
    \frac{1}{2}(T^2+T+3t-t^2) > \frac{1}{2}\left( 8cn^{1+\kappa} + \sqrt{8c}n^{\frac{1+\kappa}{2}} + 3\sqrt{c}n^{\frac{1+\kappa}{2}} - \lceil \sqrt{c}n^{\frac{1+\kappa}{2}}\rceil^2 \right) > 2cn^{1+\kappa},
\]
where we used that $cn^{\kappa}>1$ and thus $\lceil \sqrt{c}n^{\frac{1+\kappa}{2}}\rceil^2 < 4 cn^{1+\kappa}$. We conclude that the mechanism is impartial for these values of $T$ and $t$.

In order to obtain the additive bound, first consider a graph $G=(N,E)$ such that $f$ returns the empty set when run with this graph as input. 
Let $v^*$ be such that $\delta^-(v^*)=\Delta(G)$ and note that necessarily $\hat{\delta}^I(v^*)\leq T-1$. 
Since there are at most $\lfloor kn/t\rfloor$ vertices with indegree $t$ or higher, a maximum of $\lfloor kn/t\rfloor-1$ in-neighbors of $v^*$ have their outgoing edges deleted during the algorithm. 
Therefore, we conclude that $\Delta(G)\leq T+\lfloor kn/t \rfloor-2$.

Consider now $G=(N,E)$ such that the mechanism returns a set $\{v\}$, and let $v^*$ be such that $\delta^-(v^*)=\Delta(G)$. 
Once again, a maximum of $\lfloor kn/t\rfloor-1$ in-neighbors of $v^*$ have their outgoing edges deleted during the algorithm. 
Using the fact that $\hat{\delta}^I(v^*)\leq \hat{\delta}^I(v)$ since $v$ is selected, we conclude that 
\[
    \Delta(G)-\delta^-(v) \leq \left(\delta^-(v)+\left\lfloor \frac{kn}{t} \right\rfloor-1\right) -\delta^-(v) = \left\lfloor \frac{kn}{t}\right\rfloor-1.
\]

Since the value obtained in the former case is greater than or equal to the one obtained in the latter for any values of $T$ and $t$, we have that $f$ is $\alpha$-additive for $\alpha = T+\lfloor kn/t \rfloor-2$. 
Therefore, for the specified values of $T$ and $t$ and given the upper bound on $k$, $f$ is $\alpha$-additive for
\[
    \alpha = \left\lfloor \sqrt{8c}n^{\frac{1+\kappa}{2}}\right\rfloor +1 + \left\lfloor \frac{cn^{1+\kappa}}{\left\lceil \sqrt{c}n^{\frac{1+\kappa}{2}} \right\rceil }\right\rfloor -2 \leq  (\sqrt{8}+1)\sqrt{c}n^{\frac{1+\kappa}{2}} =  O(n^{\frac{1+\kappa}{2}}).
\]
We conclude that $f$ is $O(n^{\frac{1+\kappa}{2}})$-additive.

When $k=1$, a more detailed analysis yields the bound of $\sqrt{7.25n}$. First observe that in this case, by \autoref{lem:impartiality-additive-ub}, impartiality holds whenever
\[
    \frac{1}{2}(T^2+T+3t-t^2)-(n+1) > 0
\]
and thus when
\[
    T^2+T-(t^2-3t+2n+2) > 0.
\]
The left-hand side is equal to zero if and only if
\[
    T=\frac{-1 \pm \sqrt{4(t^2-3t+2n+2)+1}}{2} = \pm \sqrt{t^2-3t+2n+\frac{9}{4}} - \frac{1}{2},
\]
and since $T$ has to be non-negative impartiality holds if 
\[
T > \sqrt{t^2-3t+2n+\frac{9}{4}} - \frac{1}{2}.
\]
This is trivially satisfied if we take
\[
t=\left\lceil1.23\sqrt{n}\right\rceil,\quad T = \left\lfloor \sqrt{\left\lceil1.23\sqrt{n}\right\rceil^2 - 3\left\lceil1.23\sqrt{n}\right\rceil +2n+\frac{9}{4}} +\frac{1}{2} \right\rfloor.
\]
As before, 
the twin threshold mechanism with thresholds $T$ and $t$ is $\alpha$-additive for any $\alpha\geq\alpha(n):= T+\lfloor n/t \rfloor -2$. In order to obtain an upper bound on $\alpha(n)$, we start by bounding $T$ from above:
\begin{align*}
    T\ & \leq \sqrt{(1.23\sqrt{n}+1)^2 - 3(1.23\sqrt{n}+1) +2n+\frac{9}{4}} + \frac{1}{2}\\
    & = \sqrt{3.5129n-1.23\sqrt{n}+\frac{1}{4}} + \frac{1}{2},
\end{align*}
where the first inequality holds because $a^2-3a$ is increasing for $a\geq 3/2$. Then,
\begin{align*}
\alpha(n) = T+\lfloor n/t \rfloor-2\  & \leq \sqrt{3.5129n-1.23\sqrt{n}+\frac{1}{4}} + \frac{1}{2} +\left\lfloor \frac{n}{\lceil 1.23\sqrt{n}\rceil} \right\rfloor-2 \\
& \leq \sqrt{3.5129n-1.23\sqrt{n}+\frac{1}{4}} + \frac{\sqrt{n}}{1.23} -\frac{3}{2} =: h(n).
\end{align*}
Let $g(n):=\sqrt{7.25n}-h(n)$. Then, by the previous inequality,
\begin{align*}
g(1) & = \sqrt{7.25} -\sqrt{3.5129-1.23+\frac{1}{4}} - \frac{1}{1.23} +\frac{3}{2} \\
& = \sqrt{7.25} -\sqrt{2.5329} - \frac{1}{1.23} +\frac{3}{2} \approx 1.79 >0.
\end{align*}
Moreover, 
\begin{align*}
g'(n) & = \frac{\sqrt{7.25}}{2\sqrt{n}} - \frac{7.0258\sqrt{n}-1.23}{2\sqrt{n}\sqrt{14.0516n-4.92\sqrt{n}+1}} - \frac{1}{2.46\sqrt{n}} \\
& = \frac{(2.46\sqrt{7.25}-2)\sqrt{14.0516n-4.92\sqrt{n}+1}-17.283468\sqrt{n}+3.0258}{4.92\sqrt{n}\sqrt{14.0516n-4.92\sqrt{n}+1}},\\
& \geq \frac{(2.46\sqrt{7.25}-2)\sqrt{14.05n-4.92\sqrt{n}+1}-17.3\sqrt{n}+3}{4.92\sqrt{n}\sqrt{14.0516n-4.92\sqrt{n}+1}},\\
& = \frac{45(2.46\sqrt{7.25}-2)\sqrt{14.05n-4.92\sqrt{n}+1}-778.5\sqrt{n}+135}{221.4\sqrt{n}\sqrt{14.0516n-4.92\sqrt{n}+1}},
\end{align*}
which is non-negative when $n\geq 1$. To see this, note that the denominator of the last expression is positive and that $45(2.46\sqrt{7.25}-2)>208$, so that $g'(n)\geq 0$ if
\[
208\sqrt{14.05n-4.92\sqrt{n}+1} \geq 778.5\sqrt{n}-135,
\]
\ie if
\[
1796.95n-2663.88\sqrt{n}+25039\geq 0.
\]
This inequality indeed holds for every $n\geq 1$. This is immediate for $n\in \{1,2,3\}$. For $n\geq 4$ we have that $n\geq 2\sqrt{n}$, and thus the expression on the left is greater than or equal to $930.02\sqrt{n}+25039$, which is clearly non-negative. We conclude that $g(n)\geq 0$ for every $n\in \NN$ and thus $\alpha(n)\leq h(n) \leq \sqrt{7.25n}$ for every $n\geq 1$, so $f$ is $\sqrt{7.25n}$-additive.
\end{proof}

Recalling \autoref{fig:plot_alpha_n}, we note that for small values of $n$, the guarantee given by the optimal twin threshold mechanism is much smaller than $\sqrt{7.25n}$, and that the bound becomes significantly better than those given by the other mechanisms as $n$ grows. 
Closing the gap between the lower bound of  \autoref{thm:additive-plurality-lb} and the upper bound given by the optimal twin threshold mechanism is left as our main open question.

\section{A Tight Impossibility Result for Approval}

So far, we have developed a new mechanism for impartial selection and have established an additive performance guarantee for the mechanism relative to the maximum outdegree in the graph. We will now take a closer look at the case where the maximum outdegree is unbounded, \ie at the approval setting.

When applied to the approval setting, \autoref{thm:additive-ub} provides a performance guarantee of $O(n)$. As the maximum indegree in a graph with $n$ vertices is $n-1$, this bound is trivially achieved by any impartial mechanism including the mechanism that never selects. \citet{caragiannis2022impartial} have used a careful case analysis to show that deterministic impartial mechanisms cannot be better than $3$-additive.
We show that the trivial upper bound of $n-1$ is in fact tight for all $n$, which means that the mechanism that never selects provides the best possible additive performance guarantee among all deterministic impartial mechanisms. Our result is in fact more general and again holds relative to the maximum outdegree~$k$.
\begin{theorem}
    \label{thm:additive-lb}
    Let $n\in\NN$ and $k\leq n-1$. Let $f$ be an impartial deterministic selection mechanism such that $f$ is $\alpha$-additive on $\calG_n(k)$. Then $\alpha\geq k$. 
    In particular, if $f$ is $\alpha$-additive on $\calG_n$, then $\alpha\geq n-1$.
\end{theorem}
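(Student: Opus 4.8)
The plan is to derive the additive lower bound from an \emph{axiomatic impossibility}: for every impartial deterministic mechanism $f$ I will exhibit a graph $G\in\calG_n(k)$ with $\Delta(G)-\delta^-(f(G),G)\geq k$. Since a graph on which $f$ abstains or selects a vertex of indegree $0$ has gap equal to its maximum indegree, the target is a graph in which some vertex attains indegree $k$ — for instance a configuration built around a set of $k+1$ mutually nominating ``candidate'' vertices, each of indegree and outdegree exactly $k$, with the remaining $n-k-1$ vertices of indegree $0$ — on which $f$ either does not select or selects a vertex of indegree $0$. For $k=n-1$ this is precisely the assertion that impartiality is incompatible with \emph{weak unanimity} (selecting some positive-indegree vertex whenever a vertex of indegree $n-1$ exists), and the stated bound $\alpha\geq n-1$ on $\calG_n$ then falls out as the case $k=n-1$.

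Because $f$ is deterministic, its tie-breaking obstructs a direct symmetry argument, so I would pass to the \emph{randomized relaxation} $\bar f$ obtained by averaging $f$ over all relabelings of the vertex set. The mechanism $\bar f$ inherits impartiality from $f$ and, by construction, selects isomorphic vertices with equal probability. The leverage is the following: if $f$ were $\alpha$-additive with $\alpha<k$ on all of $\calG_n(k)$, then on every graph of the candidate family, where $\Delta=k$, the mechanism $f$ could neither abstain nor pick a vertex of indegree $0$ on any relabeled copy, so $\bar f$ would select a positive-indegree vertex with probability one; it is this weak-unanimity-type guarantee that I will contradict on a highly symmetric family.

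Third, I would build the symmetric family as the graphs \emph{generated by ordered partitions} of the candidate set, all respecting the outdegree bound $k$. These graphs are chosen so that (a)~their automorphism groups pin down the selection probabilities of $\bar f$ up to the block structure of the partition; (b)~impartiality relates the probabilities of graphs differing only in the outgoing edges of a single vertex, which links the family along the covering relations of the lattice of ordered partitions; and (c)~the assumed performance of $\bar f$ forces the bulk of the selection probability onto the high-indegree blocks. Each of (a)--(c) translates into linear equalities or inequalities among the selection-probability variables.

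Finally, using the known enumeration of ordered partitions and of the graphs isomorphic to each member of the family, I would aggregate these constraints and invoke a \emph{Farkas-type} argument to produce a nonnegative combination of the impartiality equalities and the performance inequalities that is self-contradictory, yielding the impossibility and hence $\alpha\geq k$. I expect the heart of the difficulty to lie in this last pair of steps: designing the ordered-partition family so that the impartiality constraints and the performance bound are \emph{jointly infeasible} while every graph keeps outdegree at most $k$, and then exhibiting the explicit dual certificate through the counting identities. A secondary technical point is to verify that the symmetrization interacts correctly with the finite family, so that the contradiction is forced uniformly for all $n$ and all $k\leq n-1$.
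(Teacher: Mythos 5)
Your plan follows the paper's own proof essentially step for step: reduce the additive bound to a weak-unanimity-versus-impartiality impossibility by building graphs on $k+1$ ``candidate'' vertices padded with $n-k-1$ isolated vertices, pass to the symmetrized randomized relaxation (the paper's \autoref{lem:symmetry-axiom}), work on the family $\calG^T_{k+1}$ of ordered-partition graphs linked by single-vertex edge changes, and derive infeasibility of the resulting linear system via a dual certificate built from the isomorphism counts $\lambda_G$ (the paper's \autoref{lem:odd-graphs}, that the Fubini numbers are odd, combined with a signed bipartition argument). The part you flag as the main remaining difficulty --- exhibiting the explicit certificate --- is exactly what the paper's \autoref{lem:imposibility-selection} supplies, so your outline is the correct one and matches the paper's route.
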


In the practically relevant case where individuals are not allowed to abstain and the minimum outdegree is therefore at least~$1$, a small improvement can be obtained by selecting a vertex with an incoming edge from a fixed vertex, and again breaking ties by a fixed ordering of the vertices. The selected vertex then has indegree at least $1$, which for the approval setting implies ($n-2$)-additivity. This guarantee is again best possible.
\begin{theorem}
    \label{thm:additive-lb-abstentions}
    Let $n\in\NN$ and $k\leq n-1$. Let $f$ be an impartial deterministic selection mechanism such that $f$ is $\alpha$-additive on $\calG^+_n(k)$. Then $\alpha\geq k-1$.
    In particular, if $f$ is $\alpha$-additive on $\calG^+_n$, then $\alpha\geq n-2$.
\end{theorem}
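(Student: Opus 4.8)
The plan is to deduce \autoref{thm:additive-lb-abstentions} from \autoref{thm:additive-lb} by a black-box reduction: starting from any impartial mechanism $f$ that is $\alpha$-additive on $\calG^+_n(k)$, I would build an impartial mechanism $g$ on the larger class $\calG_n(k)$ whose additive guarantee is at most $\alpha+1$. \autoref{thm:additive-lb} then forces $\alpha+1\geq k$, that is $\alpha\geq k-1$, and the special case $k=n-1$ (so that $\calG^+_n(k)=\calG^+_n$ and $\calG_n(k)=\calG_n$) yields $\alpha\geq n-2$. The only thing that has to be engineered is a way to remove abstentions that (i)~preserves impartiality and (ii)~costs at most one unit in the additive guarantee.

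To this end I fix the cyclic successor map $\sigma(v)=(v\bmod n)+1$, a fixed-point-free permutation in which every vertex has exactly one $\sigma$-predecessor. Given $G=(N,E)\in\calG_n(k)$, I define its completion $G^+$ by adding, for every vertex $v$ with $\delta^+(v,G)=0$, the single edge $(v,\sigma(v))$. Since abstaining vertices have no outgoing edges, no parallel edge or loop is created, every outdegree in $G^+$ lies in $\{1,\dots,k\}$, and hence $G^+\in\calG^+_n(k)$ whenever $k\geq 1$ (and $\calG^+_n(k)$ is empty when $k=0$, where the claimed bound $\alpha\geq-1$ is vacuous). I then set $g(G)=f(G^+)$, which is a legitimate mechanism on $\calG_n(k)$ with $|g(G)|\leq 1$.

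The key point is that the completion of a vertex depends only on that vertex's own outgoing edges: whether the edge $(v,\sigma(v))$ is inserted is determined solely by whether $v$ abstains. Hence if $G$ and $G'$ differ only in the outgoing edges of a single vertex $\tilde v$, then $G^+$ and $G'^+$ also differ only in the outgoing edges of $\tilde v$, since the completion edges of every other vertex are unaffected. Impartiality of $f$ on $\calG^+_n(k)$ then gives $f(G^+)\cap\{\tilde v\}=f(G'^+)\cap\{\tilde v\}$, so $g$ is impartial on $\calG_n(k)$. For the guarantee, note that $G^+$ contains all edges of $G$, so indegrees only increase, $\Delta(G)\leq\Delta(G^+)$, and each vertex gains at most one incoming edge because it has at most one $\sigma$-predecessor. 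Writing $s=f(G^+)$, I obtain $\delta^-(s,G)\geq\delta^-(s,G^+)-1\geq\bigl(\Delta(G^+)-\alpha\bigr)-1\geq\Delta(G)-\alpha-1$, and the case $s=\emptyset$ is subsumed since then $\Delta(G)\leq\Delta(G^+)\leq\alpha$. Thus $\Delta(G)-\delta^-(g(G),G)\leq\alpha+1$ for every $G\in\calG_n(k)$, as required.

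I expect the main obstacle to be the verification in the third step rather than any single hard estimate: one must check carefully that the completion map leaves all other vertices' outgoing edges untouched, so that impartiality transfers under single-vertex edge changes, and that routing every abstention edge through the fixed-point-free permutation $\sigma$ caps the indegree inflation at one. These two properties are exactly what make the reduction lose a single unit and no more, which is what separates the $k-1$ bound here from the $k$ bound of \autoref{thm:additive-lb}; the matching upper bound of $n-2$ obtained by the fixed-vertex mechanism confirms that this single-unit slack is unavoidable.
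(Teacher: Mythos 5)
Your proof is correct, but it takes a genuinely different route from the paper's. The paper proves this theorem directly from its axiomatic impossibility result (\autoref{lem:imposibility-selection}): assuming a $(k-2)$-additive impartial mechanism on $\calG^+_n(k)$, it builds from each $G\in\calG^T_k$ a graph $H\in\calG^+_n(k)$ by adding $n-k$ fresh vertices that each point to all $k$ original vertices (with abstaining original vertices routed to the new vertex $u_1$), and shows that the induced mechanism on $\calG^T_k$ is impartial and weakly unanimous, contradicting the lemma. You instead invoke \autoref{thm:additive-lb} as a black box: your cyclic completion $G\mapsto G^+$ removes abstentions in a way that depends only on each vertex's own outgoing edges (so impartiality transfers under single-vertex deviations) and inflates each indegree by at most one (so the additive guarantee degrades by exactly one unit), giving $\alpha+1\geq k$. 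Your argument is more modular---it cleanly explains why the two lower bounds differ by exactly one and avoids re-verifying weak unanimity---while the paper's construction keeps both theorems as parallel instantiations of the same lemma and directly controls which vertices can be selected. Two small points are worth flagging. First, your reduction needs the impartiality hypothesis of \autoref{thm:additive-lb} to mean impartiality on $\calG_n(k)$ rather than on all of $\calG_n$, since your $g$ is only guaranteed impartial there; this is indeed the reading consistent with the paper (its proof of \autoref{thm:additive-lb} only invokes impartiality on pairs of graphs lying in $\calG_n(k)$, and its proof of the present theorem explicitly says ``impartial on $\calG^+_n(k)$ by hypothesis''), so this is a matter of interpretation rather than a gap. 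Second, $\sigma$ has a fixed point when $n=1$, but there $k=0$ and the claim is vacuous, as you observe.
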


To prove both theorems we study the performance of impartial, but not necessarily deterministic, selection mechanisms on a particular class of graphs which in the case of $n$ vertices we denote by $\calG^T_n$. 
For each $n\in\NN$, a graph $G=(N,E)\in \calG_n$ belongs to $\calG^T_n$ if and only if there exists an $r$-partition of $N$ for some $r\geq 1$, which we denote by $(S_1,\ldots,S_{r})$, such that (i)~$u<v$ for every $u\in S_i$, and $v\in S_j$ with $i<j$, and (ii)~$E=\{(u,v)\in S_i\times S_j:  i,j \in [r], i\leq j, u\not=v\}$.
In other words, $\calG^T_n$ contains all graphs obtained by taking an ordered partition of a set of $n$ unlabeled vertices and adding edges from each vertex to all other vertices in the same part and in greater parts. We will not be interested in isomorphic graphs within the class and thus only consider partitions of the vertices in increasing order. A graph in~$\calG^T_n$ is thus characterized by the partition $(S_1,\ldots,S_{r})$, or by the tuple $(s_1,\ldots,s_r)$ where $s_i=|S_i|$ for each $i\in [r]$. For a given graph $G\in\calG^T_n$, we denote the former by $S(G)$, the latter by $s(G)$, and the length of $s(G)$ by $r(G)$. Finally, for $G\in \calG^T_n$, let $\lambda_G$ be the number of graphs with $n$ vertices isomorphic to $G$, \ie 
\[
    \lambda_G=\frac{n!}{\prod_{i=1}^{r(G)}{(s(G))_i!}}.
\]
\autoref{fig:transitive-graphs-2-3} shows the graphs in $\calG^T_2$ and $\calG^T_3$, along with their tuple representation $s(G)$ and associated values $\lambda_G$.
The sums, for $n\in\NN$, of the values $\lambda_G$ for all graphs $G\in\calG^T_n$ are known as Fubini numbers and count the number of weak orders on an $n$-element set. The following lemma establishes a property of Fubini numbers that is readily appreciated for the cases shown in \autoref{fig:transitive-graphs-2-3} but in fact holds for all~$n$. The property was known previously~\citep{diagana2017some}, but we provide an alternative proof in \autoref{app:odd-graphs} for the sake of completeness.
\begin{lemma}
\label{lem:odd-graphs}
For every $n\in \NN,\ n\geq 1$, $\sum_{G\in \calG^T_n}\lambda_G$ is an odd number.
\end{lemma}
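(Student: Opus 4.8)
The plan is to prove the statement by induction on $n$, using the standard recurrence for the Fubini numbers. Let me write $a_n=\sum_{G\in\calG^T_n}\lambda_G$ for the $n$-th Fubini number (the number of weak orders, equivalently ordered set partitions, on $\{1,\dots,n\}$), with the convention $a_0=1$. The key combinatorial fact is the recurrence obtained by conditioning on the size $j$ of the first block $S_1$ of the ordered partition: choosing which $j$ of the $n$ elements lie in the top block and then ordering the rest gives
\[
    a_n=\sum_{j=1}^{n}\binom{n}{j}a_{n-j}.
\]
(One checks this matches $\sum_G\lambda_G$ because $\lambda_G$ is exactly the number of labelings, and summing over all shapes with first block of size $j$ reproduces $\binom{n}{j}a_{n-j}$.) First I would establish this recurrence cleanly, then reduce the whole problem to a parity statement about it.

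\medskip

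The main step is to analyze the recurrence modulo $2$. The idea is that most binomial coefficients are even, so only a few terms survive mod $2$. Concretely, I would rewrite the recurrence by splitting off the $j=n$ term, which contributes $\binom{n}{n}a_0=1$, and show that all remaining contributions pair up or vanish mod $2$. The cleanest route is to prove the sharper relation
\[
    a_n\equiv a_{n-1}\pmod 2\qquad\text{for all }n\ge 1,
\]
which together with $a_1=1$ immediately gives that every $a_n$ is odd. To obtain this congruence I would work from $a_n=\sum_{j=0}^{n-1}\binom{n}{j}a_j$ (the same recurrence reindexed, conditioning instead on the size of the \emph{last} block or equivalently rewriting the sum) and compare $a_n$ with $a_{n-1}=\sum_{j=0}^{n-2}\binom{n-1}{j}a_j$, so that
\[
    a_n-a_{n-1}=\binom{n}{n-1}a_{n-1}+\sum_{j=0}^{n-2}\left[\binom{n}{j}-\binom{n-1}{j}\right]a_j
    =\sum_{j=0}^{n-1}\binom{n-1}{j-1}a_j,
\]
using Pascal's rule $\binom{n}{j}=\binom{n-1}{j}+\binom{n-1}{j-1}$. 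The remaining task is then a parity evaluation of $\sum_{j}\binom{n-1}{j-1}a_j$.

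\medskip

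I expect the genuine obstacle to be controlling this residual sum $\sum_{j}\binom{n-1}{j-1}a_j \pmod 2$; it is not obviously even term-by-term. The honest resolution is likely to pass through Lucas' theorem to describe exactly which $\binom{n-1}{j-1}$ are odd (in terms of the binary digits of $n-1$ and $j-1$), combined with the inductive hypothesis that all $a_j$ are odd, turning the parity of the sum into a count of certain index pairs mod $2$. An alternative and possibly slicker route, which I would pursue in parallel, is the exponential-generating-function identity $\sum_{n\ge 0} a_n x^n/n! = 1/(2-e^{x})$; reducing the coefficients mod $2$ and exploiting the symmetry that pairs up binary representations should make the odd-parity pattern transparent. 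In either approach the combinatorial heart is showing that the correction term does not disturb the parity, and I would lean on the symmetry $\binom{m}{i}\equiv\binom{m}{m-i}\pmod 2$ to pair off surviving terms. Once the congruence $a_n\equiv a_{n-1}$ is in hand, the induction closes immediately from the base case $a_1=1$, completing the proof.
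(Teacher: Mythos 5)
Your approach---induction on $n$ via the Fubini recurrence---is genuinely different from the paper's, and it can be made to work, but as written it stops short of a proof and contains a concrete algebra error. The error first: writing $a_n=\sum_{j=0}^{n-1}\binom{n}{j}a_j$, Pascal's rule gives
\[
a_n-a_{n-1}=\binom{n}{n-1}a_{n-1}+\sum_{j=0}^{n-2}\binom{n-1}{j-1}a_j
= n\,a_{n-1}+\sum_{j=1}^{n-2}\binom{n-1}{j-1}a_j,
\]
whereas your displayed identity $\sum_{j=0}^{n-1}\binom{n-1}{j-1}a_j$ contributes only $\binom{n-1}{n-2}a_{n-1}=(n-1)a_{n-1}$ from the top index, so it is off by exactly $a_{n-1}$. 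Since $a_{n-1}$ is odd, this discrepancy flips parity: any mod-2 conclusion drawn from your identity would come out inverted. The second, larger issue is that the step you yourself flag as ``the genuine obstacle''---evaluating the residual sum mod $2$, possibly via Lucas' theorem or the generating function $1/(2-e^x)$---is exactly where the argument ends; the key congruence is never established.

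The good news is that the obstacle dissolves if you run \emph{strong} induction directly on the recurrence instead of detouring through $a_n\equiv a_{n-1}\pmod 2$. Assume $a_0,\dots,a_{n-1}$ are all odd (base case $a_0=1$). Then, working mod $2$,
\[
a_n=\sum_{j=0}^{n-1}\binom{n}{j}a_j\equiv\sum_{j=0}^{n-1}\binom{n}{j}=2^n-\binom{n}{n}=2^n-1\equiv 1\pmod 2,
\]
and the induction closes with no Lucas' theorem, no binary-digit analysis, and no generating functions. With that repair, your route is correct and arguably simpler than the paper's, which proceeds quite differently: there one pairs each graph whose tuple $s(G)$ is non-symmetric with the graph given by the reversed tuple (these contribute an even amount to $\sum_{G\in\calG^T_n}\lambda_G$), and then uses Legendre's formula to show that $\lambda_G$ is even for every symmetric tuple with $r(G)\geq 2$, leaving only the tuple $(n)$ with $\lambda_G=1$. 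Your approach trades that 2-adic valuation analysis for a one-line congruence, at the modest cost of first justifying the recurrence $a_n=\sum_{j=1}^{n}\binom{n}{j}a_{n-j}$, which you sketch correctly.
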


\begin{figure}[t]
\centering
\begin{tikzpicture}[scale=0.88]

\Vertex[y=1.5, Math, shape=circle, color=black, size=.05]{A}
\Vertex[Math, shape=circle, color=black, size=.05]{B}

\Edge[Direct, color=black, lw=1pt](A)(B)

\Vertex[x=1.3, y=1.5,  Math, shape=circle, color=black, size=.05]{C}
\Vertex[x=1.3, Math, shape=circle, color=black, size=.05]{D}

\Edge[Direct, color=black, lw=1pt, bend=-20](C)(D)
\Edge[Direct, color=black, lw=1pt, bend=-20](D)(C)

\Vertex[x=3.5, y=1.5, Math, shape=circle, color=black, size=.05]{E}
\Vertex[x=2.6, Math, shape=circle, color=black, size=.05]{F}
\Vertex[x=4.4, Math, shape=circle, color=black, size=.05]{G}

\Edge[Direct, color=black, lw=1pt](E)(F)
\Edge[Direct, color=black, lw=1pt](E)(G)
\Edge[Direct, color=black, lw=1pt](F)(G)

\Vertex[x=6.6, y=1.5, Math, shape=circle, color=black, size=.05]{H}
\Vertex[x=5.7, Math, shape=circle, color=black, size=.05]{I}
\Vertex[x=7.5, Math, shape=circle, color=black, size=.05]{J}

\Edge[Direct, color=black, lw=1pt](H)(I)
\Edge[Direct, color=black, lw=1pt](H)(J)
\Edge[Direct, color=black, lw=1pt, bend=-20](I)(J)
\Edge[Direct, color=black, lw=1pt, bend=-20](J)(I)

\Vertex[x=9.7, y=1.5, Math, shape=circle, color=black, size=.05]{K}
\Vertex[x=8.8, Math, shape=circle, color=black, size=.05]{L}
\Vertex[x=10.6, Math, shape=circle, color=black, size=.05]{M}

\Edge[Direct, color=black, lw=1pt, bend=-20](K)(L)
\Edge[Direct, color=black, lw=1pt, bend=-20](L)(K)
\Edge[Direct, color=black, lw=1pt](K)(M)
\Edge[Direct, color=black, lw=1pt](L)(M)

\Vertex[x=12.8, y=1.5, Math, shape=circle, color=black, size=.05]{N}
\Vertex[x=11.9, Math, shape=circle, color=black, size=.05]{O}
\Vertex[x=13.7, Math, shape=circle, color=black, size=.05]{P}

\Edge[Direct, color=black, lw=1pt, bend=-20](N)(O)
\Edge[Direct, color=black, lw=1pt, bend=-20](O)(N)
\Edge[Direct, color=black, lw=1pt, bend=-20](N)(P)
\Edge[Direct, color=black, lw=1pt, bend=-20](P)(N)
\Edge[Direct, color=black, lw=1pt, bend=-20](O)(P)
\Edge[Direct, color=black, lw=1pt, bend=-20](P)(O)

\Text[x=-1.4, y=-.75]{$s(G)$}
\Text[x=-1.4, y=-1.25]{$\lambda_G$}

\Text[y=-.75]{$(1,1)$}
\Text[y=-1.25]{$2$}

\Text[x=1.3, y=-.75]{$(2)$}
\Text[x=1.3, y=-1.25]{$1$}

\Text[x=3.5, y=-.75]{$(1,1,1)$}
\Text[x=3.5, y=-1.25]{$6$}

\Text[x=6.6, y=-.75]{$(1,2)$}
\Text[x=6.6, y=-1.25]{$3$}

\Text[x=9.7, y=-.75]{$(2,1)$}
\Text[x=9.7, y=-1.25]{$3$}

\Text[x=12.8, y=-.75]{$(3)$}
\Text[x=12.8, y=-1.25]{$1$}
\end{tikzpicture}
\caption{Graphs in $\calG^T_2$ and $\calG^T_3$.}
\label{fig:transitive-graphs-2-3}
\end{figure}

For every pair of graphs $G,G'\in \calG^T_n$ and $j\in\{2,\ldots,r(G)\}$,
we say that there is a \textit{$j$-transition} from $G$ to $G'$ if $r(G)=r(G')+1$, $(s(G))_{j}=1$, and
\[
    (s(G'))_i =\left\{ \begin{array}{ll}
             (s(G))_i &   \text{ if } i\leq j-2, \\
             (s(G))_i +1 &  \text{ if } i= j-1, \\
             (s(G))_{i+1} &  \text{ if } i\geq j.
             \end{array}
   \right.
\]
Intuitively, a $j$-transition can be obtained by changing the outgoing edges of the single vertex in the set $(S(G))_{j}$, including not only edges to vertices in $\bigcup_{i=j}^{r(G)}(S(G))_i$ but also to vertices in $(S(G))_{j-1}$. 
When the value of $j$ is not relevant in a particular context, we simply say that there is a transition from $G$ to $G'$ if there exists some $j\in\{2,\ldots,r(G)\}$ such that there is a $j$-transition from $G$ to $G'$. 
Observe that for every pair of graphs $G,G'\in \calG^T_n$ there is at most one $j\in\{2,\ldots,r(G)\}$ such that there is a $j$-transition from $G$ to $G'$, and that if there is a transition from $G$ to $G'$, there cannot be a transition from $G'$ to $G$.
This kind of relation between ordered partitions has been exploited by \citet{insko2017ordered} for studying an expansion of the determinant, giving rise to a partial order on~$\calG_n^T$.
In our context, it turns out to be relevant because whenever there is a $j$-transition from $G$ to $G'$, any impartial mechanism either selects the vertex in $(S(G))_{j}$ both in $G$ and $G'$, or in none of them.

In the case of plurality, impartiality was shown by \citet{holzman2013impartial} to be incompatible with two further axioms: \textit{positive unanimity}, which requires for all $G=(N,E)\in\calG(1)$ that $v\in f(G)$ if $\delta^-(v)=|N|-1$; and \textit{negative unanimity}, which requires for all $G=(N,E)\in\calG(1)$ that $v\not\in f(G)$ if $\delta^-(v)=0$. This result holds even on a restricted class of graphs, consisting of a single cycle and additional vertices with edges onto that cycle, and has immediate and very strong implications on the best multiplicative approximation guarantee that an impartial mechanism can achieve. For additive performance guarantees, however, the incompatibility of impartiality with the other two axioms implies only a lower bound of $2$. We will see in the following that on the class $\calG^T_n$, impartiality is incompatible with a single axiom that weakens positive and negative unanimity. Strong lower bounds regarding additive performance guarantees for approval then follow immediately.

The class $\calG^T_n$ is very different, and has to be very different, from the class of graphs used by \citeauthor{holzman2013impartial}, and will ultimately require a new analysis. We can, however, follow the approach of \citeauthor{holzman2013impartial} to consider randomized mechanisms rather than deterministic ones, which allows us without loss of generality to restrict attention to mechanisms that treat vertices symmetrically. A \textit{randomized selection mechanism} for $\calG$ is given by a family of functions $f\colon\calG_n\to [0,1]^n$ that maps each graph to a probability distribution on the set of its vertices, such that $\sum_{i=1}^n{(f(G))_i}\leq 1$ for every graph $G\in \calG_n$.
Analogously to the case of deterministic mechanisms, we say that a randomized selection mechanism $f$ is \emph{impartial} on $\calG'\subseteq \calG$ if for every pair of graphs $G = (N, E)$ and $G' = (N, E')$ in $\calG'$ and every $v\in N$, $(f(G))_v = (f(G'))_v$ whenever $E \setminus (\{v\} \times N) = E' \setminus (\{v\} \times N)$. We say that a randomized selection mechanism $f$ satisfies \textit{weak unanimity} on $\calG_n$ if for every $G=(N,E)\in \calG_n$ such that $\delta^-(v)=n-1$ for some $v\in N$,
\[
    \sum_{u\in N:  \delta^-(u)\geq 1}(f(G))_u\geq 1.
\]
In other words, weak unanimity requires that a vertex with positive indegree is chosen with probability~$1$ whenever there exists a vertex with indegree~$n-1$. We finally say that a randomized mechanism $f$ is \emph{symmetric} if it is invariant with respect to renaming of the vertices, \ie if for every $G = (N, E)\in \calG$, every $v\in N$ and every permutation $\pi = (\pi_1,\ldots, \pi_{|N|})$ of $N$,
$
(f(G_{\pi}))_{\pi_v} = (f(G))_v$,
where $G_{\pi} = (N, E_{\pi})$ with $E_{\pi} = \{(\pi_u, \pi_v):  (u, v) \in E\}$. For a given randomized mechanism $f$, we denote by $f_{\text{s}}$ the mechanism obtained by applying a random permutation $\pi$ to the vertices of the input graph, invoking $f$, and permuting the result by the inverse of $\pi$. Thus, for all $n\in \NN,\ G \in \calG_n$, and $v \in N$,
\[
    (f_{\text{s}}(G))_v = \frac{1}{n!} \sum_{\pi\in \calS_n}(f(G_{\pi}))_{\pi_v},
\]
where $\calS_n$ is the set of all permutations $\pi = (\pi_1,\ldots, \pi_n)$ of a set of $n$ elements. 

The following lemma, which we prove in \autoref{app:symmetry-axiom}, establishes that $f_{\text{s}}$ is symmetric for every randomized mechanism $f$ and inherits impartiality and weak unanimity from $f$. The lemma is a straightforward variant of a result of \citeauthor{holzman2013impartial} and will allow us to restrict attention to symmetric randomized mechanisms.
\begin{lemma}
\label{lem:symmetry-axiom}
Let $f$ be a randomized selection mechanism that is impartial and weakly unanimous on $\calG_n$. Then, $f_{\text{s}}$ is symmetric, impartial, and weakly unanimous on $\calG_n$.
\end{lemma}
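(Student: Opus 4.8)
The plan is to verify the three properties of $f_s$ one at a time, in each case reducing the assertion about $f_s$ to the corresponding property of $f$ applied \emph{term-by-term} to the relabeled graphs $G_\pi$ appearing in the average that defines $f_s$. Two structural facts about relabeling make this reduction work. First, relabeling preserves indegrees: $\delta^-(u,G)=\delta^-(\pi_u,G_\pi)$ for every $u$ and $\pi$, so $G_\pi$ has a vertex of indegree $n-1$ precisely when $G$ does. Second, relabeling commutes with permutations through the composition identity $(G_\sigma)_\pi = G_{\pi\circ\sigma}$, which follows at once from $E_\pi=\{(\pi_u,\pi_w):(u,w)\in E\}$ by tracking each edge through the two relabelings.

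For symmetry, I would fix a permutation $\sigma$ and a vertex $v$ and compute, using the composition identity together with $\pi_{\sigma_v}=(\pi\circ\sigma)_v$,
\[
(f_s(G_\sigma))_{\sigma_v} = \frac{1}{n!}\sum_{\pi\in\calS_n}\bigl(f((G_\sigma)_\pi)\bigr)_{\pi_{\sigma_v}} = \frac{1}{n!}\sum_{\pi\in\calS_n}\bigl(f(G_{\pi\circ\sigma})\bigr)_{(\pi\circ\sigma)_v}.
\]
Substituting $\rho=\pi\circ\sigma$, which ranges over all of $\calS_n$ as $\pi$ does (right multiplication by $\sigma$ being a bijection of $\calS_n$), the right-hand side becomes $\frac{1}{n!}\sum_{\rho\in\calS_n}(f(G_\rho))_{\rho_v}=(f_s(G))_v$, which is exactly the symmetry condition.

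For impartiality and weak unanimity the arguments are analogous averaging arguments. If $G$ and $G'$ differ only in the outgoing edges of $v$, then since $\pi$ is a bijection the edges of $G_\pi$ leaving $\pi_v$ are exactly the images of the edges of $G$ leaving $v$; hence $G_\pi$ and $G'_\pi$ differ only in the outgoing edges of $\pi_v$, and impartiality of $f$ gives $(f(G_\pi))_{\pi_v}=(f(G'_\pi))_{\pi_v}$ for every $\pi$. Averaging over $\pi$ yields $(f_s(G))_v=(f_s(G'))_v$. For weak unanimity, assume $G$ has a vertex of indegree $n-1$; then for each fixed $\pi$ so does $G_\pi$, and weak unanimity of $f$ gives $\sum_{u':\delta^-(u',G_\pi)\geq 1}(f(G_\pi))_{u'}\geq 1$. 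Re-indexing by $u'=\pi_u$ and using $\delta^-(\pi_u,G_\pi)=\delta^-(u,G)$ rewrites this as $\sum_{u:\delta^-(u,G)\geq 1}(f(G_\pi))_{\pi_u}\geq 1$; averaging the left-hand side over $\pi$ gives $\sum_{u:\delta^-(u,G)\geq 1}(f_s(G))_u\geq 1$.

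I expect the only genuinely delicate point to be the bookkeeping with the permutation convention in the symmetry step: one must confirm that $(G_\sigma)_\pi$ equals $G_{\pi\circ\sigma}$ rather than $G_{\sigma\circ\pi}$, and that the resulting substitution $\rho=\pi\circ\sigma$ is indeed a bijection of $\calS_n$ so that the sum recovers the definition of $f_s(G)$. Once the invariance of indegrees and of the single-vertex-difference relation under relabeling is recorded, the impartiality and weak-unanimity parts are routine.
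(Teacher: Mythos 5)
Your proof is correct and takes essentially the same approach as the paper's: each property of $f_s$ is reduced term-by-term to the corresponding property of $f$ applied to the relabeled graphs $G_\pi$, and the conclusion follows by averaging over $\pi$. You are in fact slightly more complete than the paper, whose appendix proof verifies only impartiality and weak unanimity and leaves the symmetry computation (your identity $(G_\sigma)_\pi = G_{\pi\circ\sigma}$ together with the change of variable $\rho=\pi\circ\sigma$) implicit.
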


We are now ready to state our axiomatic impossibility result, which can be seen as a stronger version of that of \citeauthor{holzman2013impartial} for the case of unbounded outdegree. Both lower bounds follow easily from this result.
\begin{lemma}
\label{lem:imposibility-selection}
For every $n\in\NN,\ n\geq 2$, there exists no randomized selection mechanism $f$ satisfying impartiality and weak unanimity on $\calG^T_n$.
\end{lemma}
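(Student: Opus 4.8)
The plan is to reduce to symmetric mechanisms and then exhibit a single linear dependence among the equations forced by impartiality and weak unanimity that no assignment of selection probabilities can satisfy.

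By \autoref{lem:symmetry-axiom} I may assume $f$ is symmetric. Since any permutation of the vertices within one part of the defining ordered partition is an automorphism of the graph, symmetry forces all vertices of a part to receive the same probability; for $G\in\calG^T_n$ with partition $(S_1,\dots,S_{r})$, $r=r(G)$, and sizes $s_i=(s(G))_i$, write $p_i(G)$ for this common value on part $i$. The first observation is that every $G\in\calG^T_n$ contains a vertex of indegree $n-1$, namely any vertex of the last part $S_r$, so weak unanimity applies to every graph in the class. As the total probability is at most $1$, it must in fact equal $1$ and must avoid the unique possible vertex of indegree $0$, which exists exactly when $s_1=1$. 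This yields, for every $G$, the equation $\sum_{i=1}^{r} s_i\,p_i(G)=1$ together with $p_1(G)=0$ whenever $s_1=1$.

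Next I would convert the transition structure into identities. If there is a $j$-transition from $G$ to $G'$, then the single vertex $w$ of $(S(G))_{j}$ becomes a vertex of part $j-1$ in $G'$, the two graphs differ only in the outgoing edges of $w$, and $w$ keeps the same indegree $s_1+\dots+s_{j-1}$; impartiality therefore gives $(f(G))_w=(f(G'))_w$, i.e. $p_j(G)=p_{j-1}(G')$. The heart of the argument is then a Farkas-style certificate: a fixed rational linear combination of all these equations in which every variable $p_i(G)$ cancels while the constant right-hand sides do not. I would give the total-probability equation of $G$ the multiplier $(-1)^{r(G)}\lambda_G$ and choose the multipliers of the transition identities and of the equations $p_1(G)=0$ to cancel the remaining occurrences of each variable. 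The algebraic fact making this work is that merging the singleton part $j$ into part $j-1$ multiplies the number of labelled copies by the size of the enlarged part, namely $\lambda_G=s_{j-1}(G')\,\lambda_{G'}$; together with the sign $(-1)^{r(G)}$ and the fact that a transition lowers $r$ by one, this makes the source-side and target-side requirements on each transition multiplier coincide. Since every variable cancels, the combination evaluates to $0$ on any solution, whereas its right-hand side is $\sum_{G\in\calG^T_n}(-1)^{r(G)}\lambda_G$, which is congruent modulo $2$ to $\sum_{G\in\calG^T_n}\lambda_G$ and hence, by \autoref{lem:odd-graphs}, odd and in particular nonzero. This contradiction proves the lemma.

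I expect the main obstacle to be the consistency of the cancellation. Each graph participates in several transitions — one incoming transition for each part of size at least two and one outgoing transition for each singleton part at an index at least two — so the single weight $(-1)^{r(G)}\lambda_G$ must simultaneously reconcile the demands coming from all of them, as well as from the equation $p_1(G)=0$ for the indegree-zero vertex. Verifying this requires a careful case analysis according to the position and size of each part, in particular the first and last parts, and repeated use of the multiplicative identity for $\lambda_G$ under merging. Once the exact cancellation is established, the nonvanishing of the residual sum is immediate from \autoref{lem:odd-graphs}.
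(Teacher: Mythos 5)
Your proposal is correct and follows essentially the same route as the paper: reduce to symmetric mechanisms via \autoref{lem:symmetry-axiom}, derive the transition identities $p_j(G)=p_{j-1}(G')$ from impartiality, weight each graph's constraint by $\lambda_G$ with sign determined by the parity of $r(G)$ so that the identity $\lambda_G=(s(G'))_{j-1}\lambda_{G'}$ makes everything cancel, and contradict the oddness of $\sum_{G\in\calG^T_n}\lambda_G$ from \autoref{lem:odd-graphs}. The only (cosmetic) difference is that you collapse weak unanimity and the total-probability bound into exact equations before forming the certificate, whereas the paper keeps them as two families of inequalities and chooses the signs at the end; the cancellation you describe is precisely the paper's pairing of terms across the bipartition of $\calH_n$.
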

\begin{proof}
Let $n\in \NN,\ n\geq 2$ and suppose that there exists a randomized selection mechanism $f$ satisfying impartiality and weak unanimity on $\calG_n$.
Since we can assume symmetry due to \autoref{lem:symmetry-axiom}, for each graph $G \in \calG^T_n$ we have that for every $i\in [r(G)]$ and every $u,v\in (S(G))_i$, $(f(G))_u=(f(G))_v$, and thus we denote this value simply as $(f(G))_i$. 
We consider for the proof an undirected graph $\calH_n=(\calG^T_n, \calF)$, such that for every pair of graphs $G,G'\in \calG^T_n$ we have that $\{G,G'\}\in \calF$ if and only if there is a transition from $G$ to $G'$ or from $G'$ to $G$. 
By definition of a transition, for each $\{G,G'\}\in \calF$ we have $|r(G)-r(G')|=1$. 
Therefore, $\calH_n$ is bipartite with partition $(L_n, R_n)$ where $L_n=\{G\in \calG^T_n:  r(G) \text{ is even}\}$ and $R_n=\{G\in \calG^T_n:  r(G) \text{ is odd}\}$.
\autoref{fig:graphs-of-graphs-2-3-4} depicts the graphs $\calH_2,\ \calH_3$, and $\calH_4$.
For each graph $G \in \calG^T_n$, we define $i(G)=2$ if $(s(G))_1=1$ and $i(G)=1$ otherwise.
\begin{figure}[t]
\centering

\begin{tikzpicture}[scale=0.85]

\Vertex[color=white, size=1.1, label={(1,1)}]{A}

\Vertex[x=3, color=white, size=1.1, label={(2)}]{B}

\Edge[Direct, color=black, lw=1pt, label=2](A)(B)

\Vertex[x=5.5, y=.8, color=white, size=1.1, label={(1,2)}]{C}
\Vertex[x=5.5, y=-.8, color=white, size=1.1, label={(2,1)}]{D}

\Vertex[x=8.5, y=.8, color=white, size=1.1, label={(1,1,1)}]{E}
\Vertex[x=8.5, y=-.8, color=white, size=1.1, label={(3)}]{F}

\Edge[Direct, color=black, lw=1pt, label=3](E)(C)
\Edge[Direct, color=black, lw=1pt, label=2](E)(D)
\Edge[Direct, color=black, lw=1pt, label=2](D)(F)

\Vertex[x=11, y=2.4, color=white, size=1.1, label={(1,1,1,1)}]{G}
\Vertex[x=11, y=.8, color=white, size=1.1, label={(1,3)}]{H}
\Vertex[x=11, y=-.8, color=white, size=1.1, label={(2,2)}]{I}
\Vertex[x=11, y=-2.4, color=white, size=1.1, label={(3,1)}]{J}

\Vertex[x=14, y=2.4, color=white, size=1.1, label={(1,1,2)}]{K}
\Vertex[x=14, y=.8, color=white, size=1.1, label={(1,2,1)}]{L}
\Vertex[x=14, y=-.8, color=white, size=1.1, label={(2,1,1)}]{M}
\Vertex[x=14, y=-2.4, color=white, size=1.1, label={(4)}]{N}

\Edge[Direct, color=black, lw=1pt, label=4](G)(K)
\Edge[Direct, color=black, lw=1pt, label=3](G)(L)
\Edge[Direct, color=black, lw=1pt, label=2, position={above left=5mm}](G)(M)
\Edge[Direct, color=black, lw=1pt, label=3, position={right=5mm}](L)(H)
\Edge[Direct, color=black, lw=1pt, label=2, position={below left=5mm}](K)(I)
\Edge[Direct, color=black, lw=1pt, label=3](M)(I)
\Edge[Direct, color=black, lw=1pt, label=2](M)(J)
\Edge[Direct, color=black, lw=1pt, label=2](J)(N)

\end{tikzpicture}

\caption{Directed versions of $\calH_2,\ \calH_3$, and $\calH_4$. For $n\in \{2,3,4\}$, each graph $G\in \calG^T_n$ is represented by the tuple $s(G)$ and each edge $(G,G')$ labeled with $j\in \NN$ represents a $j$-transition from $G$ to $G'$.}
\label{fig:graphs-of-graphs-2-3-4}
\end{figure}

Let $G=(N,E)$ and $G'=(N,E')$ be two graphs in $\calG^T_n$ such that there is a $j$-transition from $G$ to $G'$ for some $j\in\{2,\ldots,r(G)\}$.
Denoting by $v$ the unique vertex in $(S(G))_{j}$, which is also in $S(G')_{j-1}$, we have that $E\setminus (\{v\}\times N) = E'\setminus (\{v\}\times N)$, and since $f$ is impartial, $(f(G))_{j} = (f(G'))_{j-1}$. 
Therefore,
\begingroup
\allowdisplaybreaks
\begin{align*}
    \lambda_{G'} \cdot (s(G'))_{j-1} \cdot (f(G'))_{j-1} & = \frac{n!}{\prod_{i=1}^{r(G')}{(s(G'))_i!}} (s(G'))_{j-1} (f(G'))_{j-1}\\
    & = \frac{n!}{((s(G'))_{j-1}-1)!\prod_{i\in [r(G')]\setminus \{j-1\}}{(s(G'))_i!}} (f(G'))_{j-1}\\
    & = \frac{n!}{((s(G))_{j-1})!\prod_{i\in [r(G)]\setminus \{j-1,j\}}{(s(G))_i!}} (f(G))_{j}\\
    & = \frac{n!}{\prod_{i\in [r(G)]}{(s(G))_i!}} (s(G))_{j} \cdot (f(G))_{j}\\
    & = \lambda_G \cdot (s(G))_{j}\cdot  (f(G))_{j}.
\end{align*}
\endgroup
The first two and last two equalities are obtained by replacing known expressions and simple calculations. 
The third equality comes from the fact that $(s(G'))_i=(s(G))_i$ for every $i\leq j-2$, $(s(G'))_{j-1}=(s(G))_{j-1}+1$, and $(s(G'))_i=(s(G))_{i+1}$ for every $i\geq j$. 
Moreover, observe that for each $G\in \calG^T_n$ and for each $j\in \{i(G),\ldots,r(G)\}$, there exists exactly one $G'\in \calG^T_n$ such that there is a $j$-transition from $G$ to $G'$ (if $(s(G))_j=1$) or a $(j+1)$-transition from $G'$ to $G$ (if $(s(G))_j\geq 2$). Therefore,
\begin{equation}\label{eq:sums_bipartition}
    \sum_{G\in L_n}\sum_{i=i(G)}^{r(G)}{\lambda_G \cdot (s(G))_{i} \cdot (f(G))_{i}} = \sum_{G\in R_n}\sum_{i=i(G)}^{r(G)}{\lambda_G \cdot (s(G))_{i} \cdot (f(G))_{i}}.
\end{equation}
We now derive two important sets of inequalities.
From the fact that $f$ is a selection mechanism, for each $G\in \calG^T_n$ we have that
\begin{equation}
    \sum_{i=i(G)}^{r(G)}{(s(G))_{i} \cdot (f(G))_{i}}\leq 1,\label{eq:probs-leq-1}
\end{equation}
where replacing 1 by $i(G)$ on the left-hand side is possible since it can only make the sum smaller and thus keeps the inequality.
On the other hand, from the fact that $f$ satisfies weak unanimity, for each $G\in \calG^T_n$ we have that
\begin{equation}
-\sum_{i=i(G)}^{r(G)}{(s(G))_{i} \cdot (f(G))_{i}}\leq -1,\label{eq:probs-geq-1}
\end{equation}
where we can omit the term for $i=1$ on the left-hand side whenever $(s(G))_1=1$, because the vertex in $(S(G))_1$ has indegree 0 in such case. 

In order to cancel out the left-hand sides of the previous inequalities, we assign a sign to each part of the bipartition of $\calH_n$. Let $\text{sign}(L_n), \text{sign}(R_n)\in \{-1,1\}$ with $\text{sign}(L_n)\cdot \text{sign}(R_n) = -1$, and let $\text{sign}(G)=\text{sign}(L_n)$ for each $G\in L_n$ and $\text{sign}(G)=\text{sign}(R_n)$ for each $G\in R_n$. 
Summing up the inequalities \eqref{eq:probs-leq-1} multiplied by $\lambda_G$ for every $G\in \calG^T_n$ with $\text{sign}(G)=1$ and the inequalities \eqref{eq:probs-geq-1} multiplied by $\lambda_G$ for every $G\in \calG^T_n$ with $\text{sign}(G)=-1$, we obtain
\[
    \sum_{G\in \calG^T_n}\sum_{i=i(G)}^{r(G)}{\text{sign}(G) \cdot \lambda_G \cdot (s(G))_{i} \cdot (f(G))_{i}} \leq \sum_{G\in \calG^T_n} \text{sign}(G)\cdot \lambda_G.
\]
By expression~\eqref{eq:sums_bipartition} the left-hand side is equal to~$0$. However, we know from \autoref{lem:odd-graphs} that $\sum_{G\in \calG^T_n}\lambda_G$ is odd, so the right-hand side cannot be equal to $0$. For one of the two possible choices of $\text{sign}(L_n)$ and $\text{sign}(R_n)$ the right-hand side is negative, and we obtain a contradiction. We conclude that a randomized selection mechanism $f$ satisfying impartiality and weak unanimity on $\calG_n$ cannot exist.
\end{proof}

As an illustration, the counterexamples constructed for $n=3$ and $n=4$ are shown in \autoref{fig:counterexample-weak-unanimity}.

\begin{figure}[t]
\centering
\begin{tikzpicture}

\Vertex[x=1, y=6.73, Math, shape=circle, color=black, size=.05]{A}
\Vertex[Math, y=5, shape=circle, color=black, size=.05, label=p_1, fontscale=1, position=below left, distance=-.16cm]{B}
\Vertex[x=2, y=5, Math, shape=circle, color=black, size=.05, label=p_2, fontscale=1, position=below right, distance=-.16cm]{C}

\Edge[Direct, color=black, lw=1pt](A)(B)
\Edge[Direct, color=black, lw=1pt](A)(C)
\Edge[Direct, color=black, lw=1pt](B)(C)

\Vertex[x=5, y=6.73, Math, shape=circle, color=black, size=.05]{D}
\Vertex[x=4, y=5, Math, shape=circle, color=black, size=.05, label=p_2, fontscale=1, position=below left, distance=-.16cm]{E}
\Vertex[x=6, y=5, Math, shape=circle, color=black, size=.05, label=p_2, fontscale=1, position=below right, distance=-.16cm]{F}

\Edge[Direct, color=black, lw=1pt](D)(E)
\Edge[Direct, color=black, lw=1pt](D)(F)
\Edge[Direct, color=black, lw=1pt, bend=-20](E)(F)
\Edge[Direct, color=black, lw=1pt, bend=-20](F)(E)

\Vertex[x=9, y=6.73, Math, shape=circle, color=black, size=.05, label=p_1, fontscale=1, position=left, distance=-.1cm]{G}
\Vertex[x=8, y=5, Math, shape=circle, color=black, size=.05, label=p_1, fontscale=1, position=below left, distance=-.16cm]{H}
\Vertex[x=10, y=5, Math, shape=circle, color=black, size=.05, label=p_3, fontscale=1, position=below right, distance=-.16cm]{I}

\Edge[Direct, color=black, lw=1pt, bend=-20](G)(H)
\Edge[Direct, color=black, lw=1pt](G)(I)
\Edge[Direct, color=black, lw=1pt, bend=-20](H)(G)
\Edge[Direct, color=black, lw=1pt](H)(I)

\Vertex[x=13, y=6.73, Math, shape=circle, color=black, size=.05, label=p_3, fontscale=1, position=left, distance=-.1cm]{J}
\Vertex[x=12, y=5, Math, shape=circle, color=black, size=.05, label=p_3, fontscale=1, position=below left, distance=-.16cm]{K}
\Vertex[x=14, y=5, Math, shape=circle, color=black, size=.05, label=p_3, fontscale=1, position=below right, distance=-.16cm]{L}

\Edge[Direct, color=black, lw=1pt, bend=-20](J)(K)
\Edge[Direct, color=black, lw=1pt, bend=-20](J)(L)
\Edge[Direct, color=black, lw=1pt, bend=-20](K)(J)
\Edge[Direct, color=black, lw=1pt, bend=-20](K)(L)
\Edge[Direct, color=black, lw=1pt, bend=-20](L)(J)
\Edge[Direct, color=black, lw=1pt, bend=-20](L)(K)

\Text[x=1, y=4]{$p_1+p_2\geq 1\ (-6)$}
\Text[x=5, y=4]{$2p_2\leq 1\ (3)$}
\Text[x=9, y=4]{$2p_1+p_3\leq 1\ (3)$}
\Text[x=13, y=4]{$3p_3\geq 1\ (-1)$}


\Vertex[y=2, Math, shape=circle, color=black, size=.05]{A}
\Vertex[x=2, y=2, Math, shape=circle, color=black, size=.05, label=p_1, fontscale=1, position=above right, distance=-.16cm]{B}
\Vertex[Math, shape=circle, color=black, size=.05, label=p_2, fontscale=1, position=below left, distance=-.16cm]{C}
\Vertex[x=2, Math, shape=circle, color=black, size=.05, label=p_3, fontscale=1, position=below right, distance=-.16cm]{D}

\Edge[Direct, color=black, lw=1pt](A)(B)
\Edge[Direct, color=black, lw=1pt](A)(C)
\Edge[Direct, color=black, lw=1pt](A)(D)
\Edge[Direct, color=black, lw=1pt](B)(C)
\Edge[Direct, color=black, lw=1pt](B)(D)
\Edge[Direct, color=black, lw=1pt](C)(D)

\Vertex[x=4, y=2, Math, shape=circle, color=black, size=.05]{E}
\Vertex[x=6, y=2, Math, shape=circle, color=black, size=.05, label=p_4, fontscale=1, position=above right, distance=-.16cm]{F}
\Vertex[x=4, Math, shape=circle, color=black, size=.05, label=p_3, fontscale=1, position=below left, distance=-.16cm]{G}
\Vertex[x=6, Math, shape=circle, color=black, size=.05, label=p_3, fontscale=1, position=below right, distance=-.16cm]{H}

\Edge[Direct, color=black, lw=1pt](E)(F)
\Edge[Direct, color=black, lw=1pt](E)(G)
\Edge[Direct, color=black, lw=1pt](E)(H)
\Edge[Direct, color=black, lw=1pt](F)(G)
\Edge[Direct, color=black, lw=1pt](F)(H)
\Edge[Direct, color=black, lw=1pt, bend=-20](G)(H)
\Edge[Direct, color=black, lw=1pt, bend=-20](H)(G)

\Vertex[x=8, y=2, Math, shape=circle, color=black, size=.05]{I}
\Vertex[x=10, y=2, Math, shape=circle, color=black, size=.05, label=p_2, fontscale=1, position=above right, distance=-.16cm]{J}
\Vertex[x=8, Math, shape=circle, color=black, size=.05, label=p_2, fontscale=1, position=below left, distance=-.16cm]{K}
\Vertex[x=10, Math, shape=circle, color=black, size=.05, label=p_5, fontscale=1, position=below right, distance=-.16cm]{L}

\Edge[Direct, color=black, lw=1pt](I)(J)
\Edge[Direct, color=black, lw=1pt](I)(K)
\Edge[Direct, color=black, lw=1pt](I)(L)
\Edge[Direct, color=black, lw=1pt, bend=-20](J)(K)
\Edge[Direct, color=black, lw=1pt](J)(L)
\Edge[Direct, color=black, lw=1pt, bend=-20](K)(J)
\Edge[Direct, color=black, lw=1pt](K)(L)

\Vertex[x=12, y=2, Math, shape=circle, color=black, size=.05]{M}
\Vertex[x=14, y=2, Math, shape=circle, color=black, size=.05, label=p_5, fontscale=1, position=above right, distance=-.16cm]{N}
\Vertex[x=12, Math, shape=circle, color=black, size=.05, label=p_5, fontscale=1, position=below left, distance=-.16cm]{O}
\Vertex[x=14, Math, shape=circle, color=black, size=.05, label=p_5, fontscale=1, position=below right, distance=-.16cm]{P}

\Edge[Direct, color=black, lw=1pt](M)(N)
\Edge[Direct, color=black, lw=1pt](M)(O)
\Edge[Direct, color=black, lw=1pt](M)(P)
\Edge[Direct, color=black, lw=1pt, bend=-20](N)(O)
\Edge[Direct, color=black, lw=1pt, bend=-20](N)(P)
\Edge[Direct, color=black, lw=1pt, bend=-20](O)(N)
\Edge[Direct, color=black, lw=1pt, bend=-20](O)(P)
\Edge[Direct, color=black, lw=1pt, bend=-20](P)(N)
\Edge[Direct, color=black, lw=1pt, bend=-20](P)(O)

\Text[x=1, y=-1]{$p_1+p_2+p_3\geq 1\ (-24)$}
\Text[x=5, y=-1]{$2p_3+p_4\leq 1\ (12)$}
\Text[x=9, y=-1]{$2p_2+p_5\leq 1\ (12)$}
\Text[x=13, y=-1]{$3p_5\geq 1\ (-4)$}

\Vertex[y=-2.5, Math, shape=circle, color=black, size=.05, label=p_1, fontscale=1, position=above left, distance=-.16cm]{AA}
\Vertex[x=2, y=-2.5, Math, shape=circle, color=black, size=.05, label=p_1, fontscale=1, position=above right, distance=-.16cm]{AB}
\Vertex[y=-4.5, Math, shape=circle, color=black, size=.05, label=p_6, fontscale=1, position=below left, distance=-.16cm]{AC}
\Vertex[x=2, y=-4.5, Math, shape=circle, color=black, size=.05, label=p_7, fontscale=1, position=below right, distance=-.16cm]{AD}

\Edge[Direct, color=black, lw=1pt, bend=-20](AA)(AB)
\Edge[Direct, color=black, lw=1pt](AA)(AC)
\Edge[Direct, color=black, lw=1pt](AA)(AD)
\Edge[Direct, color=black, lw=1pt, bend=-20](AB)(AA)
\Edge[Direct, color=black, lw=1pt](AB)(AC)
\Edge[Direct, color=black, lw=1pt](AB)(AD)
\Edge[Direct, color=black, lw=1pt](AC)(AD)

\Vertex[x=4, y=-2.5, Math, shape=circle, color=black, size=.05, label=p_4, fontscale=1, position=above left, distance=-.16cm]{AE}
\Vertex[x=6, y=-2.5, Math, shape=circle, color=black, size=.05, label=p_4, fontscale=1, position=above right, distance=-.16cm]{AF}
\Vertex[x=4, y=-4.5, Math, shape=circle, color=black, size=.05, label=p_7, fontscale=1, position=below left, distance=-.16cm]{AG}
\Vertex[x=6, y=-4.5, Math, shape=circle, color=black, size=.05, label=p_7, fontscale=1, position=below right, distance=-.16cm]{AH}

\Edge[Direct, color=black, lw=1pt, bend=-20](AE)(AF)
\Edge[Direct, color=black, lw=1pt](AE)(AG)
\Edge[Direct, color=black, lw=1pt](AE)(AH)
\Edge[Direct, color=black, lw=1pt, bend=-20](AF)(AE)
\Edge[Direct, color=black, lw=1pt](AF)(AG)
\Edge[Direct, color=black, lw=1pt](AF)(AH)
\Edge[Direct, color=black, lw=1pt, bend=-20](AG)(AH)
\Edge[Direct, color=black, lw=1pt, bend=-20](AH)(AG)

\Vertex[x=8, y=-2.5, Math, shape=circle, color=black, size=.05, label=p_6, fontscale=1, position=above left, distance=-.16cm]{AI}
\Vertex[x=10, y=-2.5, Math, shape=circle, color=black, size=.05, label=p_6, fontscale=1, position=above right, distance=-.16cm]{AJ}
\Vertex[x=8, y=-4.5, Math, shape=circle, color=black, size=.05, label=p_6, fontscale=1, position=below left, distance=-.16cm]{AK}
\Vertex[x=10, y=-4.5, Math, shape=circle, color=black, size=.05, label=p_8, fontscale=1, position=below right, distance=-.16cm]{AL}

\Edge[Direct, color=black, lw=1pt, bend=-20](AI)(AJ)
\Edge[Direct, color=black, lw=1pt, bend=-20](AI)(AK)
\Edge[Direct, color=black, lw=1pt](AI)(AL)
\Edge[Direct, color=black, lw=1pt, bend=-20](AJ)(AI)
\Edge[Direct, color=black, lw=1pt, bend=-20](AJ)(AK)
\Edge[Direct, color=black, lw=1pt](AJ)(AL)
\Edge[Direct, color=black, lw=1pt, bend=-20](AK)(AI)
\Edge[Direct, color=black, lw=1pt, bend=-20](AK)(AJ)
\Edge[Direct, color=black, lw=1pt](AK)(AL)

\Vertex[x=12, y=-2.5, Math, shape=circle, color=black, size=.05, label=p_8, fontscale=1, position=above left, distance=-.16cm]{AM}
\Vertex[x=14, y=-2.5, Math, shape=circle, color=black, size=.05, label=p_8, fontscale=1, position=above right, distance=-.16cm]{AN}
\Vertex[x=12, y=-4.5, Math, shape=circle, color=black, size=.05, label=p_8, fontscale=1, position=below left, distance=-.16cm]{AO}
\Vertex[x=14, y=-4.5, Math, shape=circle, color=black, size=.05, label=p_8, fontscale=1, position=below right, distance=-.16cm]{AP}

\Edge[Direct, color=black, lw=1pt, bend=-20](AM)(AN)
\Edge[Direct, color=black, lw=1pt, bend=-20](AM)(AO)
\Edge[Direct, color=black, lw=1pt, bend=-20](AM)(AP)
\Edge[Direct, color=black, lw=1pt, bend=-20](AN)(AM)
\Edge[Direct, color=black, lw=1pt, bend=-20](AN)(AO)
\Edge[Direct, color=black, lw=1pt, bend=-20](AN)(AP)
\Edge[Direct, color=black, lw=1pt, bend=-20](AO)(AM)
\Edge[Direct, color=black, lw=1pt, bend=-20](AO)(AN)
\Edge[Direct, color=black, lw=1pt, bend=-20](AO)(AP)
\Edge[Direct, color=black, lw=1pt, bend=-20](AP)(AM)
\Edge[Direct, color=black, lw=1pt, bend=-20](AP)(AN)
\Edge[Direct, color=black, lw=1pt, bend=-20](AP)(AO)

\Text[x=1, y=-5.5]{$2p_1+p_6+p_7\leq 1\ (12)$}
\Text[x=5, y=-5.5]{$2p_4+2p_7\geq 1\ (-6)$}
\Text[x=9, y=-5.5]{$3p_6+p_8\geq 1\ (-4)$}
\Text[x=13, y=-5.5]{$4p_8\leq 1\ (1)$}

\end{tikzpicture}
\caption{Counterexamples for the existence of a randomized selection mechanism satisfying impartiality and weak unanimity for $n=3$ and for $n=4$. The graphs are depicted with the impartial probabilities assigned to each vertex with indegree at least one.
The inequalities included below each of them are obtained from imposing either that the probabilities sum up to at most 1---since at most one vertex is to be selected---or sum up to at least 1---due to weak unanimity. 
The values in parentheses show that the inequalities written for each set of graphs constitute an infeasible system because if one multiplies the corresponding inequality by them and sums the resulting inequalities, the contradiction $0\leq -1$ is achieved.}
\label{fig:counterexample-weak-unanimity}
\end{figure}

We are now ready to prove Theorems~\ref{thm:additive-lb} and~\ref{thm:additive-lb-abstentions}.
In order to be able to apply \autoref{lem:imposibility-selection} to deterministic mechanisms, we need a simple definition.
For a given deterministic selection mechanism $f\colon\calG_n\to 2^N$, let $f_{\text{rand}}\colon\calG_n\to [0,1]^n$ be the randomized selection mechanism such that $(f_{\text{rand}}(G))_v=1$ if $v\in f(G)$ and $(f_{\text{rand}}(G))_v=0$ otherwise.
It is then easy to see that whenever $f$ is impartial, $f_{\text{rand}}$ is impartial as well.
\begin{proof}[Proof of \autoref{thm:additive-lb}]

\begin{algorithm}[t]
	\SetAlgoNoLine
	\KwIn{Digraph $G=(N,E)\in \calG_{k+1}$, mechanism $f_{k}$ and integer $n\geq k+1$.}
	\KwOut{Set $S$ of selected vertices with $|S|\leq 1$.}
        $N' \gets \bigcup_{j=1}^{n-k-1}\{u_j\}$\;
        $H \gets (N\cup N', E)$\;
	{\bf return} $f_{k}(H)$
	\caption{Selection mechanism $f$ based on $f_{k}$.}
	\label{alg:add-isolated-vertices}
\end{algorithm}

The result is straightforward when $n=1$. Let $n\geq 2$ and $k \leq n-1$, and suppose that there is an impartial deterministic selection mechanism $f_{k}$ with
\[
    \Delta(G)-\delta^-(f_k(G), G) \leq k -1
\]
for every $G\in \calG_n(k)$.
We define the deterministic selection mechanism $f$ based on $f_{k}$ as specified in \autoref{alg:add-isolated-vertices}.
This mechanism receives a graph $G=(N,E)$ in $\calG_{k+1}$ and adds new vertices, if necessary to complete $n$ vertices. 
These vertices are isolated, in the sense that the set of edges in this new graph $H$ remains the same.
For every input graph $G$, the graph constructed belongs to $\calG_n(k)$, so the mechanism finally applies $f_{k}$.
We claim that $f_{\text{rand}}$ is impartial and weakly unanimous on $\calG^T_{k+1}$.
To see that $f_{\text{rand}}$ is weakly unanimous, observe that
\[
    \delta^-(v,H)=\delta^-(v,G) \text{ for every } v\in N, \text{ and } \delta^-(v,H)=0 \text{ for every } v\not\in N.
\]
For each $G\in \calG^T_{k+1}$, we have that $\Delta(G)=k$ and thus $\Delta(H)=k$. 
From the fact that $f_{k}$ is $(k-1)$-additive on $\calG_n(k)$, we conclude that $f$ returns a vertex $v^*$ of $H$ with $\delta^-(v^*,H)\geq k - (k-1) = 1$.
This implies, in the first place, that $v^*\in N$, thus $f$ is indeed a selection mechanism on $\calG^T_{k+1}$. Furthermore, we have $\delta^-(v^*,G)\geq 1$, thus
\[
\sum_{v\in G:  \delta^-(v)\geq 1}(f_{\text{rand}}(G))_v = 1,
\]
\ie $f_{\text{rand}}$ is weakly unanimous on $\calG^T_{k+1}$.
Impartiality of $f_{\text{rand}}$ is straightforward since $f_{k}$ is impartial and the set of edges is not modified in the mechanism.
This contradicts \autoref{lem:imposibility-selection}, so we conclude that mechanism $f_{k}$ cannot exist. 
\end{proof}

\begin{proof}[Proof of \autoref{thm:additive-lb-abstentions}]

\begin{algorithm}[t]
	\SetAlgoNoLine
	\KwIn{Digraph $G=(N,E)\in \calG_{k}$, mechanism $f^+_{k}$ and integer $n\geq k+1$.}
	\KwOut{Set $S$ of selected vertices with $|S|\leq 1$.}
	$N' \gets \bigcup_{j=1}^{n-k}\{u_j\}$\;
	$F \gets E\cup (N'\times N) \cup (\{v\in N:  \delta^+(v,G)=0\} \times \{u_1\})$\;
        $H\gets (N\cup N', F)$\;
	{\bf return} $f^+_{k}(H)$
	\caption{Selection mechanism $f$ based on $f^+_{k}$.}
	\label{alg:add-inneighbors}
\end{algorithm}

The result is straightforward when $n\in \{1,2\}$. Let $n\geq 3$ and $k\leq n-1$, and suppose that there is an impartial deterministic selection mechanism $f^+_{k}$ with
\[
    \Delta(G)-\delta^-(f^+_{k}(G), G) \leq k-2
\]
for every $G\in \calG^+_n(k)$.
We define the deterministic selection mechanism $f$ based on $f^+_{k}$ as specified in \autoref{alg:add-inneighbors}.
This mechanism requires a graph $G$ in $\calG_{k}$ and adds new vertices $N'=\{u_1,\ldots,u_{n-k}\}$, as well as edges from each of these vertices to every vertex of $G$, and edges from every vertex of $G$ with outdegree zero to $u_1$.
We first claim that for every input graph $G$, the graph $H$ constructed in the mechanism belongs to $\calG^+_n(k)$.
Indeed, since $G\in \calG_{k}$ every vertex $v\in N$ satisfies $\delta^+(v,G)\leq k-1$. Moreover, an outgoing edge to $u_1$ is added for every $v$ with $\delta^+(v,G)=0$, thus $1\leq \delta^-(v,H)\leq k-1$ for each $v\in N$.
On the other hand, each node in $N'$ has outdegree $k$.
Therefore, the mechanism is well defined, in the sense that in its last step it applies $f^+_{k}$ to a graph in $\calG^+_n(k)$.
We claim that $f_{\text{rand}}$ is impartial and weakly unanimous on $\calG^T_{k}$, which is a clear contradiction to \autoref{lem:imposibility-selection} and thus implies that mechanism $f^+_k$ cannot exist. We prove the claim in what follows.

For each $G=(N,E)\in \calG^T_{k}$, the set $\{v\in N:  \delta^+(v)=0\}$ is equal to $(S(G))_{r(G)}$ if $(s(G))_{r(G)}=1$, or to the empty set, otherwise. Therefore, for $u_1,\ldots,u_{n-k}$ and $H$ as defined in the mechanism we have that $\delta^-(u_1,H)\leq 1,\ \delta^-(u_j,H)=0$ for every $j\in \{2,\ldots,n-k\}$, and $\delta^-(v,H)=\delta^-(v,G)+n-k$ for every $v\in N$.
Since $\Delta(G)=k-1$ from the definition of the set $\calG^T_{k}$, we have that $\Delta(H)=n-1$.
Using that $f^+_{k}$ is $(k-2)$-additive on $\calG^+_n(k)$, we conclude that $f$ returns a vertex $v^*\in N\cup N'$ with $\delta^-(v^*,H) \geq n-1-(k-2) = n-k+1\geq 2$.
This implies, in the first place, that $v^*\in N$, thus $f$ is indeed a selection mechanism on $\calG^T_{k}$. Furthermore, we have $\delta^-(v^*,G)=\delta^-(v^*,H)-(n-k)\geq 1$, thus
\[
\sum_{v\in G:  \delta^-(v)\geq 1}(f_{\text{rand}}(G))_v = 1,
\]
\ie $f_{\text{rand}}$ is weakly unanimous on $\calG^T_{k}$.

To see that $f$ is impartial on $\calG^T_{k}$, let $G=(N,E), G'=(N,E')\in \calG^T_{k}$ and $\tilde{v}\in N$ be such that $E\setminus (\{\tilde{v}\}\times N) = E'\setminus (\{\tilde{v}\}\times N)$.
Denoting $F$ and $F'$ the edges of the graphs defined in the mechanism $f$ when run with input $G$ and $G'$, respectively, it is enough to show that $F\setminus (\{\tilde{v}\}\times (N\cup N')) = F'\setminus (\{\tilde{v}\}\times (N\cup N'))$, because this would imply 
\[
    f(G)\cap \{\tilde{v}\} = f^+_k(N\cup N',F)\cap \{\tilde{v}\} = f^+_k(N\cup N',F')\cap \{\tilde{v}\} = f(G')\cap \{\tilde{v}\}
\]
where the second equality holds since $f^+_k$ is impartial on $\calG^+_n(k)$ by hypothesis.
Indeed,
\begin{align*}
F\setminus (\{\tilde{v}\}\times (N\cup N')) & = (E\cup (N'\times N) \\
&\qquad \cup (\{v\in N:  \delta^+(v,G)=0\} \times \{u_1\})) \setminus (\{\tilde{v}\}\times (N\cup N'))\\
& = E\setminus (\{\tilde{v}\}\times N) \cup (N'\times N) \\
&\qquad \cup (\{v\in N\setminus \{\tilde{v}\}:  \delta^+(v,G)=0\} \times \{u_1\})\\
& = E'\setminus (\{\tilde{v}\}\times N) \cup (N'\times N) \\
&\qquad \cup (\{v\in N\setminus \{\tilde{v}\}:  \delta^+(v,G')=0\} \times \{u_1\})\\
& = (E'\cup (N'\times N) \\
&\qquad \cup (\{v\in N:  \delta^+(v,G')=0\} \times \{u_1\})) \setminus (\{\tilde{v}\}\times (N\cup N'))\\
& = F'\setminus (\{\tilde{v}\}\times (N\cup N')),
\end{align*}
where the third equality uses the fact that the outgoing edges of every vertex in $(N\cup N')\setminus \{\tilde{v}\}$ are the same in $G$ and $G'$.
This implies that $f_{\text{rand}}$ is impartial as well, concluding the proof of the claim and the proof of the theorem.
\end{proof}

Theorems~\ref{thm:additive-lb} and~\ref{thm:additive-lb-abstentions} provide tight bounds for the approval setting but have very weak implications for plurality. We end with small but nontrivial lower bounds for the latter. 
\begin{theorem}
    \label{thm:additive-plurality-lb}
    Let $n\in \NN$ and let $f$ be an impartial deterministic selection mechanism such that $f$ is $\alpha$-additive on $\calG_n(1)$. Then, $\alpha \geq \alpha_n$, where
    \[
        \alpha_n = \left\{ \begin{array}{ll}
             1 &   \text{if}\ \ n \in \{2,3\}, \\
             2 &   \text{if}\ \ 4\leq n \leq 9, \\
             3 &   \text{if}\ \ n \geq 10. \\
             \end{array}
   \right.
    \]
\end{theorem}

\begin{proof}
Denote $S_1=\{2,3\},\ S_2=\{4,5,\ldots,9\}$, and $S_3=\{10,11,\ldots\}$ for simplicity.
The graphs in \autoref{fig:counterexample-plurality-n2}, \autoref{fig:counterexample-plurality-n4}, and \autoref{fig:counterexample-plurality-n10} show the probabilities that impartial randomized selection mechanisms on $\calG_n(1)$ assign to each vertex for $n\in S_1,\ S_2$, and $S_3$, respectively. Vertices not shown in the figures have no incident edges.
The inequalities below each graph $G\in \calG_n(1)$ are obtained either from (i) imposing that the corresponding mechanism on this set selects with probability $1$ a vertex with indegree at least $\Delta(G)$, $\Delta(G)-1$, or $\Delta(G)-2$ for $n$ in $S_1,\ S_2$, or $S_3$, respectively, or (ii) the fact that the probabilities assigned to the vertices of a single graph add up to at most one.
The values in parentheses show that the inequalities written for each set of graphs constitute an infeasible system because if one multiplies the corresponding inequality by them and sums the resulting inequalities, the contradiction $0\leq -1$ is achieved.
Defining $\alpha_n$ as in the statement of the theorem, this shows that for $n\in \NN$ and for every randomized mechanism $h\colon \calG_n\xrightarrow{} [0,1]^n$, there exists $G=(N,E)\in \calG_n(1)$ such that
\[
    \sum_{v\in N:  \delta^-(v)\geq \Delta-(\alpha_n-1)}(h(G))_v < 1.
\]
\begin{figure}[t]
\centering
\begin{tikzpicture}[scale=0.8]
\Vertex[Math, shape=circle, color=black, size=.05, label=p_1, fontscale=1, position=left, distance=-.1cm]{A}
\Vertex[x=2, Math, shape=circle, color=black, size=.05, label=p_1, fontscale=1, position=right, distance=-.1cm]{B}
\Edge[Direct, color=black, lw=1pt, bend=-20](A)(B)
\Edge[Direct, color=black, lw=1pt, bend=-20](B)(A)
\Vertex[x=5, Math, shape=circle, color=black, size=.05, label=p_1, fontscale=1, position=left, distance=-.1cm]{AA}
\Vertex[x=7, Math, shape=circle, color=black, size=.05,]{AB}
\Edge[Direct, color=black, lw=1pt](AB)(AA)
\Text[x=1, y=-1]{\scriptsize{$2p_1\leq 1\ (1)$}}
\Text[x=6, y=-1]{\scriptsize{$p_1\geq 1\ (-2)$}}
\end{tikzpicture}
\caption{Proof of impossibility of the existence of an impartial randomized selection mechanism selecting vertices with indegree $\Delta(G)$ for each graph $G\in \calG_n(1)$ and $n\in\{2,3\}$. Scaling and summing the inequalities leads to $0\leq -1$, a contradiction.}
\label{fig:counterexample-plurality-n2}
\end{figure}
\begin{figure}[t]
\centering
\begin{tikzpicture}
\Vertex[y=2, Math, shape=circle, color=black, size=.05, label=p_1, fontscale=1, position=above left, distance=-.16cm]{AA}
\Vertex[x=2, y=2, Math, shape=circle, color=black, size=.05, label=p_1, fontscale=1, position=above right, distance=-.16cm]{AB}
\Vertex[Math, shape=circle, color=black, size=.05, label=p_1, fontscale=1, position=below left, distance=-.16cm]{AC}
\Vertex[x=2, Math, shape=circle, color=black, size=.05, label=p_1, fontscale=1, position=below right, distance=-.16cm]{AD}
\Edge[Direct, color=black, lw=1pt](AA)(AC)
\Edge[Direct, color=black, lw=1pt](AC)(AD)
\Edge[Direct, color=black, lw=1pt](AD)(AB)
\Edge[Direct, color=black, lw=1pt](AB)(AA)
\Vertex[x=4, y=2, Math, shape=circle, color=black, size=.05, label=p_1, fontscale=1, position=above left, distance=-.16cm]{BA}
\Vertex[x=6, y=2, Math, shape=circle, color=black, size=.05, label=p_3, fontscale=1, position=above right, distance=-.16cm]{BB}
\Vertex[x=4, Math, shape=circle, color=black, size=.05]{BC}
\Vertex[x=6, Math, shape=circle, color=black, size=.05, label=p_2, fontscale=1, position=below right, distance=-.16cm]{BD}
\Edge[Direct, color=black, lw=1pt](BA)(BD)
\Edge[Direct, color=black, lw=1pt](BC)(BD)
\Edge[Direct, color=black, lw=1pt](BD)(BB)
\Edge[Direct, color=black, lw=1pt](BB)(BA)
\Vertex[x=8, y=2, Math, shape=circle, color=black, size=.05, label=p_2, fontscale=1, position=above left, distance=-.16cm]{CA}
\Vertex[x=10, y=2, Math, shape=circle, color=black, size=.05]{CB}
\Vertex[x=8, Math, shape=circle, color=black, size=.05]{CC}
\Vertex[x=10, Math, shape=circle, color=black, size=.05, label=p_2, fontscale=1, position=below right, distance=-.16cm]{CD}
\Edge[Direct, color=black, lw=1pt, bend=-20](CA)(CD)
\Edge[Direct, color=black, lw=1pt](CC)(CD)
\Edge[Direct, color=black, lw=1pt, bend=-20](CD)(CA)
\Edge[Direct, color=black, lw=1pt](CB)(CA)
\Vertex[x=12, y=2, Math, shape=circle, color=black, size=.05]{DA}
\Vertex[x=14, y=2, Math, shape=circle, color=black, size=.05, label=p_3, fontscale=1, position=above right, distance=-.16cm]{DB}
\Vertex[x=12, Math, shape=circle, color=black, size=.05]{DC}
\Vertex[x=14, Math, shape=circle, color=black, size=.05, label=p_4, fontscale=1, position=below right, distance=-.16cm]{DD}
\Edge[Direct, color=black, lw=1pt](DA)(DD)
\Edge[Direct, color=black, lw=1pt](DC)(DD)
\Edge[Direct, color=black, lw=1pt, bend=-20](DD)(DB)
\Edge[Direct, color=black, lw=1pt, bend=-20](DB)(DD)
\Text[x=1, y=-1]{\scriptsize{$4p_1\leq 1\ (1)$}}
\Text[x=5, y=-1]{\scriptsize{$p_1+p_2+p_3\geq 1\ (-4)$}}
\Text[x=9, y=-1]{\scriptsize{$2p_2\leq 1\ (2)$}}
\Text[x=13, y=-.7]{\scriptsize{$p_3+p_4\leq 1\ (4)$}}
\Text[x=13, y=-1.3]{\scriptsize{$p_4\geq 1\ (-4)$}}
\end{tikzpicture}
\caption{ Proof of impossibility of the existence of an impartial randomized selection mechanism selecting vertices with indegree at least $\Delta(G)-1$ for each graph $G\in \calG_n(1)$ and $4\leq n \leq 9$. Scaling and summing the inequalities leads to $0\leq -1$, a contradiction.}
\label{fig:counterexample-plurality-n4}
\end{figure}
\begin{figure}[!t]
\centering
\begin{tikzpicture}
\Vertex[Math, shape=circle, color=black, size=.05, label=p_2, fontscale=1, position=above left, distance=-.16cm]{AA}
\Vertex[x=1, Math, shape=circle, color=black, size=.05, label=p_2, fontscale=1, position=above right, distance=-.16cm]{AB}
\Vertex[x=1, y=1, Math, shape=circle, color=black, size=.05, label=p_1, fontscale=1, position=below right, distance=-.16cm]{AC}
\Vertex[y=1, Math, shape=circle, color=black, size=.05, label=p_1, fontscale=1, position=below left, distance=-.16cm]{AD}
\Vertex[x=-.8, y=-.5, Math, shape=circle, color=black, size=.05]{AE}
\Vertex[x=1.8, y=-.5, Math, shape=circle, color=black, size=.05]{AF}
\Vertex[x=1.8, y=1.5, Math, shape=circle, color=black, size=.05]{AG}
\Vertex[x=-.8, y=1.5, Math, shape=circle, color=black, size=.05]{AH}
\Vertex[x=.5, y=-.866, Math, shape=circle, color=black, size=.05, label=p_2, fontscale=1, position=left, distance=-.08cm]{AI}
\Vertex[x=.5, y=-1.866, Math, shape=circle, color=black, size=.05]{AJ}
\Edge[Direct, color=black, lw=1pt](AA)(AB)
\Edge[Direct, color=black, lw=1pt](AB)(AI)
\Edge[Direct, color=black, lw=1pt, bend=-20](AC)(AD)
\Edge[Direct, color=black, lw=1pt, bend=-20](AD)(AC)
\Edge[Direct, color=black, lw=1pt](AE)(AA)
\Edge[Direct, color=black, lw=1pt](AF)(AB)
\Edge[Direct, color=black, lw=1pt](AG)(AC)
\Edge[Direct, color=black, lw=1pt](AH)(AD)
\Edge[Direct, color=black, lw=1pt](AI)(AA)
\Edge[Direct, color=black, lw=1pt](AJ)(AI)
\Vertex[x=4, Math, shape=circle, color=black, size=.05, label=p_6, fontscale=1, position=above left, distance=-.16cm]{BA}
\Vertex[x=5, Math, shape=circle, color=black, size=.05, label=p_4, fontscale=1, position=above right, distance=-.16cm]{BB}
\Vertex[x=5, y=1, Math, shape=circle, color=black, size=.05, label=p_3, fontscale=1, position=below right, distance=-.16cm]{BC}
\Vertex[x=4, y=1, Math, shape=circle, color=black, size=.05, label=p_1, fontscale=1, position=below left, distance=-.16cm]{BD}
\Vertex[x=3.2, y=-.5, Math, shape=circle, color=black, size=.05]{BE}
\Vertex[x=5.8, y=-.5, Math, shape=circle, color=black, size=.05]{BF}
\Vertex[x=5.8, y=1.5, Math, shape=circle, color=black, size=.05]{BG}
\Vertex[x=3.2, y=1.5, Math, shape=circle, color=black, size=.05]{BH}
\Vertex[x=4.5, y=-.866, Math, shape=circle, color=black, size=.05, label=p_5, fontscale=1, position=left, distance=-.08cm]{BI}
\Vertex[x=4.5, y=-1.866, Math, shape=circle, color=black, size=.05]{BJ}
\Edge[Direct, color=black, lw=1pt](BA)(BB)
\Edge[Direct, color=black, lw=1pt](BB)(BI)
\Edge[Direct, color=black, lw=1pt](BC)(BD)
\Edge[Direct, color=black, lw=1pt](BD)(BA)
\Edge[Direct, color=black, lw=1pt](BE)(BA)
\Edge[Direct, color=black, lw=1pt](BF)(BB)
\Edge[Direct, color=black, lw=1pt](BG)(BC)
\Edge[Direct, color=black, lw=1pt](BH)(BD)
\Edge[Direct, color=black, lw=1pt](BI)(BA)
\Edge[Direct, color=black, lw=1pt](BJ)(BI)
\Vertex[x=8, Math, shape=circle, color=black, size=.05, label=p_2, fontscale=1, position=above left, distance=-.16cm]{CA}
\Vertex[x=9, Math, shape=circle, color=black, size=.05, label=p_9, fontscale=1, position=above right, distance=-.16cm]{CB}
\Vertex[x=9, y=1, Math, shape=circle, color=black, size=.05, label=p_5, fontscale=1, position=below right, distance=-.16cm]{CC}
\Vertex[x=8, y=1, Math, shape=circle, color=black, size=.05, label=p_7, fontscale=1, position=below left, distance=-.16cm]{CD}
\Vertex[x=7.2, y=-.5, Math, shape=circle, color=black, size=.05]{CE}
\Vertex[x=9.8, y=-.5, Math, shape=circle, color=black, size=.05]{CF}
\Vertex[x=9.8, y=1.5, Math, shape=circle, color=black, size=.05]{CG}
\Vertex[x=7.2, y=1.5, Math, shape=circle, color=black, size=.05]{CH}
\Vertex[x=8.5, y=-.866, Math, shape=circle, color=black, size=.05, label=p_8, fontscale=1, position=left, distance=-.08cm]{CI}
\Vertex[x=8.5, y=-1.866, Math, shape=circle, color=black, size=.05]{CJ}
\Edge[Direct, color=black, lw=1pt](CA)(CD)
\Edge[Direct, color=black, lw=1pt](CB)(CI)
\Edge[Direct, color=black, lw=1pt, bend=-20](CC)(CD)
\Edge[Direct, color=black, lw=1pt, bend=-20](CD)(CC)
\Edge[Direct, color=black, lw=1pt](CE)(CA)
\Edge[Direct, color=black, lw=1pt](CF)(CB)
\Edge[Direct, color=black, lw=1pt](CG)(CC)
\Edge[Direct, color=black, lw=1pt](CH)(CD)
\Edge[Direct, color=black, lw=1pt](CI)(CA)
\Edge[Direct, color=black, lw=1pt](CJ)(CI)
\Vertex[x=12, Math, shape=circle, color=black, size=.05, label=p_{12}, fontscale=1, position=above left, distance=-.16cm]{DA}
\Vertex[x=13, Math, shape=circle, color=black, size=.05, label=p_{11}, fontscale=1, position=above right, distance=-.16cm]{DB}
\Vertex[x=13, y=1, Math, shape=circle, color=black, size=.05, label=p_3, fontscale=1, position=below right, distance=-.16cm]{DC}
\Vertex[x=12, y=1, Math, shape=circle, color=black, size=.05, label=p_3, fontscale=1, position=below left, distance=-.16cm]{DD}
\Vertex[x=11.2, y=-.5, Math, shape=circle, color=black, size=.05]{DE}
\Vertex[x=13.8, y=-.5, Math, shape=circle, color=black, size=.05]{DF}
\Vertex[x=13.8, y=1.5, Math, shape=circle, color=black, size=.05]{DG}
\Vertex[x=11.2, y=1.5, Math, shape=circle, color=black, size=.05]{DH}
\Vertex[x=12.5, y=-.866, Math, shape=circle, color=black, size=.05, label=p_{10}, fontscale=1, position=left, distance=-.08cm]{DI}
\Vertex[x=12.5, y=-1.866, Math, shape=circle, color=black, size=.05]{DJ}
\Edge[Direct, color=black, lw=1pt](DA)(DB)
\Edge[Direct, color=black, lw=1pt](DB)(DI)
\Edge[Direct, color=black, lw=1pt](DC)(DA)
\Edge[Direct, color=black, lw=1pt](DD)(DA)
\Edge[Direct, color=black, lw=1pt](DE)(DA)
\Edge[Direct, color=black, lw=1pt](DF)(DB)
\Edge[Direct, color=black, lw=1pt](DG)(DC)
\Edge[Direct, color=black, lw=1pt](DH)(DD)
\Edge[Direct, color=black, lw=1pt](DI)(DA)
\Edge[Direct, color=black, lw=1pt](DJ)(DI)
\Text[x=.5, y=-2.7]{\scriptsize{$2p_1+3p_2\leq 1$\ (1)}}
\Text[x=4, y=-2.7]{\scriptsize{$p_1+p_3+p_4+p_5+p_6\geq 1\ (-2)$}}
\Text[x=8.5, y=-2.7]{\scriptsize{$p_2+p_5+p_7+p_8+p_9\geq 1\ (-3)$}}
\Text[x=12.5, y=-2.4]{\scriptsize{$2p_3+p_{10}+p_{11}+p_{12}\leq 1\ (5)$}}
\Text[x=12.5, y=-3]{\scriptsize{$p_{10}+p_{11}+p_{12}\geq 1\ (-10)$}}
\Vertex[y=-5.5, Math, shape=circle, color=black, size=.05, label=p_7, fontscale=1, position=above left, distance=-.16cm]{AA}
\Vertex[x=1, y=-5.5, Math, shape=circle, color=black, size=.05, label=p_{13}, fontscale=1, position=above right, distance=-.16cm]{AB}
\Vertex[x=1, y=-4.5, Math, shape=circle, color=black, size=.05, label=p_3, fontscale=1, position=below right, distance=-.16cm]{AC}
\Vertex[y=-4.5, Math, shape=circle, color=black, size=.05, label=p_3, fontscale=1, position=below left, distance=-.16cm]{AD}
\Vertex[x=-.8, y=-6, Math, shape=circle, color=black, size=.05]{AE}
\Vertex[x=1.8, y=-6, Math, shape=circle, color=black, size=.05]{AF}
\Vertex[x=1.8, y=-4, Math, shape=circle, color=black, size=.05]{AG}
\Vertex[x=-.8, y=-4, Math, shape=circle, color=black, size=.05]{AH}
\Vertex[x=.5, y=-6.366, Math, shape=circle, color=black, size=.05, label=p_4, fontscale=1, position=left, distance=-.08cm]{AI}
\Vertex[x=.5, y=-7.366, Math, shape=circle, color=black, size=.05]{AJ}
\Edge[Direct, color=black, lw=1pt](AA)(AB)
\Edge[Direct, color=black, lw=1pt](AB)(AI)
\Edge[Direct, color=black, lw=1pt](AC)(AB)
\Edge[Direct, color=black, lw=1pt](AD)(AA)
\Edge[Direct, color=black, lw=1pt](AE)(AA)
\Edge[Direct, color=black, lw=1pt](AF)(AB)
\Edge[Direct, color=black, lw=1pt](AG)(AC)
\Edge[Direct, color=black, lw=1pt](AH)(AD)
\Edge[Direct, color=black, lw=1pt](AI)(AA)
\Edge[Direct, color=black, lw=1pt](AJ)(AI)
\Vertex[x=4, y=-5.5, Math, shape=circle, color=black, size=.05, label=p_7, fontscale=1, position=above left, distance=-.16cm]{BA}
\Vertex[x=5, y=-5.5, Math, shape=circle, color=black, size=.05, label=p_5, fontscale=1, position=above right, distance=-.16cm]{BB}
\Vertex[x=5, y=-4.5, Math, shape=circle, color=black, size=.05, label=p_{10}, fontscale=1, position=below right, distance=-.16cm]{BC}
\Vertex[x=4, y=-4.5, Math, shape=circle, color=black, size=.05, label=p_4, fontscale=1, position=below left, distance=-.16cm]{BD}
\Vertex[x=3.2, y=-6, Math, shape=circle, color=black, size=.05]{BE}
\Vertex[x=5.8, y=-6, Math, shape=circle, color=black, size=.05]{BF}
\Vertex[x=5.8, y=-4, Math, shape=circle, color=black, size=.05]{BG}
\Vertex[x=3.2, y=-4, Math, shape=circle, color=black, size=.05]{BH}
\Vertex[x=4.5, y=-6.366, Math, shape=circle, color=black, size=.05]{BI}
\Vertex[x=4.5, y=-7.366, Math, shape=circle, color=black, size=.05]{BJ}
\Edge[Direct, color=black, lw=1pt](BA)(BD)
\Edge[Direct, color=black, lw=1pt](BB)(BA)
\Edge[Direct, color=black, lw=1pt](BC)(BB)
\Edge[Direct, color=black, lw=1pt](BD)(BC)
\Edge[Direct, color=black, lw=1pt](BE)(BA)
\Edge[Direct, color=black, lw=1pt](BF)(BB)
\Edge[Direct, color=black, lw=1pt](BG)(BC)
\Edge[Direct, color=black, lw=1pt](BH)(BD)
\Edge[Direct, color=black, lw=1pt](BI)(BA)
\Edge[Direct, color=black, lw=1pt](BJ)(BI)
\Vertex[x=8, y=-5.5, Math, shape=circle, color=black, size=.05, label=p_9, fontscale=1, position=above left, distance=-.16cm]{CA}
\Vertex[x=9, y=-5.5, Math, shape=circle, color=black, size=.05, label=p_8, fontscale=1, position=above right, distance=-.16cm]{CB}
\Vertex[x=9, y=-4.5, Math, shape=circle, color=black, size=.05, label=p_4, fontscale=1, position=below right, distance=-.16cm]{CC}
\Vertex[x=8, y=-4.5, Math, shape=circle, color=black, size=.05, label=p_{12}, fontscale=1, position=below left, distance=-.16cm]{CD}
\Vertex[x=7.2, y=-6, Math, shape=circle, color=black, size=.05]{CE}
\Vertex[x=9.8, y=-6, Math, shape=circle, color=black, size=.05]{CF}
\Vertex[x=9.8, y=-4, Math, shape=circle, color=black, size=.05]{CG}
\Vertex[x=7.2, y=-4, Math, shape=circle, color=black, size=.05]{CH}
\Vertex[x=8.5, y=-6.366, , Math, shape=circle, color=black, size=.05]{CI}
\Vertex[x=8.5, y=-7.366, , Math, shape=circle, color=black, size=.05]{CJ}
\Edge[Direct, color=black, lw=1pt](CA)(CD)
\Edge[Direct, color=black, lw=1pt](CB)(CD)
\Edge[Direct, color=black, lw=1pt, bend=-20](CC)(CD)
\Edge[Direct, color=black, lw=1pt, bend=-20](CD)(CC)
\Edge[Direct, color=black, lw=1pt](CE)(CA)
\Edge[Direct, color=black, lw=1pt](CF)(CB)
\Edge[Direct, color=black, lw=1pt](CG)(CC)
\Edge[Direct, color=black, lw=1pt](CH)(CD)
\Edge[Direct, color=black, lw=1pt](CI)(CB)
\Edge[Direct, color=black, lw=1pt](CJ)(CI)
\Vertex[x=12, y=-5.5, Math, shape=circle, color=black, size=.05, label=p_9, fontscale=1, position=above left, distance=-.16cm]{DA}
\Vertex[x=13, y=-5.5, Math, shape=circle, color=black, size=.05, label=p_8, fontscale=1, position=above right, distance=-.16cm]{DB}
\Vertex[x=13, y=-4.5, Math, shape=circle, color=black, size=.05, label=p_6, fontscale=1, position=below right, distance=-.16cm]{DC}
\Vertex[x=12, y=-4.5, Math, shape=circle, color=black, size=.05, label=p_7, fontscale=1, position=below left, distance=-.16cm]{DD}
\Vertex[x=11.2, y=-6, Math, shape=circle, color=black, size=.05]{DE}
\Vertex[x=13.8, y=-6, Math, shape=circle, color=black, size=.05]{DF}
\Vertex[x=13.8, y=-4, Math, shape=circle, color=black, size=.05]{DG}
\Vertex[x=11.2, y=-4, Math, shape=circle, color=black, size=.05]{DH}
\Vertex[x=12.5, y=-6.366, Math, shape=circle, color=black, size=.05]{DI}
\Vertex[x=12.5, y=-7.366, Math, shape=circle, color=black, size=.05]{DJ}
\Edge[Direct, color=black, lw=1pt](DA)(DD)
\Edge[Direct, color=black, lw=1pt](DB)(DC)
\Edge[Direct, color=black, lw=1pt, bend=-20](DC)(DD)
\Edge[Direct, color=black, lw=1pt, bend=-20](DD)(DC)
\Edge[Direct, color=black, lw=1pt](DE)(DA)
\Edge[Direct, color=black, lw=1pt](DF)(DB)
\Edge[Direct, color=black, lw=1pt](DG)(DC)
\Edge[Direct, color=black, lw=1pt](DH)(DD)
\Edge[Direct, color=black, lw=1pt](DI)(DB)
\Edge[Direct, color=black, lw=1pt](DJ)(DI)
\Text[x=.5, y=-8.2]{\scriptsize{$2p_3+p_4+p_7+p_{13}\geq 1\ (-4)$}}
\Text[x=4.5, y=-8.2]{\scriptsize{$p_4+p_5+p_7+p_{10}\leq 1\ (5)$}}
\Text[x=8.5, y=-8.2]{\scriptsize{$p_4+p_8+p_9+p_{12}\leq 1\ (1)$}}
\Text[x=12.5, y=-8.2]{\scriptsize{$p_6+p_7+p_8+p_9\leq 1\ (2)$}}
\Vertex[x=4, y=-11, Math, shape=circle, color=black, size=.05]{BA}
\Vertex[x=5, y=-11, Math, shape=circle, color=black, size=.05]{BB}
\Vertex[x=5, y=-10, Math, shape=circle, color=black, size=.05, label=p_{11}, fontscale=1, position=below right, distance=-.16cm]{BC}
\Vertex[x=4, y=-10, Math, shape=circle, color=black, size=.05, label=p_{14}, fontscale=1, position=below left, distance=-.16cm]{BD}
\Vertex[x=3.2, y=-11.5, Math, shape=circle, color=black, size=.05]{BE}
\Vertex[x=5.8, y=-11.5, Math, shape=circle, color=black, size=.05]{BF}
\Vertex[x=5.8, y=-9.5, Math, shape=circle, color=black, size=.05]{BG}
\Vertex[x=3.2, y=-9.5, Math, shape=circle, color=black, size=.05]{BH}
\Vertex[x=4.5, y=-11.866, Math, shape=circle, color=black, size=.05]{BI}
\Vertex[x=4.5, y=-12.866, Math, shape=circle, color=black, size=.05]{BJ}
\Edge[Direct, color=black, lw=1pt](BA)(BD)
\Edge[Direct, color=black, lw=1pt](BB)(BD)
\Edge[Direct, color=black, lw=1pt, bend=-20](BC)(BD)
\Edge[Direct, color=black, lw=1pt, bend=-20](BD)(BC)
\Edge[Direct, color=black, lw=1pt](BE)(BA)
\Edge[Direct, color=black, lw=1pt](BF)(BB)
\Edge[Direct, color=black, lw=1pt](BG)(BC)
\Edge[Direct, color=black, lw=1pt](BH)(BD)
\Edge[Direct, color=black, lw=1pt](BI)(BD)
\Edge[Direct, color=black, lw=1pt](BJ)(BI)
\Vertex[x=8, y=-11, Math, shape=circle, color=black, size=.05]{CA}
\Vertex[x=9, y=-11, Math, shape=circle, color=black, size=.05]{CB}
\Vertex[x=9, y=-10, Math, shape=circle, color=black, size=.05, label=p_{13}, fontscale=1, position=below right, distance=-.16cm]{CC}
\Vertex[x=8, y=-10, Math, shape=circle, color=black, size=.05, label=p_{12}, fontscale=1, position=below left, distance=-.16cm]{CD}
\Vertex[x=7.2, y=-11.5, Math, shape=circle, color=black, size=.05]{CE}
\Vertex[x=9.8, y=-11.5, Math, shape=circle, color=black, size=.05]{CF}
\Vertex[x=9.8, y=-9.5, Math, shape=circle, color=black, size=.05]{CG}
\Vertex[x=7.2, y=-9.5, Math, shape=circle, color=black, size=.05]{CH}
\Vertex[x=8.5, y=-11.866, , Math, shape=circle, color=black, size=.05]{CI}
\Vertex[x=8.5, y=-12.866, , Math, shape=circle, color=black, size=.05]{CJ}
\Edge[Direct, color=black, lw=1pt](CA)(CD)
\Edge[Direct, color=black, lw=1pt](CB)(CC)
\Edge[Direct, color=black, lw=1pt, bend=-20](CC)(CD)
\Edge[Direct, color=black, lw=1pt, bend=-20](CD)(CC)
\Edge[Direct, color=black, lw=1pt](CE)(CA)
\Edge[Direct, color=black, lw=1pt](CF)(CB)
\Edge[Direct, color=black, lw=1pt](CG)(CC)
\Edge[Direct, color=black, lw=1pt](CH)(CD)
\Edge[Direct, color=black, lw=1pt](CI)(CD)
\Edge[Direct, color=black, lw=1pt](CJ)(CI)
\Text[x=4.5, y=-13.4]{\scriptsize{$p_{11}+p_{14}\leq 1\ (5)$}}
\Text[x=4.5, y=-14]{\scriptsize{$p_{14}\geq 1\ (-5)$}}
\Text[x=8.5, y=-13.7]{\scriptsize{$p_{12}+p_{13}\leq 1\ (4)$}}
\end{tikzpicture}
\caption{ Proof of impossibility of the existence of an impartial randomized selection mechanism selecting vertices with indegree at least $\Delta(G)-2$ for each graph $G\in \calG_n(1)$ and $n\geq 10$. Scaling and summing the inequalities leads to $0\leq -1$, a contradiction.}
\label{fig:counterexample-plurality-n10}
\end{figure}

Suppose now for contradiction that there exists $n\in \NN$ and $\alpha \leq \alpha_n-1$ such that there is an impartial deterministic selection mechanism $f$ which is $\alpha$-additive on $\calG_n(1)$. It is clear that $f_{\text{rand}}$ is also impartial and, moreover, for every graph $G=(N,E)\in \calG_n(1)$ with $\Delta(G)\geq \alpha_n$,
\[
\sum_{v\in N:  \delta^-(v)\geq \Delta-(\alpha_n-1)}(f_{\text{rand}}(G))_v = 1,
\]
since $f(G)=v^*$ with $\delta^-(v^*)\geq \Delta-(\alpha_n-1)$.
But this contradicts the non-existence of such impartial randomized selection mechanisms.
We conclude that for every $n\in \NN$, if $f$ is an impartial deterministic selection mechanism which is $\alpha$-additive on $\calG_n(1)$, then $\alpha \geq \alpha_n$.\qedhere
\end{proof}

It is worth noting that these bounds apply to the setting without abstentions as well, with the only difference that any deterministic selection mechanism is impartial and $0$-additive on $\calG^+_2(1)$, thus the lower bound of 1 does not hold for $n=2$.

\newpage
\appendix

\section{Proof of Lemma~\ref{lem:indegree-changes}}
\label{app:indegree-changes}
Let $G$ and $v$ be as defined in the statement of the lemma.
We prove the existence of vertices as claimed by induction over $j$. 
For the base case, suppose that for every $u\in N^-(v)$ it holds 
$(\delta^{*}(u),u)\prec(\delta^-(v), v).$
From the definition of the mechanism we thus have that $i^{*}(u)>i^{*}(v)$ for every $u\in N^-(v)$ and therefore $\delta^{*}(v)=\delta^-(v)$, which contradicts the hypothesis of the lemma.

Now, let $j'\in [r-2]_0$ and assume that for every $j\in [j']_0$ there is a vertex $u_j$ such that $(u_j,v)\in E$ and 
$(\delta^{*}(u_j),u_j)\succ(\delta^-(v)-j, v)$.
Suppose that for every vertex $u\in N^-(v) \setminus \{u_1,\ldots,u_{j'}\}$ it holds
$(\delta^{*}(u),u)\prec(\delta^-(v)-(j'+1), v)$.
The expression $\delta^-(v)-(j'+1)$ is exactly the indegree of $v$ after deleting the incoming edges from $u_0,\ldots,u_{j'}$, thus from the definition of the mechanism we have that $i^{*}(u)>i^{*}(v)$ for every $u\in N^-(v) \setminus \{u_0,\ldots,u_{j'}\}$.
This yields
\[
\delta^{*}(v)\geq \delta^-(v)-(j'+1)\geq \delta^-(v)-(r-1),
\]
implying $r\geq \delta^-(v)-\delta^{*}(v)+1$.
In the case $v>z$, this is equivalent to $d\leq \delta^{*}(v)$, but these two inequalities contradict $(d, z)\succeq (\delta^{*}(v), v)$.
If $v<z$, on the other hand, it is equivalent to $d\leq \delta^{*}(v)-1$, which is again a contradiction. This concludes the existence of vertices $u_0,\ldots,u_{r-1}$ as in the statement of the lemma.

To prove the last claim, we fix $(d,z)=(\delta^{*}(v),v)$. We know that taking $r=\delta^-(v)-\delta^{*}(v)$ there are vertices such that for each $j\in [r-1]_0$, $(u_{j},v)\in E$ and $(\delta^{*}(u_{j}),u_j)\succ(\delta^-(v)-j, v)$. 
Suppose that there is a vertex $u\in N^-(v)\setminus \{u_0,\ldots,u_{r-1}\}$ with $(\delta^{*}(u),u)\succ(\delta^{*}(v),v)$.
It is clear that 
\[
|\{u\in N^-(v):  (\delta^{*}(u),u)\succ(\delta^{*}(v),v)\}| \geq r+1,
\]
thus by expressions~\eqref{eq:vertex-descent-characterization-iterations} and~\eqref{eq:equivalence-iterations-indegrees} we conclude that $\delta^-(v)-\delta^{*}(v)\geq r+1$, which is a contradiction.\qed

\section{Proof of Lemma~\ref{lem:indegree-changes-two-graphs}}
\label{app:indegree-changes-two-graphs}

Since $E_1\setminus(\{u\}\times N)=E_2\setminus (\{u\}\times N)$, it is clear that
\begin{equation}
\delta^-(v,G_2)\geq \delta^-(v,G_1)-1 = d+r+(\delta^-(v,G_1)-\delta^{*}(v,G_1))-1 - \chi(v>z). \label{v-indegree-changes}
\end{equation}
In the following, we denote $r'=r+(\delta^-(v,G_1)-\delta^{*}(v,G_1))$ for ease of notation.
If $\delta^-(v,G_2)= \delta^-(v,G_1)$, then $r'= \delta^-(v,G_2)-d+\chi(v>z)$. 
In addition, 
\[
    (\delta^-(v,G_2),v)=(\delta^-(v,G_1),v)\succ(d, z)\succeq (\delta^{*}(v,G_2), v),
\]
thus from \autoref{lem:indegree-changes} there are vertices $u'_0,\ldots,u'_{r'-1}$ such that for every $j\in [r'-1]_0$ we have that $(u'_{j},v)\in E_1\cap E_2$---since $\delta^-(v,G_2)= \delta^-(v,G_1)$ we have $E_1\cap N^-(v)=E_2\cap N^-(v)$---and $(\delta^{*}(u'_{j},G_2),u'_{j})\succ (\delta^-(v,G_2)-j,v)$.
From expression~\eqref{eq:vertex-descent-characterization-indegrees}, we know that 
\[
|\{u \in \{u'_0,\ldots,u'_{r'-1}\}:  (\delta^{*}(u,G_1),u)\succ(\delta^{*}(v,G_1),v)\}|\leq \delta^-(v,G_1)-\delta^{*}(v,G_1),
\]
thus it is possible to take $\{u_0,\ldots,u_{r-1}\}\subseteq \{u'_0,\ldots,u'_{r'-1}\}$ with the property that the inequality $(\delta^{*}(u_j,G_1),u_j)\prec(\delta^{*}(v,G_1),v)$ also holds for $j\in [r-1]_0$. 
Moreover,
\begin{equation}
    (\delta^{*}(u_j,G_2),u_j)\succ(\delta^-(v,G_2)-j-((\delta^-(v,G_1)-\delta^{*}(v,G_1))),v)=(\delta^{*}(v,G_1)-j,v).\label{eq:inneighbors-indegree}
\end{equation}
Relabeling these vertices so that $(\delta^{*}(u_0,G_2),u_0)\succ\dots \succ (\delta^{*}(u_{r-1},G_2),u_{r-1})$ it is clear that the previous inequalities still hold, and in particular $$(\delta^{*}(u_0,G_2),u_0)\succ (\delta^{*}(v,G_1),v) \succ (\delta^{*}(u_0,G_1),u_0).$$
We conclude that Condition~\ref{indegree-changes-two-graphs-alt2} holds.

Similarly, if $\delta^-(v,G_2)= \delta^-(v,G_1)+1$, then $r'+1= \delta^-(v,G_2)-d+\chi(v>z)$. 
As before, 
$(\delta^-(v,G_2),v)\succ(d, z)\succeq (\delta^{*}(v,G_2), v)$,
thus from \autoref{lem:indegree-changes} there are vertices $u'_0,\ldots,u'_{r'}$ such that for every $j\in [r']_0$ we have that $(u'_{j},v)\in E_2$ and $$(\delta^{*}(u'_{j},G_2),u'_{j})\succ (\delta^-(v,G_2)-j,v).$$ 
But $|(E_2\cap N^-(v))\setminus (E_1\cap N^-(v))|= 1$, thus at least $r'$ of these vertices $u'_{j}$ are such that $(u'_{j},v)\in E_1\cap E_2$. 
As before, from expression~\eqref{eq:vertex-descent-characterization-indegrees} 
we conclude that it is possible to take $\{u_0,\ldots,u_{r-1}\}\subseteq \{u'_0,\ldots,u'_{r'}\}$ such that $(u_j,v)\in E_1\cap E_2$ and $(\delta^{*}(u_j,G_1),u_j)\prec(\delta^{*}(v,G_1),v)$ for every $j\in [r-1]_0$. 
Since expression \eqref{eq:inneighbors-indegree} still holds, relabeling the vertices such that $(\delta^{*}(u_0,G_2),u_0)\succ\dots \succ (\delta^{*}(u_{r-1},G_2),u_{r-1})$ we conclude the lemma with Condition~\ref{indegree-changes-two-graphs-alt2} as well. An example for each of the cases addressed so far is included in \autoref{fig:lemma-2.1}.

\begin{figure}[t]
\centering
\begin{tikzpicture}[scale=0.88]

\Vertex[y=1.8, Math, shape=circle, color=black, , size=.05, label=\tilde{v}, fontscale=1, position=above left, distance=-.17cm]{A}
\Vertex[x=.7, y=1.2, Math, shape=circle, color=black, , size=.05]{B}
\Vertex[x=1.4, y=1.2, Math, shape=circle, color=black, size=.05, label=v, fontscale=1, position=above, distance=-.08cm]{C}
\Vertex[x=1.4, y=.6, Math, shape=circle, color=black, size=.05]{D}
\Vertex[x=2.1, y=.6, Math, shape=circle, color=black, size=.05, label=u_0, fontscale=1, position=above right, distance=-.17cm]{E}

\Edge[Direct, color=black, lw=1pt](B)(C)
\Edge[Direct, color=black, lw=1pt](E)(C)
\Edge[Direct, style=dashed, color=black, lw=1pt, bend=-30](C)(D)

\draw[] (-.2,-.2) -- (2.3,-.2);
\draw[] (-.2,.4) -- (2.3,.4);
\draw[] (-.2,1) -- (2.3,1);
\draw[] (-.2,1.6) -- (2.3,1.6);
\draw[] (-.2,2.2) -- (2.3,2.2);

\Vertex[x=4, y=1.8, Math, shape=circle, color=black, , size=.05, label=\tilde{v}, fontscale=1, position=above left, distance=-.17cm]{F}
\Vertex[x=4.7, y=1.2, Math, shape=circle, color=black, , size=.05]{G}
\Vertex[x=5.4, y=1.8, Math, shape=circle, color=black, size=.05, label=v, fontscale=1, position=above, distance=-.08cm]{H}
\Vertex[x=5.4, Math, shape=circle, color=black, size=.05]{I}
\Vertex[x=6.1, y=1.2, Math, shape=circle, color=black, size=.05, label=u_0, fontscale=1, position=above right, distance=-.17cm]{J}

\Edge[Direct, color=black, lw=1pt](F)(H)
\Edge[Direct, color=black, lw=1pt](G)(H)
\Edge[Direct, color=black, lw=1pt](J)(H)
\Edge[Direct, style=dashed, color=black, lw=1pt](H)(I)

\draw[] (3.8,-.2) -- (6.3,-.2);
\draw[] (3.8,.4) -- (6.3,.4);
\draw[] (3.8,1) -- (6.3,1);
\draw[] (3.8,1.6) -- (6.3,1.6);
\draw[] (3.8,2.2) -- (6.3,2.2);

\Text[x=1.05, y=-.6]{$G_1$}
\Text[x=5.05, y=-.6]{$G_2$}

\Vertex[x=9.7, y=1.8, Math, shape=circle, color=black, , size=.05]{A}
\Vertex[x=10.4, y=1.8, Math, shape=circle, color=black, , size=.05, label=v, fontscale=1, position=above, distance=-.08cm]{B}
\Vertex[x=10.4, y=1.2, Math, shape=circle, color=black, size=.05]{C}
\Vertex[x=9, y=.6, Math, shape=circle, color=black, size=.05, label=u_1, fontscale=1, position=above left, distance=-.17cm]{D}
\Vertex[x=11.1, y=.6, Math, shape=circle, color=black, size=.05, label=u_0, fontscale=1, position=above right, distance=-.17cm]{E}

\Edge[Direct, color=black, lw=1pt](A)(B)
\Edge[Direct, color=black, lw=1pt](D)(B)
\Edge[Direct, color=black, lw=1pt](E)(B)
\Edge[Direct, style=dashed, color=black, lw=1pt, bend=-30](B)(C)

\draw[] (8.8,-.2) -- (11.3,-.2);
\draw[] (8.8,.4) -- (11.3,.4);
\draw[] (8.8,1) -- (11.3,1);
\draw[] (8.8,1.6) -- (11.3,1.6);
\draw[] (8.8,2.2) -- (11.3,2.2);

\Vertex[x=13.7, y=1.8, Math, shape=circle, color=black, , size=.05]{F}
\Vertex[x=14.4, y=1.8, Math, shape=circle, color=black, , size=.05, label=v, fontscale=1, position=above, distance=-.08cm]{G}
\Vertex[x=14.4, Math, shape=circle, color=black, size=.05]{H}
\Vertex[x=13, y=.6, Math, shape=circle, color=black, size=.05, label=u_1, fontscale=1, position=above left, distance=-.17cm]{I}
\Vertex[x=15.1, y=1.8, Math, shape=circle, color=black, size=.05, label=u_0, fontscale=1, position=above right, distance=-.17cm]{J}

\Edge[Direct, color=black, lw=1pt](F)(G)
\Edge[Direct, color=black, lw=1pt](I)(G)
\Edge[Direct, color=black, lw=1pt](J)(G)
\Edge[Direct, style=dashed, color=black, lw=1pt](G)(H)

\draw[] (12.8,-.2) -- (15.3,-.2);
\draw[] (12.8,.4) -- (15.3,.4);
\draw[] (12.8,1) -- (15.3,1);
\draw[] (12.8,1.6) -- (15.3,1.6);
\draw[] (12.8,2.2) -- (15.3,2.2);

\Text[x=10.05, y=-.6]{$G_1$}
\Text[x=14.05, y=-.6]{$G_2$}

\end{tikzpicture}
\caption{Illustration of \autoref{lem:indegree-changes-two-graphs} for the case $\delta^-(v,G_2)=\delta^-(v,G_1)+1$, shown on the left, and for the case $\delta^-(v,G_2)=\delta^-(v,G_1)$, shown on the right. In contrast to \autoref{fig:example-mechanism} only the initial iteration $i=0$ is shown, and the overall drop in the indegree of~$v$ is illustrated by a dashed arrow. Observe that $(\delta^{*}(u_0,G_2), u_0)\succ (\delta^{*}(v,G_1), v) \succ (\delta^{*}(u_0,G_1), u_0)$ and $(\delta^{*}(u_1,G_2), u_1)\succ (\delta^{*}(v,G_1)-1, v),\ (\delta^{*}(u_1,G_1), u_1)\prec (\delta^{*}(v,G_1), v)$.}
\label{fig:lemma-2.1}
\end{figure}

In what follows, we suppose that the first inequality in Condition~\eqref{v-indegree-changes} is an equality, so $(\tilde{v},v)\in E_1\setminus E_2$ and $(E_2\cap N^-(v))\subset (E_1\cap N^-(v))$. 
In this case, $r'-1= \delta^-(v,G_2)-d+\chi(v>z)$. 
Suppose first that $(\delta^-(v,G_1)-1,v)\succ(d, z)$, thus
\[
    (\delta^-(v,G_2),v)=(\delta^-(v,G_1)-1,v)\succ(d, z)\succeq (\delta^{*}(v,G_2), v).
\]
In this case, from \autoref{lem:indegree-changes} there are vertices $u'_0,\ldots,u'_{r'-2}$ such that for all $j\in [r'-2]_0$ we have that $(u'_{j},v)\in E_1\cap E_2$ and $(\delta^{*}(u'_{j},G_2),u'_{j})\succ (\delta^-(v,G_2)-j,v)$.
If $(\delta^{*}(\tilde{v},G_1), \tilde{v})\succ(\delta^{*}(v,G_1),v)$, from expression~\eqref{eq:vertex-descent-characterization-indegrees} we have that 
\[
|\{u \in \{u'_0,\ldots,u'_{r'-2}\}:  (\delta^{*}(u,G_1),u)\succ(\delta^{*}(v,G_1),v)\}|\leq \delta^-(v,G_1)-\delta^{*}(v,G_1)-1.
\]
Therefore, we conclude once again that it is possible to take $\{u_0,\ldots,u_{r-1}\}\subseteq \{u'_0,\ldots,u'_{r'-2}\}$ such that $(\delta^{*}(u_j,G_1),u_j)\prec(\delta^{*}(v,G_1),v)$ for every $j\in [r-1]_0$.
We also have that
\[
    (\delta^{*}(u_j,G_2),u_j)\succ (\delta^-(v,G_2)-j-(\delta^-(v,G_1)-\delta^{*}(v,G_1)-1),v)=(\delta^{*}(v,G_1)-j,v),
\]
so after relabeling the vertices with $(\delta^{*}(u_0,G_2),u_0)\succ\dots \succ (\delta^{*}(u_{r-1},G_2),u_{r-1})$ Condition~\ref{indegree-changes-two-graphs-alt2} follows again.
On the other hand, if $(\delta^{*}(\tilde{v},G_1),\tilde{v})\prec (\delta^{*}(v,G_1),v)$, from expression~\eqref{eq:vertex-descent-characterization-indegrees}
\[
|\{u \in \{u'_0,\ldots,u'_{r'-2}\}:  (\delta^{*}(u,G_1),u)\succ(\delta^{*}(v,G_1),v)\}|\leq \delta^-(v,G_1)-\delta^{*}(v,G_1).
\]
Thus, it is possible to take $\{u_1,\ldots,u_{r-1}\}\subseteq \{u'_0,\ldots,u'_{r'-2}\}$ such that both $(\delta^{*}(u_j,G_2),u_j)\succ(\delta^{*}(v,G_1)-j,v)$ and $(\delta^{*}(u_j,G_1),u_j)\prec(\delta^{*}(v,G_1),v)$ hold for every $j\in [r-1]$.
In addition, whenever $(\delta^{*}(v,G_1),v)\prec(\delta^-(\tilde{v},G_1),\tilde{v})$ we know from \autoref{lem:indegree-changes} that, taking
$\tilde{r}'= \delta^-(\tilde{v},G_1)-\delta^{*}(\tilde{v},G_1)+\chi(\tilde{v}>v)$, there exist vertices $\tilde{u}_0,\ldots,\tilde{u}_{\tilde{r}'-1}$, different than $\tilde{v}$, such that for every $j\in [\tilde{r}'-1]_0$ it holds $(\tilde{u}_{j},\tilde{v})\in E$ and $(\delta^{*}(\tilde{u}_{j},G_1),\tilde{u}_{j})\succ (\delta^-(\tilde{v},G_1)-j,\tilde{v})$. Taking the first $\tilde{r}=\tilde{r}'-(\delta^{*}(v,G_1)-\delta^{*}(\tilde{v},G_1))$ such vertices and $u_0=\tilde{v}$, Condition~\ref{indegree-changes-two-graphs-alt1} follows.

Finally, if $\delta^-(v,G_2)=\delta^-(v,G_1)-1$ and $(\delta^-(v,G_1)-1,v)\prec(d, z)$, we obtain from the inequality $(d, z)\prec(\delta^{*}(v,G_1),v)$ that $\delta^{*}(v,G_1)=\delta^-(v,G_1)$ and $r=1$. Therefore, we must have $(\delta^{*}(\tilde{v},G_1),\tilde{v})\prec (\delta^{*}(v,G_1),v)$, and the same analysis above leads to the existence of vertices $\tilde{u}_0,\ldots,\tilde{u}_{\tilde{r}-1}$, different than $\tilde{v}$, such that for every $j\in [\tilde{r}-1]_0$ it holds $(\tilde{u}_{j},\tilde{v})\in E$ and $(\delta^{*}(\tilde{u}_{j},G_1),\tilde{u}_{j})\succ (\delta^-(\tilde{v},G_1)-j,\tilde{v})$. Condition~\ref{indegree-changes-two-graphs-alt1} follows in this case as well.
An example is shown in \autoref{fig:lemma-2.2}.\qed

\begin{figure}[t]
\centering
\begin{tikzpicture}[scale=0.88]

\Vertex[y=2.1, Math, shape=circle, color=black, size=.05, label=\tilde{u}_0, fontscale=1, position=above, distance=-.08cm]{A}
\Vertex[x=1.4, y=2.1, Math, shape=circle, color=black, size=.05, label=\tilde{v}, fontscale=1, position=above, distance=-.08cm]{B}
\Vertex[x=2.1, y=2.1, Math, shape=circle, color=black, size=.05, label=\tilde{u}_1, fontscale=1, position=above, distance=-.08cm]{C}
\Vertex[x=.7, y=.7, Math, shape=circle, color=black, size=.05, label=v, fontscale=1, position=above left, distance=-.16cm]{D}
\Vertex[x=1.4, y=.7, Math, shape=circle, color=black, size=.05]{E}

\Edge[Direct, color=black, lw=1pt](A)(B)
\Edge[Direct, color=black, lw=1pt](C)(B)
\Edge[Direct, color=black, lw=1pt](B)(D)
\Edge[Direct, style=dashed, color=black, lw=1pt](B)(E)

\draw[] (-.2,-.2) -- (2.3,-.2);
\draw[] (-.2,.5) -- (2.3,.5);
\draw[] (-.2,1.2) -- (2.3,1.2);
\draw[] (-.2,1.9) -- (2.3,1.9);
\draw[] (-.2,2.6) -- (2.3,2.6);

\Vertex[x=4, y=2.1, Math, shape=circle, color=black, size=.05, label=\tilde{u}_0, fontscale=1, position=above, distance=-.08cm]{F}
\Vertex[x=5.4, y=2.1, Math, shape=circle, color=black, size=.05, label=\tilde{v}, fontscale=1, position=above, distance=-.08cm]{G}
\Vertex[x=6.1, y=2.1, Math, shape=circle, color=black, size=.05, label=\tilde{u}_1, fontscale=1, position=above, distance=-.08cm]{H}
\Vertex[x=4.7, Math, shape=circle, color=black, size=.05, label=v, fontscale=1, position=above left, distance=-.16cm]{I}
\Vertex[x=5.4, y=.7, Math, shape=circle, color=black, size=.05]{J}

\Edge[Direct, color=black, lw=1pt](F)(G)
\Edge[Direct, color=black, lw=1pt](H)(G)
\Edge[Direct, style=dashed, color=black, lw=1pt](G)(J)

\draw[] (3.8,-.2) -- (6.3,-.2);
\draw[] (3.8,.5) -- (6.3,.5);
\draw[] (3.8,1.2) -- (6.3,1.2);
\draw[] (3.8,1.9) -- (6.3,1.9);
\draw[] (3.8,2.6) -- (6.3,2.6);

\Text[x=1.05, y=-.6]{$G_1$}
\Text[x=5.05, y=-.6]{$G_2$}

\end{tikzpicture}
\caption{Illustration of \autoref{lem:indegree-changes-two-graphs} for the case where $\delta^-(v,G_2)=\delta^-(v,G_1)-1$ with $(\delta^-(v,G_1), v) \prec (\delta^-(\tilde{v},G_1), \tilde{v})$. Only the initial iteration $i=0$ is shown, and the overall drop in the indegree of~$\tilde{v}$ is illustrated by a dashed arrow. Observe that $(\delta^{*}(\tilde{u}_j,G_1),\tilde{u}_{j})\succ (\delta^-(\tilde{v},G_1)-j,\tilde{v})$ for $j\in\{0,1\}$.}
\label{fig:lemma-2.2}
\end{figure}

\section{Proof of Lemma~\ref{lem:odd-graphs}}
\label{app:odd-graphs}

Let $\calG'_n=\{G\in \calG^T_n:  (s(G))_i = (s(G))_{r(G)+1-i} \text{ for every } i\in [r(G)]\}$ be the set of graphs $G\in \calG^T_n$ such that the tuple $s(G)$ is symmetric. 
From the definition of $\calG^T_n$, it is clear that for every graph $G$ with a non-symmetric tuple $s(G)$, \ie for every graph $G\in \calG^T_n \setminus \calG'_n$, there exists a unique $H\in \calG^T_n \setminus (\calG'_n\cup \{G\})$ such that $r(G)=r(H)=:r$ and $(s(G))_i = (s(H))_{r+1-i} \text{ for every } i\in [r]$. 
Moreover, $\lambda_G=\lambda_H$ for such a pair of graphs. 
This implies that 
\[
    \sum_{G\in \calG^T_n\setminus \calG'_n}\lambda_G
\]
is even.
In what follows we show how to conclude the lemma using the following claim: for every $G\in \calG'_n$ with $r(G)\geq 2$, $\lambda_G$ is even. Observe that
\[
    \sum_{G\in \calG^T_n}\lambda_G = \sum_{G\in \calG^T_n\setminus \calG'_n}\lambda_G + \sum_{G\in \calG'_n:  r(G)\geq 2}\lambda_G + \sum_{G\in \calG'_n:  r(G)=1}\lambda_G.
\]
If the claim is true, we have that the first two sums on the right-hand side are even. The third sum only contains one term, namely $\lambda_G$ for the complete graph $G$, for which $s(G)=(n)$ and thus $\lambda_G=n!/n!=1$. 
Therefore, $\sum_{G\in \calG^T_n}\lambda_G$ is the sum of an even term plus 1, and we conclude the result.

We now prove the claim. Let $G\in \calG'_n$ with $r(G)\geq 2$. 
The multiplicity of the prime factor 2 in the numerator of $\lambda_G$, due to Legendre's formula, is simply
\[
    \sum_{\ell=1}^{\infty}{\left\lfloor\frac{n}{2^{\ell}}\right\rfloor},
\]
while its multiplicity in the denominator is
\[
    \sum_{i=1}^{r(G)}\sum_{\ell=1}^{\infty}{\left\lfloor\frac{(s(G))_i}{2^{\ell}}\right\rfloor}.
\]
Therefore, we need to prove that this last term is strictly lower than the former.
It is easy to see that for every $\ell \in \NN,\ \ell\geq 1$ we have $\sum_{i=1}^{r(G)}{\left\lfloor\frac{(s(G))_i}{2^{\ell}}\right\rfloor}\leq \left\lfloor\frac{n}{2^{\ell}}\right\rfloor$, since without the floor functions we would have the equality, and the fractional parts of the terms $(s(G))_i/2^{\ell}$ must add up ---when summing over $i$--- to at least the fractional part of $n/2^{\ell}$. 
We now show that there exists some $\ell'$ for which this inequality is strict, by distinguishing two cases.
If $(s(G))_i\leq n/2$ for every $i\in [r(G)]$, we let $\ell'\in \NN$ such that $2^{\ell'}\leq n<2^{\ell'+1}$, and then the following holds:
\[
    \sum_{i=1}^{r(G)}\left\lfloor \frac{(s(G))_i}{2^{\ell'}}\right\rfloor 
\leq \sum_{i=1}^{r(G)}\left\lfloor  \frac{n/2}{2^{\ell'}}\right\rfloor 
= \sum_{i=1}^{r(G)}\left\lfloor  \frac{n}{2^{\ell'+1}}\right\rfloor 
= 0
< 1
= \left\lfloor\frac{n}{2^{\ell'}}\right\rfloor.
\]
On the other hand, if $(s(G))_{i'}> n/2$ for some $i'\in [r(G)]$, from the symmetry of $s(G)$ we have that $r(G)$ is odd and, moreover, $i'=(r(G)+1)/2$.
We now define $\ell'\in \NN$ such that $2^{\ell'}\leq n-(s(G))_{i'}<2^{\ell'+1}$, which implies $(s(G))_i \leq \frac{n-(s(G))_{i'}}{2} < 2^{\ell'}$ for every $i\not=i'$. Therefore, the following holds:
\[
    \sum_{i=1}^{r(G)}\left\lfloor \frac{(s(G))_i}{2^{\ell'}}\right\rfloor 
= \left\lfloor \frac{(s(G))_{i'}}{2^{\ell'}}\right\rfloor 
\leq \left\lfloor \frac{n-2^{\ell'}}{2^{\ell'}}\right\rfloor 
= \left\lfloor \frac{n}{2^{\ell'}}-1 \right\rfloor
< \left\lfloor \frac{n}{2^{\ell'}} \right\rfloor.
\]
In either case, we obtain that
\[
    \sum_{i=1}^{r(G)}\sum_{\ell=1}^{\infty}{\left\lfloor\frac{(s(G))_i}{2^{\ell}}\right\rfloor} < \sum_{\ell=1}^{\infty}{\left\lfloor\frac{n}{2^{\ell}}\right\rfloor},
\]
and thus $\lambda_G$ is even, which concludes the proof of the claim and the proof of the lemma.\qed

\section{Proof of Lemma~\ref{lem:symmetry-axiom}}
\label{app:symmetry-axiom}

To see that $f_\text{s}$ is impartial, let $G=(N,E),\ G'=(N,E')\in\calG_n$ and $v\in N$ such that $E\setminus (\{v\}\times N)=E'\setminus (\{v\}\times N)$. 
Since $f$ is impartial,
\[
    (f_{\text{s}}(G))_v = \frac{1}{n!} \sum_{\pi\in \calS_n}(f(G_{\pi}))_{\pi_v} = \frac{1}{n!} \sum_{\pi\in \calS_n}(f(G'_{\pi}))_{\pi_v}=(f_{\text{s}}(G'))_v,
\]
and thus $f_{\text{s}}$ is impartial.

To prove that $f_{\text{s}}$ satisfies weak unanimity, let $G=(N,E)\in \calG_n$ and $v\in N$ with $\delta^-(v)=n-1$. 
Since  $f$ satisfies weak unanimity,
\begin{align*}
    \sum_{u\in N:  \delta^-(u)\geq 1}(f_{\text{s}}(G))_u & = \sum_{u\in N:  \delta^-(u)\geq 1} \frac{1}{n!} \sum_{\pi\in \calS_n}(f(G_{\pi}))_{\pi_u} = \frac{1}{n!} \sum_{\pi\in \calS_n} \sum_{u\in N:  \delta^-(u)\geq 1}(f(G_{\pi}))_{\pi_u}\\
    & \geq  \frac{1}{n!} \sum_{\pi\in \calS_n}1 = 1.
\end{align*}
This concludes the proof of the lemma.\qed

\section*{Acknowledgements}

Research was supported by the Deutsche Forschungsgemeinschaft under project number 431465007 and by the Engineering and Physical Sciences Research Council under grant EP/T015187/1.

\bibliographystyle{abbrvnat}
\bibliography{bibliography}

\begin{thebibliography}{22}
\providecommand{\natexlab}[1]{#1}
\providecommand{\url}[1]{\texttt{#1}}
\expandafter\ifx\csname urlstyle\endcsname\relax
  \providecommand{\doi}[1]{doi: #1}\else
  \providecommand{\doi}{doi: \begingroup \urlstyle{rm}\Url}\fi

\bibitem[Alon et~al.(2011)Alon, Fischer, Procaccia, and
  Tennenholtz]{alon2011sum}
N.~Alon, F.~Fischer, A.~Procaccia, and M.~Tennenholtz.
\newblock Sum of us: Strategyproof selection from the selectors.
\newblock In \emph{Proceedings of the 13th Conference on Theoretical Aspects of
  Rationality and Knowledge}, pages 101--110, 2011.

\bibitem[Aziz et~al.(2019)Aziz, Lev, Mattei, Rosenschein, and
  Walsh]{aziz2019strategyproof}
H.~Aziz, O.~Lev, N.~Mattei, J.~S. Rosenschein, and T.~Walsh.
\newblock Strategyproof peer selection using randomization, partitioning, and
  apportionment.
\newblock \emph{Artificial Intelligence}, 275:\penalty0 295--309, 2019.

\bibitem[Babichenko et~al.(2020)Babichenko, Dean, and
  Tennenholtz]{babichenko2020incentive}
Y.~Babichenko, O.~Dean, and M.~Tennenholtz.
\newblock Incentive-compatible selection mechanisms for forests.
\newblock In \emph{Proceedings of the 21st ACM Conference on Economics and
  Computation}, pages 111--131, 2020.

\bibitem[Bjelde et~al.(2017)Bjelde, Fischer, and Klimm]{bjelde2017impartial}
A.~Bjelde, F.~Fischer, and M.~Klimm.
\newblock Impartial selection and the power of up to two choices.
\newblock \emph{ACM Transactions on Economics and Computation}, 5\penalty0
  (4):\penalty0 1--20, 2017.

\bibitem[Bousquet et~al.(2014)Bousquet, Norin, and Vetta]{bousquet2014near}
N.~Bousquet, S.~Norin, and A.~Vetta.
\newblock A near-optimal mechanism for impartial selection.
\newblock In \emph{Proceedings of the 10th International Conference on Web and
  Internet Economics}, pages 133--146. Springer, 2014.

\bibitem[Caragiannis et~al.(2022)Caragiannis, Christodoulou, and
  Protopapas]{caragiannis2022impartial}
I.~Caragiannis, G.~Christodoulou, and N.~Protopapas.
\newblock Impartial selection with additive approximation guarantees.
\newblock \emph{Theory of Computing Systems}, 66\penalty0 (3):\penalty0
  721--742, 2022.

\bibitem[Caragiannis et~al.(2023)Caragiannis, Christodoulou, and
  Protopapas]{caragiannis2023impartial}
I.~Caragiannis, G.~Christodoulou, and N.~Protopapas.
\newblock Impartial selection with prior information.
\newblock In \emph{Proceedings of the ACM Web Conference}, pages 3614--3624,
  2023.

\bibitem[De~Clippel et~al.(2008)De~Clippel, Moulin, and
  Tideman]{de2008impartial}
G.~De~Clippel, H.~Moulin, and N.~Tideman.
\newblock Impartial division of a dollar.
\newblock \emph{Journal of Economic Theory}, 139\penalty0 (1):\penalty0
  176--191, 2008.

\bibitem[Diagana and Ma{\"\i}ga(2017)]{diagana2017some}
T.~Diagana and H.~Ma{\"\i}ga.
\newblock Some new identities and congruences for {Fubini} numbers.
\newblock \emph{Journal of Number Theory}, 173:\penalty0 547--569, 2017.

\bibitem[Fischer and Klimm(2015)]{fischer2015optimal}
F.~Fischer and M.~Klimm.
\newblock Optimal impartial selection.
\newblock \emph{SIAM Journal on Computing}, 44\penalty0 (5):\penalty0
  1263--1285, 2015.

\bibitem[Holzman and Moulin(2013)]{holzman2013impartial}
R.~Holzman and H.~Moulin.
\newblock Impartial nominations for a prize.
\newblock \emph{Econometrica}, 81\penalty0 (1):\penalty0 173--196, 2013.

\bibitem[Insko et~al.(2017)Insko, Johnson, and Sullivan]{insko2017ordered}
E.~Insko, K.~Johnson, and S.~Sullivan.
\newblock An ordered partition expansion of the determinant.
\newblock \emph{Integers}, 17:\penalty0 A36, 2017.

\bibitem[Kahng et~al.(2018)Kahng, Kotturi, Kulkarni, Kurokawa, and
  Procaccia]{kahng2018ranking}
A.~Kahng, Y.~Kotturi, C.~Kulkarni, D.~Kurokawa, and A.~D. Procaccia.
\newblock Ranking wily people who rank each other.
\newblock In \emph{Proceedings of the 32nd AAAI Conference on Artificial
  Intelligence}, 2018.

\bibitem[Kurokawa et~al.(2015)Kurokawa, Lev, Morgenstern, and
  Procaccia]{kurokawa2015impartial}
D.~Kurokawa, O.~Lev, J.~Morgenstern, and A.~D. Procaccia.
\newblock Impartial peer review.
\newblock In \emph{Proceedings of the 24th International Joint Conference on
  Artificial Intelligence}, 2015.

\bibitem[Mackenzie(2015)]{mackenzie2015symmetry}
A.~Mackenzie.
\newblock Symmetry and impartial lotteries.
\newblock \emph{Games and Economic Behavior}, 94:\penalty0 15--28, 2015.

\bibitem[Mackenzie(2020)]{MacK20a}
A.~Mackenzie.
\newblock An axiomatic analysis of the papal conclave.
\newblock \emph{Economic Theory}, 69:\penalty0 713--743, 2020.

\bibitem[Mattei et~al.(2021)Mattei, Turrini, and
  Zhydkov]{mattei2021peernomination}
N.~Mattei, P.~Turrini, and S.~Zhydkov.
\newblock Peer{N}omination: {R}elaxing exactness for increased accuracy in peer
  selection.
\newblock In \emph{Proceedings of the 29th International Joint Conferences on
  Artificial Intelligence}, pages 393--399, 2021.

\bibitem[Niemeyer and Preusser(2023)]{niemeyer2023simple}
A.~Niemeyer and J.~Preusser.
\newblock Simple allocation with correlated types.
\newblock Working Paper, 2023.

\bibitem[Olckers and Walsh(2022)]{olckers2022manipulation}
M.~Olckers and T.~Walsh.
\newblock Manipulation and peer mechanisms: A survey.
\newblock \emph{arXiv preprint arXiv:2210.01984}, 2022.

\bibitem[Tamura and Ohseto(2014)]{TaOh14a}
S.~Tamura and S.~Ohseto.
\newblock Impartial nomination correspondences.
\newblock \emph{Social Choice and Welfare}, 43\penalty0 (1):\penalty0 47--54,
  2014.

\bibitem[Xu et~al.(2019)Xu, Zhao, Shi, and Shah]{xu2019strategyproof}
Y.~Xu, H.~Zhao, X.~Shi, and N.~B. Shah.
\newblock On strategyproof conference peer review.
\newblock In \emph{Proceedings of the 28th International Joint Conference on
  Artificial Intelligence}, pages 616--622, 2019.

\bibitem[Zhang et~al.(2021)Zhang, Zhang, and Zhao]{zhang2021incentive}
X.~Zhang, Y.~Zhang, and D.~Zhao.
\newblock Incentive compatible mechanism for influential agent selection.
\newblock In \emph{Proceedings of the 14th International Symposium on
  Algorithmic Game Theory}, pages 79--93. Springer, 2021.

\end{thebibliography}

\end{document}